\newcommand{\remove}[1]{}
\definecolor{blue}{rgb}{0.274,0.392,0.666}
\definecolor{darkblue}{rgb}{0.063,0.306,0.545}
\definecolor{red}{rgb}{1,0.3,0.3}
\definecolor{greennn}{rgb}{0,0.588,0.509}
\Crefname{observation}{Observation}{Observations}
\Crefname{algorithm}{Algorithm}{Algorithms}
\Crefname{section}{Section}{Sections}
\Crefname{lemma}{Lemma}{Lemmata}
\Crefname{claim}{Claim}{Claims}
\Crefname{figure}{Fig.}{Figs.}
\Crefname{figure}{Fig.}{Figs.}
\Crefname{property}{Property}{Properties}
\Crefname{enumi}{Condition}{Conditions.}
\Crefname{myclaim}{Claim}{Claims}
\newcommand{\skel}{sk}
\newcommand{\sk}{sk}
\newtheorem{myclaim}{Claim}
\let\doendproof\endproof
\renewcommand\endproof{~\hfill\qed\doendproof}
\renewcommand{\paragraph}[1]{\smallskip\noindent\textbf{#1}\xspace}
\title{Extending Upward Planar Graph Drawings}
\date{}
\author{Giordano {Da Lozzo}, Giuseppe {Di Battista}, and Fabrizio {Frati}
  \institute{
     Roma Tre University, Italy\\
    \{\href{mailto:dalozzo@dia.uniroma3.it}{dalozzo},\href{mailto:gdb@dia.uniroma3.it}{gdb},\href{mailto:frati@dia.uniroma3.it}{frati}\}\href{mailto:dalozzo@dia.uniroma3.it,gdb@dia.uniroma3.it,gdb@dia.uniroma3.it,frati@dia.uniroma3.it}{@dia.uniroma3.it}
    }
}
\authorrunning{{Da Lozzo, Di Battista, and Frati}} 
\begin{document}
\pagestyle{plain}
\maketitle

\begin{abstract}
In this paper we study the computational complexity of the {\sc Upward 
Planarity Extension} problem, which takes in input an upward planar 
drawing $\Gamma_H$ of a subgraph $H$ of a directed graph $G$ and asks 
whether $\Gamma_H$ can be extended to an upward planar drawing of $G$. 
Our study fits into the line of research on the extensibility of 
partial representations, which has recently become a mainstream in Graph 
Drawing.

We show the following results.

\begin{itemize}
	\item First, we prove that the {\sc Upward Planarity Extension} problem is NP-complete, even if $G$ has a prescribed upward embedding, the vertex set of $H$ coincides with the one of $G$, and $H$ contains no edge. 
	\item Second, we show that the {\sc Upward Planarity Extension} problem can be solved in $O(n \log n)$ time if $G$ is an $n$-vertex upward planar $st$-graph. This result improves upon a known $O(n^2)$-time algorithm, which however applies to all $n$-vertex single-source upward planar graphs. 
	\item Finally, we show how to solve in polynomial time a surprisingly difficult version of the {\sc Upward Planarity Extension} problem, in which $G$ is a directed path or cycle with a prescribed upward embedding, $H$ contains no edges, and no two vertices share the same $y$-coordinate~in~$\Gamma_H$.
\end{itemize}
\end{abstract}

\section{Introduction}\label{se:intro}

Testing whether a partial solution to a problem can be extended into a complete one is a classical algorithmic question. For instance, Kratochv\'il and Sebo~\cite{DBLP:journals/jgt/KratochvilS97} studied the vertex coloring problem when few vertices are already colored, whereas Fiala~\cite{DBLP:journals/jgt/Fiala03} considered the problem of extending a partial $3$-coloring of the edges of a graph.

The study of the extensibility of partial representations of graphs has recently become a mainstream in the graph drawing community; see, e.g., \cite{adf-tppeg-15,DBLP:conf/soda/BrucknerR17,cdk-crp-14,cfk-epr-13,cgg-pvrep-18,jkr-ktt-13,kkk-fgp-12,kko-epr-17,kko-scg-15,kko-ig-17,p-epsd-06}. Major contributions in this scenario are the result of Angelini et al.~\cite{adf-tppeg-15}, which states that the existence of a planar drawing of a graph $G$ extending a given planar drawing of a subgraph of $G$ can be tested in linear time, and the result of Br\"uckner and Rutter~\cite{DBLP:conf/soda/BrucknerR17}, which states that the problem of testing the extensibility of a given partial level planar drawing of a level graph (where each vertex -- including the ones whose drawing is not part of the input -- has a prescribed $y$-coordinate, called \emph{level}) is NP-complete. 


Upward planarity is the natural counterpart of planarity for directed graphs. In an upward planar drawing of a directed graph no two edges cross and an edge directed from a vertex $u$ to a vertex $v$ is represented by a curve monotonically increasing in the $y$-direction from $u$ to $v$; the latter property effectively conveys the information about the direction of the edges of the graph.
The study of upward planar drawings is a most prolific topic in the theory of graph visualization~\cite{DBLP:journals/algorithmica/AngeliniLBF17,DBLP:journals/algorithmica/BertolazziBD02,BertolazziBLM94,DBLP:journals/siamcomp/BertolazziBMT98,DBLP:journals/cj/BinucciD16,DBLP:journals/comgeo/Brandenburg14,ccc-pl-17,ddf-upm-18,dt-aprad-88,dtt-arsdpud-92,GargT01,DBLP:journals/cj/RextinH17}. Garg and Tamassia showed that deciding the existence of an upward planar drawing is an NP-complete problem~\cite{GargT01}. On the other hand, Bertolazzi et al.~\cite{BertolazziBLM94} showed that testing for the existence of an upward planar drawing belonging to a fixed isotopy class of planar embeddings can be done in polynomial time. Further, Di Battista et al.~\cite{dt-aprad-88} proved that any upward planar graph is a subgraph of a planar $st$-graph and as such it admits a straight-line upward planar drawing.

In this paper, we consider the extensibility of upward planar drawings of directed graphs. Namely, we introduce and study the complexity of the {\sc Upward Planarity Extension} (for short, {\sc UPE}) problem, which is defined as follows. The input is a triple $\langle G, H, \Gamma_H \rangle$, where $\Gamma_H$ is an upward planar drawing of a subgraph $H$ of a directed graph $G$; we call $H$ and $\Gamma_H$ the \emph{partial graph} and the \emph{partial drawing}, respectively. The {\sc UPE} problem asks whether $\Gamma_H$ can be extended to an upward planar drawing of $G$; or, equivalently, whether an upward planar drawing of $G$ exists which coincides with $\Gamma_H$ when restricted to \mbox{the vertices and edges of~$H$.} We also study the {\sc Upward Planarity Extension with Fixed Upward Embedding} (for short, \mbox{\sc UPE-FUE}) problem, which is the {\sc UPE} problem with the additional requirement that the drawing of $G$ we seek has to respect a given \emph{upward embedding}, i.e., a left-to-right order of the edges entering and exiting each vertex.


{\bf Related problems.} Level planar drawings are special upward planar drawings. Klemz and Rote studied the {\sc Ordered Level Planarity} ({\sc OLP}) problem~\cite{KlemzR17}, where a partial drawing of a level graph is given containing all the vertices and no edges. The problem asks for the existence of a level planar drawing of the graph extending the partial one. They show a tight border of tractability for the problem by proving NP-completeness even if no three vertices have the same $y$-coordinate and by providing a linear-time algorithm if no two vertices have the same $y$-coordinate. Br\"uckner and Rutter studied the {\sc Partial Level Planarity} ({\sc PLP}) problem~\cite{DBLP:conf/soda/BrucknerR17}, that is, the extensibility of a partial drawing of a level graph, which might contain (not necessarily all) vertices and edges. Beside proving NP-completeness even for connected graphs, they provided a quadratic-time algorithm for single-source graphs.  

The NP-hardness of the {\sc Upward Planarity Testing} problem~\cite{GargT01} directly implies the NP-hardness of the {\sc UPE} problem, as the former coincides with the special case of the latter in which the partial graph is the empty graph. Further, we have that any instance of the {\sc OLP} problem in which no $\lambda$ vertices have the same $y$-coordinate can be transformed in linear time into an equivalent instance of the {\sc UPE} problem in which the partial graph contains all the vertices and no edges, and no $\lambda$ vertices have the same $y$-coordinate in the partial drawing; moreover, a linear-time reduction can also be performed in the opposite direction. As a consequence of these reductions and of the cited results about the complexity of the {\sc OLP} problem, we obtain that the {\sc UPE} problem is NP-hard even if the partial graph contains all the vertices and no edges, and no three vertices share the same $y$-coordinate in the partial drawing, while it is linear-time solvable if the partial graph contains all the vertices and no edges, and no two vertices have the same $y$-coordinate in the partial drawing. 

Several constrained graph embedding problems that are NP-hard when the graph has a variable embedding are efficiently solvable in the fixed embedding setting; some examples are minimizing the number of bends in an \emph{orthogonal drawing}~\cite{GargT01,DBLP:journals/siamcomp/Tamassia87}, testing for the existence of an upward planar drawing~\cite{BertolazziBLM94,GargT01}, or testing for the existence of a \emph{windrose-planar drawing}~\cite{Angelini:2018:WPE:3266298.3239561}. Observe that the NP-hardness of the {\sc UPE} problem does not directly imply the NP-hardness of the {\sc UPE-FUE} problem. However, by providing a non-trivial extension of the cited NP-hardness proof of Br\"uckner and Rutter~\cite{DBLP:conf/soda/BrucknerR17} for the {\sc PLP} problem, we show that the {\sc UPE-FUE} problem is NP-hard even for connected instances whose partial graph contains all the vertices and no edges.
These proofs of NP-hardness are presented in \cref{se:complexity}.



{\bf Our contributions.} We now present an overview of our algorithmic results.

First, we identify two main factors that contribute to the complexity of the {\sc UPE} and {\sc UPE-FUE} problems:
\begin{inparaenum}[(i)]
\item The presence of edges in the partial graph and 
\item the existence of vertices with the same $y$-coordinate in the partial drawing. 
\end{inparaenum}
These two properties are strictly tied together. Namely, any instance of the {\sc UPE} or {\sc UPE-FUE} problems can be efficiently transformed into an equivalent instance $\langle G, H, \Gamma_H \rangle$ of the same problem in which $H$ contains no edges {\em or} no two vertices share the same $y$-coordinate in $\Gamma_H$ (see \cref{se:preliminaries}). Hence, the NP-hardness results for the {\sc UPE} and {\sc UPE-FUE} problems discussed above carry over to such instances, even when $V(G)=V(H)$. When the partial graph contains no edges {\em and} no two vertices share the same $y$-coordinate in the partial drawing, then the {\sc UPE} and {\sc UPE-FUE} problems appear to be more tractable. Indeed, although we can not establish their computational complexity in general, we can solve them for instances $\langle G, H, \Gamma_H \rangle$ such that $G$ is a directed path or cycle (see~\cref{se:pathsANDcycles}). In particular, in order to solve the {\sc UPE-FUE} problem for directed paths, we employ a sophisticated dynamic programming approach.


Second, we look at the {\sc UPE} and {\sc UPE-FUE} problems for instances $\langle G, H, \Gamma_H \rangle$ such that $G$ is an upward planar \emph{$st$-graph} (see~\cref{se:st-graphs}), i.e., it has a unique source $s$ and a unique sink $t$. The upward planarity of an $st$-graph is known to be decidable in $O(n)$ time~\cite{dt-aprad-88,dtt-arsdpud-92}, where $n$ is the size of the instance. We observe that a result of Br\"uckner and Rutter~\cite{DBLP:conf/soda/BrucknerR17} implies the existence of an $O(n^2)$-time algorithm to solve the {\sc UPE} problem for upward planar $st$-graphs; their algorithm works more in general for upward planar single-source graphs. We present $O(n \log n)$-time algorithms for the {\sc UPE} and {\sc UPE-FUE} problems for upward planar $st$-graphs. Notably, these results assume neither that the edge set of $H$ is empty, nor that any two vertices have distinct $y$-coordinates in $\Gamma_H$, nor that $V(G)=V(H)$.

\section{Preliminaries}\label{se:preliminaries}

In the first part of this section we give some preliminaries and definitions.

A drawing of a graph is \emph{planar} if no two edges intersect. A graph is \emph{planar} if it admits a planar drawing. A planar drawing partitions the plane into topologically connected regions, called \emph{faces}. The unique unbounded face is the \emph{outer face}, whereas the bounded faces are the \emph{internal faces}.
Two planar drawings of a connected planar graph are \emph{equivalent} if they have the same clockwise order of the edges around each vertex. A \emph{planar embedding} is an equivalence class of planar drawings of the same graph. 


For a directed graph $G$ we denote by $(u,v)$ an edge that is directed from a vertex $u$ to a vertex $v$; such an edge is \emph{incoming} at $v$ or \emph{enters} $v$, and is \emph{outgoing} from $u$ or \emph{exits} $u$. A \emph{source} of $G$ is a vertex $v$ with no incoming edges; a \emph{sink} of $G$ is a vertex $u$ with no outgoing edges. A path $(u_1,\dots,u_n)$ in $G$ is \emph{monotone} if the edge between $u_i$ and $u_{i+1}$ exits $u_i$ and enters $u_{i+1}$, for every $i=1,\dots,n-1$. A \emph{successor} (\emph{predecessor}) of a vertex $v$ in $G$ is a vertex $u$ such that there is a monotone path from $v$ to $u$ (resp.\ from $u$ to $v$). We denote by $S_G(v)$ (by $P_G(v)$) the set of successors (resp.\ predecessors) of $v$ in $G$. An edge $(u,v)$ of $G$ is \emph{transitive} if $G$ contains a monotone path from $u$ to $v$ with at least one internal vertex. When the direction of an edge in $G$ is not known or relevant, we denote it by $\{u,v\}$ instead. 

A drawing of a directed graph $G$ is \emph{upward} if each edge $(u,v)$ is represented by a curve monotonically increasing in the $y$-direction from $u$ to $v$. A drawing of $G$ is \emph{upward planar} if it is both upward and planar. A graph is \emph{upward planar} if it admits an upward planar drawing.





Consider an upward planar drawing $\Gamma$ of a directed graph $G$ and consider a vertex $v$. The list $\mathcal S(v)=[w_1,\dots,w_k]$ contains the adjacent successors of $v$ in ``left-to-right order''. That is, consider a half-line $\ell$ starting at $v$ and directed leftwards; rotate $\ell$ around $v$ in clockwise direction and append a vertex $w_i$ to $\mathcal S(v)$ when $\ell$ overlaps with the tangent to the edge $(v,w_i)$ at $v$. The list $\mathcal P(v)=[z_1,\dots,z_l]$ of the adjacent predecessors of $v$ is defined similarly. Then two upward planar drawings of a connected directed graph are \emph{equivalent} if they have the same lists $\mathcal S(v)$ and $\mathcal P(v)$ for each vertex $v$. An \emph{upward embedding} is an equivalence class of upward planar drawings. If a vertex $v$ in an upward planar graph $G$ is not a source or a sink, then a planar embedding of $G$ determines $\mathcal S(v)$ and $\mathcal P(v)$. However, if $v$ is a source or a sink, then different upward planar drawings might have different lists $\mathcal S(v)$ or $\mathcal P(v)$, respectively. The combinatorial properties of the upward embeddings have been characterized by Bertolazzi et al.~\cite{BertolazziBLM94}.




Given an upward planar graph $G$ with a fixed upward embedding, and given a subgraph $G'$ of $G$, we assume that $G'$ is associated with the upward embedding corresponding to the upward planar drawing of $G'$ obtained from an upward planar drawing of $G$ with the given upward embedding by removing the vertices and edges not in $G'$. 

In order to study the time complexity of the {\sc UPE} and {\sc UPE-FUE} problems, we assume that an instance $\langle G, H, \Gamma_H \rangle$ of any of such problems is such that $\Gamma_H$ is a \emph{polyline drawing}, that is, a drawing in which the edges are represented as polygonal lines. Then we define the \emph{size} of $\langle G, H, \Gamma_H \rangle$ as $|\langle G, H, \Gamma_H \rangle| = |V(G)| + |E(G)| + s$, where $s$ is the number of segments that compose the polygonal lines representing the edges in $\Gamma_H$.



Consider an upward planar $st$-graph $G$ with a fixed upward embedding. In any upward planar drawing $\Gamma$ of $G$ every face $f$ is delimited by two monotone paths $(u_1,\dots,u_k)$ and $(v_1,\dots,v_l)$ connecting the same two vertices $u_1=v_1$ and $u_k=v_l$. Assuming that $\mathcal S(u_1) = [\dots,u_2,v_2,\dots]$, we call $(u_1,\dots,u_k)$ the \emph{left boundary} of $f$ and $(v_1,\dots,v_l)$ the \emph{right boundary} of $f$. For a vertex $v\neq t$ in $G$, the \emph{leftmost outgoing path} $\mathcal L^+_G(v)=(w_1,\dots,w_m)$ of $v$ is the monotone path such that $w_1=v$, $w_m=t$, and $\mathcal S(w_i)=[w_{i+1},\dots]$, for each $i=1,\dots,m-1$. The \emph{rightmost outgoing path} $\mathcal R^+_G(v)$, the \emph{leftmost incoming path} $\mathcal L^-_G(v)$ and the \emph{rightmost incoming path} $\mathcal R^-_G(v)$ are defined similarly. The paths $\mathcal L^+_G(s)$ and $\mathcal R^+_G(s)$ are also called \emph{leftmost} and \emph{rightmost path} of $G$, respectively. Note that these paths delimit the outer face of $G$. Consider a monotone path $\cal Q$ from $s$ to $t$. Let $\mathcal Q^*$ be obtained by extending $\mathcal Q$ with a $y$-monotone curve directed upwards from $t$ to infinity and with a $y$-monotone curve directed downwards from $s$ to infinity. Then a vertex $u$ is \emph{to the left} (\emph{to the right}) of $\mathcal Q$ if it lies in the region to the left (resp.\ to the right) of $\mathcal Q^*$. In particular, $u$ is \emph{to the left} of a vertex $v$ if it lies to the left of the monotone path composed of $\mathcal L^+_G(v)$ and $\mathcal L^-_G(v)$. Analogously, $u$ is \emph{to the right} of $v$ if it lies to the right of the monotone path composed of $\mathcal R^+_G(v)$ and $\mathcal R^-_G(v)$. We denote by $L_G(v)$ (by $R_G(v)$) the set of vertices that are to the left (resp.\ to the right) of a vertex $v$ in $G$. Note that all the definitions introduced in this paragraph do not depend on the actual drawing of $G$, but only on its upward embedding.

\subsection{SPQR-Trees}\label{se:spqr-trees}

A \emph{cut-vertex} in a graph $G$ is a vertex whose removal disconnects $G$. A \emph{separation pair} in $G$ is a pair of vertices whose removal disconnects $G$. A graph is \emph{biconnected} (\emph{triconnected}) if it has no cut-vertex (resp.\ no separation pair). A \emph{biconnected component} of $G$ is a maximal biconnected subgraph of $G$.

Let $G$ be an $n$-vertex biconnected upward planar $st$-graph containing the edge $(s,t)$. A \emph{split pair} of $G$ is either a separation pair or a pair of adjacent vertices. A \emph{split component} for a split pair $\{u,v\}$ is either the edge $(u,v)$ or a maximal subgraph $G'$ of $G$ which is an upward planar $uv$-graph and such that $\{u, v\}$ is not a split pair of $G'$. A split pair $\{u,v\}$ is \emph{maximal} if there is no distinct split pair $\{w,z\}$ in $G$ such that $\{u, v\}$ is contained in a split component of $\{w,z\}$.

\begin{figure}[tb!]
	\centering
	\includegraphics[height=.25\textwidth]{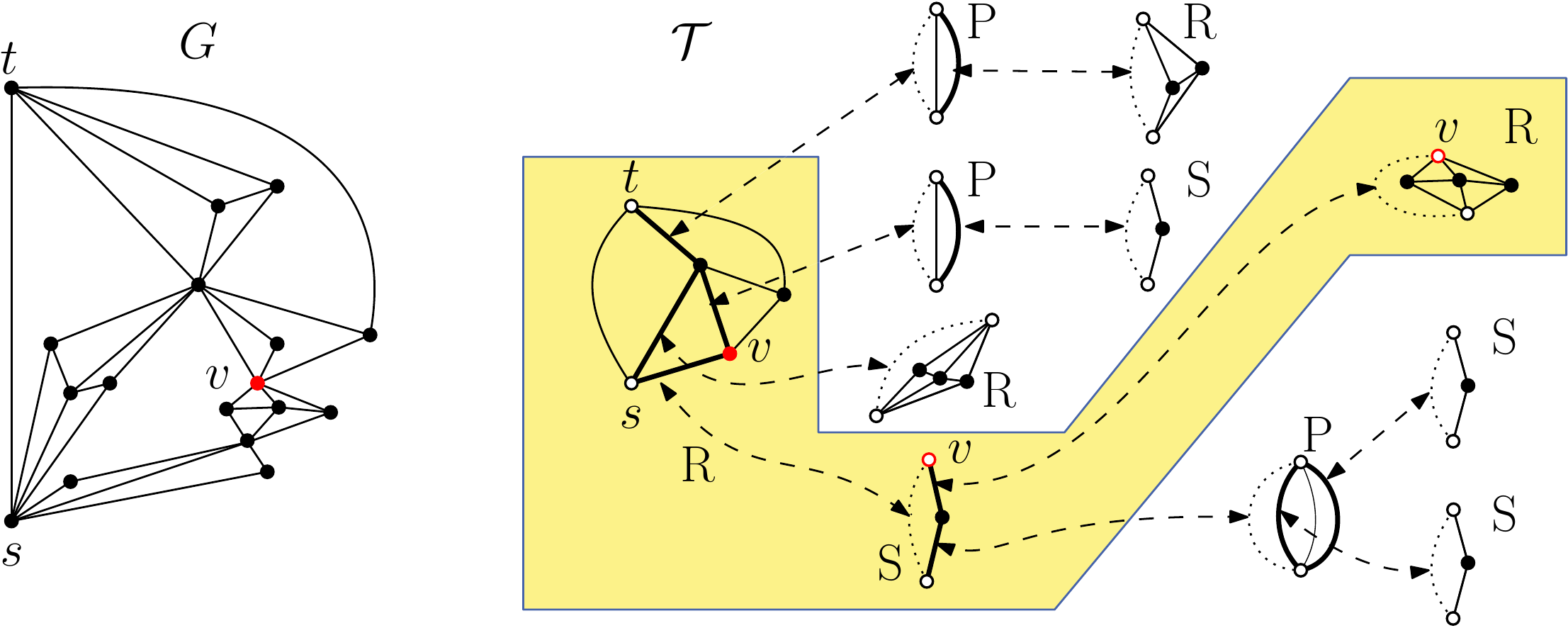}
	\caption{(left) A biconnected upward planar $st$-graph $G$ and (right) the SPQR-tree $\cal T$ of $G$ rooted at the R-node $\mu$ adjacent to the edge $(s,t)$. The skeletons of all the non-leaf nodes of $\cal T$ are depicted; virtual edges corresponding to edges of $G$ are thin, whereas virtual edges corresponding to S-, P-, and R-nodes are thick. The allocation nodes of the vertex $v$ are in the yellow-shaded region; the node $\mu$ is the proper allocation~node~of~$v$.}
	\label{fig:SPQR}
\end{figure}

The SPQR-tree $\mathcal T$ of $G$, defined as in~\cite{DBLP:journals/siamcomp/BattistaT96}, describes a recursive decomposition of $G$ with
respect to its split pairs and represents succinctly all the upward planar embeddings of $G$. The tree $\mathcal T$ is a rooted tree with four types of nodes: S, P, Q, and R (refer to~\cref{fig:SPQR}). Any node $\mu$ of $\mathcal T$ is associated with an upward planar $uv$-graph, called \emph{skeleton} of $\mu$, which might contain multiple edges and which we denote by $\sk(\mu)$. The edges of $\sk(\mu)$ are called \emph{virtual edges}. The tree $\mathcal T$ is recursively defined as follows.

\begin{itemize}
	\item {\em Trivial case}. If $G$ consists of a single edge $(s,t)$, then $\mathcal T$ is a Q-node $\mu$ and $\sk(\mu)$ also coincides with the edge $(s,t)$.
	\item {\em Series case}. If $G$ is not a single edge and is not biconnected, then let $c_1, \dots, c_{k-1}$ (for some $k \geq  2$) be the cut-vertices of $G$, where $c_i$ belongs to two biconnected components $G_i$ and $G_{i+1}$, for $i=1,\dots,k-1$, with $s\in V(G_1)$ and $t\in V(G_k)$. Further, set $c_0=s$ and $c_k=t$; then $G_i$ is an upward planar $c_{i-1}c_i$-graph, for $i=1,\dots,k$. The root of $\mathcal T$ is an S-node $\mu$. Finally, $\sk(\mu)$ is a monotone path $(c_0,c_1,\dots,c_k)$ plus the edge $(s,t)$. 
	\item {\em Parallel case}. If $G$ is not a single edge, if it is biconnected, and if $\{s,t\}$ is a split pair of $G$ defining split components $G_1,\dots,G_k$ (for some $k \geq  2$), then $G_i$ is an upward planar $st$-graph, for $i=1,\dots,k$. The root of $\mathcal T$ is a P-node $\mu$.  Finally, $\sk(\mu)$ consists of $k+1$ parallel edges $(s,t)$.
	\item {\em Rigid case}. If $G$ is not a single edge, if $G$ is biconnected, and if $\{s,t\}$ is not a split pair of $G$, then let $\{s_1,t_1\},\dots,\{s_k,t_k\}$ be the maximal split pairs of $G$ (for some $k \geq  1$). Further, let $G_i$ be the union of all the split components of $\{s_i,t_i\}$, for $i=1,\dots,k$. Then $G_i$ is an upward planar $s_i t_i$-graph, for $i=1,\dots,k$. The root of $\mathcal T$ is an R-node $\mu$. Finally, the graph $\sk(\mu)$ is obtained from $G$ by replacing each subgraph $G_i$ with an edge $(s_i,t_i)$ and by adding the edge $(s,t)$. We have that $\sk(\mu)$ is a triconnected upward planar $st$-graph.
\end{itemize}

In each of the last three cases, the subtrees of $\mu$ are the SPQR-trees of $G_1,\dots,G_k$, rooted at the children $\mu_1,\dots,\mu_k$ of $\mu$, respectively. Further, the virtual edge $(s,t)$ in $\sk(\mu)$ (just one of the edges $(s,t)$ in the case of a P-node) is associated with the parent of $\mu$ in $\mathcal T$, while every other virtual edge $e_i$ is associated with a child $\mu_i$ of $\mu$ and with the graph $G_i$. The graph $G_i$ corresponds to a virtual edge $e_i$ of $\sk(\mu)$ and it is called the \emph{pertinent graph} of $e_i$ and of $\mu_i$. 

The overall tree $\mathcal T$ of $G$ is rooted at the only neighbor of the Q-node $(s,t)$; the skeleton for the root of $\mathcal T$ is defined slightly differently from the other nodes, as it does not contain the virtual edge representing its parent. It is known that $\mathcal T$ has $O(n)$ nodes and that the total number of virtual edges in the skeletons of the nodes of $\mathcal T$ is in $O(n)$. All the upward embeddings of $G$ can be obtained by suitably permuting the virtual edges of the skeletons of the P-nodes and by flipping the skeletons of the R-nodes. For a specific choice of such permutations and flips, an upward embedding is recursively obtained by substituting the virtual edges in the skeleton of a node $\mu$ with the upward embeddings associated to the children of $\mu$. 

Let $v$ be a vertex of $G$. The \emph{allocation nodes} of $v$ are the nodes of $\mathcal T$ whose skeletons contain $v$. Note that $v$ has at least one allocation node. The lowest common ancestor of the allocation nodes of vertex $v$ is itself an allocation node of $v$ and it is called the \emph{proper allocation node} of $v$. Let $\mu$ be a node of $\mathcal T$. The \emph{representative} of $v$ in $\skel(\mu)$ is the vertex or edge $x$ of $\skel(\mu)$ defined as follows: If $\mu$ is an allocation node of $v$, then $x=v$; otherwise, $x$ is the edge of $\skel(\mu)$ whose pertinent graph contains $v$. 

From a computational complexity perspective, the SPQR-tree $\mathcal T$ of an $n$-vertex upward planar $st$-graph $G$ can be constructed in $O(n)$ time. Further, within the same time bound, it is possible to set up a data structure that allows us to query for the proper allocation node of a vertex of $G$ in $O(1)$ time and to query for the lowest common ancestor of two nodes of $\mathcal T$ in $O(1)$ time~\cite{DBLP:journals/siamcomp/BattistaT96}.

\subsection{Simplifications}\label{se:reductions}

In this section we prove that it is not a loss of generality to restrict our attention to instances of the {\sc UPE} and {\sc UPE-FUE} problems in which the partial graph contains no edges {\em or} no two vertices share the same $y$-coordinate in the partial drawing. We first deal with instances in which the partial graph \mbox{contains no edges.}



\begin{restatable}{lemma}{lemmaNoEdges}\label{le:no-edges}
	Let $\langle G, H, \Gamma_H \rangle$ be an instance of the {\sc UPE} or {\sc UPE-FUE} problem and let $n = |\langle G, H, \Gamma_H \rangle|$. There exists an equivalent instance $\langle G', H', \Gamma_{H'} \rangle$ of the {\sc UPE} or {\sc UPE-FUE} problem, respectively, \mbox{such that:} 
	\begin{enumerate}[\bf (i)]
		\item $E(H')=\emptyset$,
		\item if $V(H)=V(G)$, then $V(H')=V(G')$, and
		\item if $G$ is an $st$-graph, then $G'$ is an $st$-graph.
	\end{enumerate}
	Further, the instance $\langle G', H', \Gamma_{H'} \rangle$  has $O(n)$ size and can be constructed in $O(n \log {n})$ time. The drawing $\Gamma_{H'}$ may contain vertices with the same $y$-coordinate even if $\Gamma_H$ does not.
\end{restatable}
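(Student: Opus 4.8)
The plan is to remove the edges of $H$ by \emph{subdividing} each of them into a monotone path whose internal vertices, placed at prescribed points of $\Gamma_H$, become isolated vertices of the new partial graph; then $\Gamma_{H'}$ is a set of points and $E(H')=\varnothing$ holds by construction. Concretely, for $e=(u,v)\in E(H)$ with $y$-monotone polyline $p_e$ in $\Gamma_H$, I put a subdivision vertex on $e$ (a) at every bend of $p_e$, (b) at every point where a horizontal ray fired leftward or rightward from a vertex of $H$ first hits $p_e$, and (c) at the midpoint of $p_e$ if (a) and (b) produce none, so that every former edge becomes a genuine path. Replacing each $e$ by this path yields $G'$; $H'$ has vertex set $V(H)$ plus all subdivision vertices and no edges; and $\Gamma_{H'}$ draws every original vertex as in $\Gamma_H$ and every subdivision vertex at its point on $p_e$. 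A subdivision vertex has in- and out-degree $1$, so (iii) holds; subdivision vertices enter both graphs, so (ii) holds when $V(H)=V(G)$; and (i) is immediate. The bends give $O(s)$ points, and since each vertex of $H$ sees at most one edge on each side, rule (b) adds only $O(n)$ points in total (the linear size of the horizontal decomposition of $\Gamma_H$); thus $|\langle G',H',\Gamma_{H'}\rangle|=O(n)$, and the points are found by a plane sweep in $O(n\log n)$ time, which is the source of the logarithmic factor. Since suppressing or inserting degree-$2$ vertices leaves the rotation system unchanged, a prescribed upward embedding of $G$ induces one of $G'$, so the construction serves \mbox{\sc UPE-FUE} as well.

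One direction is immediate: if $\Gamma_H$ extends to an upward planar drawing $\Gamma_G$, then each $e\in E(H)$ is drawn as $p_e$ in $\Gamma_G$, the prescribed points lie on it in the prescribed order, and inserting the subdivision vertices there turns $\Gamma_G$ into an upward planar drawing of $G'$ that restricts to $\Gamma_{H'}$. For the converse, take any upward planar drawing $\Gamma_{G'}$ extending $\Gamma_{H'}$ and let $D$ be obtained from it by suppressing all subdivision vertices, so that each $e\in E(H)$ becomes a single $y$-monotone curve $c_e$ through the fixed points of $p_e$. I do not expect $c_e$ to equal $p_e$ geometrically; instead the goal is to show that $c_e$ and $p_e$ are in the same position \emph{relative to the vertices of $H$}, for then an ambient isotopy fixing $V(H)$ can morph every $c_e$ into $p_e$ at once, without introducing crossings and while keeping all edges upward, and the result is an extension of $\Gamma_H$.

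The heart of the argument, and the step I expect to be most delicate, is the claim that rule (b) pins down all left--right relations. At the height of a vertex $x$ of $H$, the edges of $H$ flanking $x$ on either side carry a subdivision vertex placed on the correct side of $x$; since these vertices are fixed in $\Gamma_{H'}$, the curves $c_e$ keep $x$ on the correct side at that height. Because two $y$-monotone curves cannot swap their left-to-right order without crossing, planarity then propagates these pinned incidences from the flanking edges to \emph{every} vertex--edge pair: each $c_e$ visits the vertices of $H$ with exactly the sequence of left/right sides that $p_e$ does in $\Gamma_H$, hence $c_e$ and $p_e$ are homotopic relative to $V(H)$, as required. Making this propagation precise as an invariant of the horizontal decomposition, and checking that the isotopy can be carried out while the freely placeable vertices of $G\setminus H$ are relocated so as to preserve upwardness of their incident edges, are the points that will demand the most care.
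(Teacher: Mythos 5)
Your construction is the same in spirit as the paper's (subdivide each edge of $H$ into a monotone path, pin the subdivision vertices at prescribed points of the polyline, then prove that any extension of the reduced instance has its replacement paths in the prescribed left-to-right order relative to the vertices of $H$, so that a morph produces an extension of $\Gamma_H$). Your rule (a) matches the paper's preprocessing at bends, and your rule (b) is essentially the paper's case (i): pin an edge wherever it is the element immediately to the left or right of a vertex of $H$ at that vertex's height. The easy direction, the size bound, the $O(n\log n)$ sweep, and the observation that degree-$2$ subdivision preserves the upward embedding (so the reduction serves \mbox{\sc UPE-FUE} too) are all fine and agree with the paper.

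The genuine gap is exactly in the step you flag as delicate, and it is not merely a matter of making the propagation precise: as stated, the propagation fails for \emph{sibling} edges sharing an endpoint. The natural argument (and the paper's actual proof of its order-forcing claim) proceeds level by level: the left-to-right order of the $y$-monotone curves at the line $\ell^*_{i+1}$ is inherited from the forced order at $\ell^*_i$ because two disjoint $y$-monotone curves cannot swap sides without crossing. This inheritance breaks for two replacement paths emanating from a common source $u$ lying on $\ell^*_i$: both curves leave the \emph{same point}, so planarity gives no constraint on their relative order above $u$, and in the {\sc UPE} setting the rotation at $u$ is not prescribed, so nothing local forbids the two paths from leaving $u$ in the order opposite to $\Gamma_H$. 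Your pins do not repair this: rule (b) pins such an edge only if it happens to be the first edge hit horizontally from some vertex, rule (c) fires only when an edge has no other pin, and in general the two siblings end up pinned only at \emph{incomparable heights}, which forces nothing (the relative order of two curves with a common endpoint is a single global bit, and it is pinned only if both curves have determined $x$-coordinates at some common height). The paper closes exactly this hole with its case (ii): every edge additionally receives a subdivision vertex where it crosses the interesting line \emph{adjacent} to the level of each of its endpoints, so that all edges leaving a vertex on $\ell^*_i$ are pinned at $\ell^*_{i+1}$ simultaneously, at the same height, in the prescribed order; this is invoked verbatim in the shared-source case of the paper's inductive claim. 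Without an analogue of these pins, your homotopy claim ``each $c_e$ visits the vertices of $H$ with exactly the sequence of left/right sides that $p_e$ does'' is unproven, and the reduced instance can admit drawings in which two sibling paths are swapped relative to $\Gamma_H$ --- which auxiliary parts of $G\setminus H$ can then exploit, breaking the equivalence. To fix the proof, add case-(ii)-style pins (at most two per edge, so the $O(n)$ size and $O(n\log n)$ sweep survive) and run the level-by-level induction rather than the flanking-edge propagation.
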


\begin{proof}
	Throughout this proof we will assume that $\Gamma_H$ is a straight-line drawing of $H$. This is not a loss of generality, as if an edge $(u,v)$ of $H$ is represented in $\Gamma_H$ by a polygonal line $(u,b_1,\dots,b_h,v)$, where $b_1,\dots,b_h$ are the bends of the polygonal line, then dummy vertices can inserted on $b_1,\dots,b_h$ in $\Gamma_H$; further, the edge $(u,v)$ becomes a monotone path $(u,b_1,\dots,b_h,v)$ both in $H$ and in $G$. Note that the size of the instance remains asymptotically the same. 
	
	We now define the instance $\langle G', H', \Gamma_{H'} \rangle$ and argue about its size. We will later describe how to compute it efficiently.  
	
	The graph $G'$ is obtained from $G$ by replacing certain edges of $G$ that are also in $H$ by monotone paths; how to precisely perform such a replacement will be described later. If $G$ has a prescribed upward embedding, then $G'$ derives its upward embedding from the one of $G$. That is, if an edge $(u,v)$ is replaced by a monotone path $(u=u_1,u_2,\dots,u_{k-1},u_k=v)$, then $u_2$ substitutes $v$ in $\mathcal S(u)$ and $u_{k-1}$ substitutes $u$ in $\mathcal P(v)$; further, for $i=2,\dots,k-1$, we have $\mathcal S(u_i)=[u_{i+1}]$ and  $\mathcal P(u_i)=[u_{i-1}]$. Property (iii) of the lemma's statement is trivially satisfied.
	
	The graph $H'$ is composed of all the vertices of $H$ plus all the vertices internal to the monotone paths that are inserted in $G'$ to replace edges of $G$ that are also in $H$. Property (ii) of the lemma's statement is then satisfied. Further, $H'$ contains no edge, hence Property (i) of the lemma's statement is satisfied. 
	
	The drawing $\Gamma_{H'}$ coincides with $\Gamma_{H}$ when restricted to the vertices also belonging to $H$. It remains to specify the lengths of the monotone paths that are inserted in $G'$ to replace edges of $G$ that are also in $H$ and to describe how to place their internal vertices in $\Gamma_{H'}$. This is done in the following.

	\begin{figure}[htb]
		\centering
		\begin{subfigure}{.4\textwidth}
			\centering
			\includegraphics[scale=0.7]{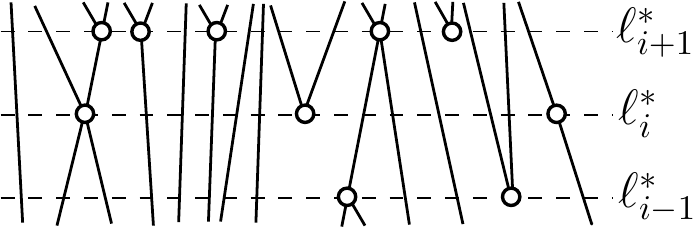}
			\label{fi:noedge1}
		\end{subfigure}
		\hfil
		\begin{subfigure}{.4\textwidth}
			\centering
			\includegraphics[scale=0.7]{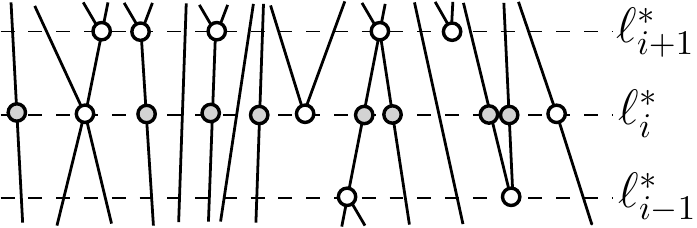}
			\label{fi:noedge2}
		\end{subfigure}
		\caption{The drawings $\Gamma_{H}$ (left) and $\Gamma_{H'}$ (right) in the proximity of $\ell^*_i$. The vertices that are inserted on $\ell^*_i$ are gray; those inserted on $\ell^*_{i-1}$ and $\ell^*_{i+1}$ are not shown.}
		\label{fi:noedges}
	\end{figure}

	Refer to~\cref{fi:noedges}. Among all the possible $y$-coordinates, we call {\em interesting} the ones of the vertices of $H$ in $\Gamma_H$. We examine the interesting $y$-coordinates in increasing order $y^*_1,\dots,y^*_m$. For any $y^*_i$, denote by $\ell^*_i$ the line with equation $y=y^*_i$. We look at the left-to-right order $X^*_i$ in which the vertices of $H$ lying on $\ell^*_i$ and the edges of $H$ crossing $\ell^*_i$ appear in $\Gamma_{H}$. We place a vertex $v$ in $\Gamma_{H'}$ at the point $p_v$ in which an edge $e$ of $H$ crosses $\ell^*_i$ in two cases: 
	
	\begin{enumerate}[(i)]
		\item If $e$ is preceded or followed by a vertex of $H$ in $X^*_i$; or 
		\item if $e$ has an end-vertex whose $y$-coordinate in $\Gamma_H$ is $y^*_{i-1}$ or $y^*_{i+1}$.
	\end{enumerate} 
	If we place a vertex $v$ at the crossing point $p_v$ of $e$ with $\ell^*_i$, then $v$ is also a vertex that is internal to the monotone path that is inserted in $G'$ to replace $e$. Clearly, the edges of the monotone paths are directed so to connect vertices in increasing order of their $y$-coordinates. This concludes the construction of $\langle G', H', \Gamma_{H'} \rangle$.
	
	We prove that the size of $\langle G', H', \Gamma_{H'} \rangle$ is linear in the size of $\langle G, H, \Gamma_H \rangle$. The only vertices which belong to $G'$ and not to $G$ are those internal to the monotone paths that are inserted in $G'$ to replace edges of $G$ that are also in $H$. The number of vertices that are inserted in $G'$ because of case (i) is at most $2|V(H)|$; further, the number of vertices that are inserted in $G'$ because of case (ii) is at most $2|E(H)|$. Since the only edges that belong to $G'$ and not to $G$ are those of the monotone paths above, it follows that the number of such edges is also in $O(|V(H)|+|E(H)|)$. The claim about \mbox{the size of $\langle G', H', \Gamma_{H'} \rangle$ follows.}  
	
	We next show how to construct the instance $\langle G', H', \Gamma_{H'} \rangle$ efficiently. 
	
	First, we construct a list $\cal V$ in which the vertices of $H$ are ordered by increasing $y$-coordinate and, secondarily, by increasing $x$-coordinate in $\Gamma_H$. This is done in~$O(n\log n)$~time.
	
	Handling the vertices that need to be inserted in $G'$ because of case (ii) is easy. Indeed, it suffices to look at every edge $e$ of $H$. Let $y^*_i$ and $y^*_j$ be  the $y$-coordinates of its end-vertices, for some $i<j$; then insert vertices at the crossing points of $e$ with the lines $y=y^*_{i+1}$ and $y=y^*_{j-1}$. This can be done in $O(n)$ overall time. 
	
	Handling the vertices that need to be inserted in $G'$ because of case (i) is more difficult. The rest of the analysis of the running time deals with this. 
	
	For each interesting $y$-coordinate $y^*_i$, a trivial algorithm would: (1) compute the crossing points between $\ell^*_i$ and the edges of $H$ crossing it; (2) order such crossing points together with the vertices of $H$ lying on $\ell^*_i$ by increasing $x$-coordinate; and (3) insert the new vertices at the crossing points which are preceded or followed by vertices of $H$. However, this might require $\Omega(n^2\log n)$ time in total, as $\Omega(n)$ edges might span the $y$-coordinates of $\Omega(n)$ vertices of $H$ in $\Gamma_H$. We now describe how to bring the time complexity down to $O(n\log n)$. 
	
	We examine the interesting $y$-coordinates in increasing order $y^*_1,\dots,y^*_m$; denote by $\mathcal V^*_i$ the restriction of $\cal V$ to the vertices on $\ell^*_i$.
	
	For $i=1,\dots,m$, our algorithm computes the list $X^*_i$ defined above; note that $X^*_1$ coincides with $\mathcal V^*_1$. Further, for $i=1,\dots,m-1$, our algorithm also computes a list $E_{i,i+1}$ which contains the edges of $H$ crossing the line $y=(y^*_i+y^*_{i+1})/2$ in $\Gamma_H$, in left-to-right order. 
	
	Assume that $X^*_i$ has been computed, for some $i\in\{1,\dots,m-1\}$. We compute $E_{i,i+1}$ as follows. We initialize $E_{i,i+1}$ to $X^*_i$. Then, for each vertex $v$ in $\mathcal V^*_i$, we replace $v$ in $E_{i,i+1}$ by its outgoing edges in left-to-right order as they appear in $\Gamma_H$. The vertex $v$ is accessed in $X^*_i$ and in $E_{i,i+1}$ in $O(1)$ time via a pointer associated to $v$ in $\mathcal V^*_i$. Thus, $E_{i,i+1}$ can be constructed in time proportional to the number of edges exiting vertices in $\mathcal V^*_i$, hence in total $O(n)$ time.
	
	Now assume that $E_{i,i+1}$ has been computed, for some $i\in\{1,\dots,m-1\}$. We compute $X^*_{i+1}$ as follows. We initialize $X^*_{i+1}$ to $E_{i,i+1}$. Then, for each vertex $v$ in $\mathcal V^*_{i+1}$, we determine whether $E_{i,i+1}$ contains edges entering $v$. In case it does, such edges are removed from $X^*_{i+1}$ (indeed, they ``end'' on $\ell^*_{i+1}$), and $v$ is inserted in their place; otherwise, $v$ is just inserted in a suitable position in $X^*_{i+1}$.  
	
	The above approach is realized by performing, for each vertex $v$ in $\mathcal V^*_i$, a binary search on $X^*_{i+1}$, which compares $x(v)$ to the $x$-coordinates of the intersection points of the edges in $X^*_{i+1}$ with $\ell^*_{i+1}$. More precisely, when the binary search considers an edge $e$ in $X^*_{i+1}$, the $x$-coordinate $x_i(e)$ of its intersection point with $\ell^*_{i+1}$ is computed in $O(1)$ time and then compared to $x(v)$; we stress the fact that not all the intersection points of the edges in $X^*_{i+1}$ with $\ell^*_{i+1}$ are computed, but only the $O(\log n)$ ones that are compared with $x(v)$ in the binary search. The binary search stops in two possible situations. The first is the one in which an edge $e$ is found that enters $v$; that is, $e$ is such that $x_i(e)=x(v)$. That edge and all the other edges entering $v$ are then removed from $X^*_{i+1}$ and $v$ is inserted in place of such edges (and a pointer to it is set in $V^*_{i+1}$). The second situation in which the binary search stops is the one in which the $x$-coordinates of two consecutive elements of $X^*_{i+1}$ have been compared to $x(v)$, and $x(v)$ has been found to be in between such coordinates. Then $v$ is inserted between such elements (and a pointer to it is set in $V^*_{i+1}$). Finding the proper place where to insert $v$ in $X^*_{i+1}$ takes $O(\log n)$ time, and hence $O(n \log n)$ time over all the vertices of $H$. Further, removing the edges entering $v$ can be done in $O(|P_H(v)|)$ time because all such edges appear consecutively in $E_{i,i+1}$, and hence in $X^*_{i+1}$, by the planarity of $\Gamma_H$; hence this takes $O(n)$ time in total.
	
	Once $X^*_{i+1}$ has been computed, we determine the placement in $\Gamma_{H'}$ of the vertices in $V(H')\setminus V(H)$ that are inserted on $\ell^*_{i+1}$. In order to do that, we again access each vertex $v\in \mathcal V^*_{i+1}$; then we look at the two elements adjacent to $v$ in $X^*_{i+1}$, and if any of them is an edge we insert a vertex at its intersection point with $\ell^*_{i+1}$. For each $v\in \mathcal V^*_{i+1}$ this takes $O(1)$ time, hence $O(n)$ time in total.
	
	It remains to prove the equivalence between $\langle G, H, \Gamma_{H} \rangle$ and $\langle G', H', \Gamma_{H'} \rangle$. 
	
	One direction is easy. Indeed, assume that an upward planar drawing $\Gamma_G$ of $G$ exists which extends $\Gamma_H$. We modify $\Gamma_G$ by placing each vertex $v\in V(H')\setminus V(H)$ at the point $p_v$ where $v$ lies in $\Gamma_{H'}$. Let $e$ be the edge of $G$ that is also in $H$ and that is replaced by a monotone path which includes $v$ as an internal vertex in $G'$. By construction $p_v$ lies on the curve representing $e$ in $\Gamma_G$; hence, the insertion of all the vertices in $V(H')\setminus V(H)$ in $\Gamma_G$ results in an upward planar drawing $\Gamma_{G'}$ of $G'$ extending $\Gamma_{H'}$. 
	
	The other direction is much more involved. Assume that an upward planar drawing $\Gamma_{G'}$ of $G'$ extending $\Gamma_{H'}$ exists. By construction, all the vertices of $V(H)$ are also in $V(H')$; further, these vertices are placed at the same points in $\Gamma_{H}$ and $\Gamma_{H'}$. Let $\Gamma_{G}$ be the drawing of $G$ obtained from $\Gamma_{G'}$ by interpreting the representation of each monotone path of $G'$ corresponding to an edge $e$ of $G$ as a representation of $e$. Then we have that $\Gamma_{G}$ is an upward planar drawing of $G$ such that the vertices in $V(H)$ are placed as in $\Gamma_H$. However, it might be the case that the edges of $G$ that are also in $H$ are not drawn in $\Gamma_{G}$ as they are required to be drawn in $\Gamma_{H}$. We hence need to modify the drawing of such edges in $\Gamma_{G}$.
	
	Our proof that such a modification is always feasible is a direct consequence of~\cref{cl:simplification-no-edges} below. We introduce some definitions. Every edge of $E(G) \cap E(H)$ corresponds to a monotone path in $G'$; we call {\em replacement edge} each edge of such monotone paths. Note that the end-vertices of a replacement edge are in $H'$, while the replacement edge itself is not in $H'$. Recall that, for $i=1,\dots,m$, the list $X^*_i=[x^1_i,x^2_i,\dots,x^t_i]$ represents the left-to-right order in which the vertices of $H$ lying on $\ell^*_i$ and the edges of $H$ crossing $\ell^*_i$ appear along $\ell^*_i$ in $\Gamma_{H}$. By construction, each vertex of $H$ lying on $\ell^*_i$ is also a vertex of $H'$ lying on $\ell^*_i$; further, each edge of $H$ crossing $\ell^*_i$ either determines a vertex of $H'$ lying on $\ell^*_i$ or corresponds to a replacement edge of $G'$ crossing $\ell^*_i$. Hence, there is a bijection $\gamma:X^*_i\rightarrow Z^*_i$ between $X^*_i$ and a set $Z^*_i$ which contains the vertices of $H'$ lying on $\ell^*_i$ and the replacement edges of $G'$ crossing $\ell^*_i$.
	
	While the left-to-right order $[x^1_i,x^2_i,\dots,x^t_i]$ of the elements of $X^*_i$ along $\ell^*_i$ in any upward planar drawing of $G$ extending $\Gamma_H$ is fixed, as all the elements of $X^*_i$ are drawn in $\Gamma_{H}$, some elements of $Z^*_i$ do not have any fixed intersection with $\ell^*_i$. Namely, the elements of $Z^*_i$ corresponding to vertices of $H'$ lying on $\ell^*_i$ do have a fixed intersection with $\ell^*_i$; however, in principle, the elements of $Z^*_i$ corresponding to replacement edges of $G'$ crossing $\ell^*_i$ do not have any restriction on where they should cross $\ell^*_i$ with respect to the other elements of $Z^*_i$. The following claim proves that such a restriction actually exists.
	
	%
	%
	\begin{myclaim} \label{cl:simplification-no-edges}
		In any upward planar drawing $\Gamma_{G'}$ of $G'$ extending $\Gamma_{H'}$ the left-to-right order of the elements of $Z^*_i$ along $\ell^*_i$ is $[\gamma(x^1_i),\gamma(x^2_i),\dots,\gamma(x^t_i)]$.
	\end{myclaim}
	
	\begin{proof}
		We prove the claim by induction on $i$. The base case, in which $i=1$, is trivial, since all the elements of $Z^*_1$ are vertices of $H'$, whose left-to-right order along $\ell^*_1$ is determined~by~$\Gamma_{H'}$. 
		
		Now suppose that the claim holds for $Z^*_i$, for some $1\leq i\leq m-1$; we prove the claim for $Z^*_{i+1}$. 
		
		We first argue about the elements $\gamma(x^j_{i+1})\in Z^*_{i+1}$ that are associated with a $y$-monotone curve $\lambda^j_{i+1}$ that extends between $\ell^*_{i}$ and $\ell^*_{i+1}$ in $\Gamma_{G'}$. More precisely, an element $\gamma(x^j_{i+1})$ of $Z^*_{i+1}$ is of interest here in one of the following two cases: 
		
		\begin{enumerate}[(1)]
			\item $\gamma(x^j_{i+1})$ is a replacement edge $e$ of $G'$ crossing $\ell^*_{i+1}$. 
			\item $\gamma(x^j_{i+1})$ is a vertex of $H'$ lying on $\ell^*_i$ that is incident to a replacement edge $e$ \mbox{which is incoming at $\gamma(x^j_{i+1})$.} 
		\end{enumerate}
		
		Note that, in both cases, the edge $e$ either crosses $\ell^*_{i}$ or has its source on $\ell^*_i$. Then, in both cases, we define $\lambda^j_{i+1}$ as the part of the replacement edge $e$ between $\ell^*_{i}$ and $\ell^*_{i+1}$. Let $M^*_{i+1}\subseteq Z^*_{i+1}$ be the set of such elements $\gamma(x^j_{i+1})$. Hence, $M^*_{i+1}$ contains all the elements of $Z^*_{i+1}$, except for the vertices of $H'$ lying on $\ell^*_{i+1}$ in $\Gamma_{H'}$ corresponding to sources of $H$.


		Consider any element $\gamma(x^j_{i+1})\in M^*_{i+1}$, its associated $y$-monotone curve $\lambda^j_{i+1}$, and the replacement edge $e$ the curve $\lambda^j_{i+1}$ is part of. If $e$ crosses $\ell^*_{i}$ in $\Gamma_{G'}$, then $e$ is represented by an element $\gamma(x^{l}_i)$ in $Z^*_{i}$. Otherwise, $e$ has its source on $\ell^*_{i}$; we denote such a source also by $\gamma(x^{l}_i)$. The elements $\gamma(x^j_{i+1})$ and $\gamma(x^{l}_i)$ correspond to elements $x^j_{i+1}$ and $x^{l}_i$ of $X^*_{i+1}$ and $X^*_{i}$, respectively. A $y$-monotone curve $\delta^j_{i+1}$ between a point of $\ell^*_{i+1}$ and a point of $\ell^*_{i}$ is associated with $x^j_{i+1}$; this curve represents part of an edge of $H$ which might cross $\ell^*_{i+1}$ and/or $\ell^*_{i}$ in $\Gamma_H$ (then $x^j_{i+1}$ and/or $x^{l}_i$ represent such an edge), or might have an end-vertex on $\ell^*_{i+1}$ and/or $\ell^*_{i}$ in $\Gamma_H$ (then $x^j_{i+1}$ and/or $x^{l}_i$ represent the end-vertices of such an edge).
		
		Now consider any two distinct elements $\gamma(x^j_{i+1}),\gamma(x^k_{i+1})\in M^*_{i+1}$. These are associated to elements $\gamma(x^{l}_i)$ and $\gamma(x^{h}_i)$ of $Z^*_{i}$, respectively, via the curves $\lambda^j_{i+1}$ and $\lambda^k_{i+1}$. Assume w.l.o.g. that $\gamma(x^j_{i+1})$ is to the left of $\gamma(x^k_{i+1})$ along $\ell^*_{i+1}$ in $\Gamma_{G'}$. We establish that $x^j_{i+1}$ precedes $x^k_{i+1}$~in~$X^*_{i+1}$.
		
		\begin{itemize}
			\item If $\gamma(x^{l}_i)\neq \gamma(x^{h}_i)$, then $\gamma(x^{l}_i)$ is to the left of $\gamma(x^{h}_i)$, as otherwise the $y$-monotone curves $\lambda^j_{i+1}$ and $\lambda^k_{i+1}$ would cross, while $\Gamma_{G'}$ is planar. By induction, $x^{l}_i$ is to the left of $x^{h}_i$ in $X^*_i$. Hence $x^j_{i+1}$ precedes $x^k_{i+1}$ in $X^*_{i+1}$, as otherwise the $y$-monotone curves $\delta^j_{i+1}$ and $\delta^k_{i+1}$ would cross, while $\Gamma_{G'}$ is assumed to be planar.
			\item If $\gamma(x^{l}_i)= \gamma(x^{h}_i)$, then $\lambda^j_{i+1}$ and $\lambda^k_{i+1}$ represent parts of replacement edges of $G'$ whose sources coincide with $\gamma(x^{l}_i)= \gamma(x^{h}_i)$. Since $\gamma$ defines a bijection between $X^*_i$ and $Z^*_i$, we have $x^{l}_i= x^{h}_i$, hence $x^j_{i+1}$ and $x^k_{i+1}$ are edges of $H$ both incident to $x^{l}_i= x^{h}_i$. By construction, a vertex is inserted in $\Gamma_{H'}$ at the crossing point between $x^j_{i+1}$ and $\ell^*_{i+1}$, and a vertex is inserted in $\Gamma_{H'}$ at the crossing point between $x^k_{i+1}$ and $\ell^*_{i+1}$ (these vertices are inserted in case (ii) of the construction of $\langle G', H', \Gamma_{H'}\rangle$). Such vertices are in fact $\gamma(x^j_{i+1})$ and $\gamma(x^k_{i+1})$. Hence, the edges $x^j_{i+1}$ and $x^k_{i+1}$ are in this left-to-right order along $\ell^*_{i+1}$, thus $x^j_{i+1}$ precedes $x^k_{i+1}$ in $X^*_{i+1}$. 
		\end{itemize}
		
	It follows that, in any upward planar drawing $\Gamma_{G'}$ of $G'$ extending $\Gamma_{H'}$, the left-to-right order in which the elements of $M^*_{i+1}$ appear along $\ell^*_{i+1}$ coincides with the order $[\gamma(x^1_{i+1}),\gamma(x^2_{i+1}),\dots,\gamma(x^r_{i+1})]$ restricted to the elements~of~$M^*_{i+1}$.
		
	Now consider any maximal sequence $x^j_{i+1},x^{j+1}_{i+1}, \dots,x^l_{i+1}$ of consecutive elements of $X^*_{i+1}$ that are sources of $H$. Since $\gamma(x^j_{i+1}),\gamma(x^{j+1}_{i+1}), \dots,\gamma(x^l_{i+1})$ are vertices of $H'$, they appear in the order $\gamma(x^j_{i+1}),\gamma(x^{j+1}_{i+1}), \dots,\gamma(x^l_{i+1})$ along $\ell^*_{i+1}$. Hence, we only have to ensure that $\gamma(x^j_{i+1}),\gamma(x^{j+1}_{i+1}), \dots,\gamma(x^l_{i+1})$ fit between the suitable elements of $M^*_{i+1}$ along $\ell^*_{i+1}$. Assume that $x^{j-1}_{i+1}$ and $x^{l+1}_{i+1}$ both exist (the case in which $x^j_{i+1},x^{j+1}_{i+1}, \dots,x^l_{i+1}$ is an initial and/or final subsequence of $X^*_{i+1}$ is easier to handle). By the maximality of the sequence $x^j_{i+1},x^{j+1}_{i+1}, \dots,x^l_{i+1}$, we have that $\gamma(x^{j-1}_{i+1})$ and $\gamma(x^{l+1}_{i+1})$ belong to $M^*_{i+1}$, with $\gamma(x^{j-1}_{i+1})$ immediately preceding $\gamma(x^{l+1}_{i+1})$ along $\ell^*_{i+1}$. If $x^{j-1}_{i+1}$ is a vertex of $H$, then $\gamma(x^{j-1}_{i+1})$ is a vertex of $H'$, hence $\gamma(x^{j-1}_{i+1})$ precedes $\gamma(x^j_{i+1})$ in left-to-right order along $\ell^*_{i+1}$ in $\Gamma_{H'}$ and hence in $\Gamma_{G'}$. Otherwise, $x^{j-1}_{i+1}$ is an edge of $H$ crossing $\ell^*_{i+1}$; however, since $x^j_{i+1}$ is a vertex of $H$, a vertex has been inserted in $\Gamma_{H'}$ at the crossing point between $x^{j-1}_{i+1}$ and $\ell^*_{i+1}$ (this vertex is inserted in case (i) of the construction of $\langle G', H', \Gamma_{H'}\rangle$). In fact $\gamma(x^{j-1}_{i+1})$ is such a vertex. Then $\gamma(x^{j-1}_{i+1})$ precedes $\gamma(x^j_{i+1})$ in left-to-right order along $\ell^*_{i+1}$ in $\Gamma_{H'}$ and hence in $\Gamma_{G'}$. An analogous argument proves that $\gamma(x^l_{i+1})$ precedes $\gamma(x^{l+1}_{i+1})$ in left-to-right order along $\ell^*_{i+1}$ in $\Gamma_{H'}$ and hence in $\Gamma_{G'}$. The claim follows.
	\end{proof}
	
	\cref{cl:simplification-no-edges} implies that the desired modification of $\Gamma_{G}$, which ensures that the edges of $H$ are drawn as in $\Gamma_H$, can be performed ``locally'' to each strip $\mathcal S^*_i$ delimited by two consecutive lines $\ell^*_i$ and $\ell^*_{i+1}$. Then one can start from the intersection between $\Gamma_H$ and $\mathcal S^*_i$, which consists of a set of $y$-monotone curves connecting points on $\ell^*_i$ with points on $\ell^*_{i+1}$, and independently draw the part of $\Gamma_{G}$ inside each region of $\mathcal S^*_i$ defined by such~$y$-monotone~curves.

	This concludes the proof of the lemma.
\end{proof}

We next deal with instances of the {\sc UPE} and {\sc UPE-FUE} problems in which no two vertices have the same $y$-coordinate in the given partial drawing. 

\begin{restatable}{lemma}{LemmaNoVerticesSameY}\label{le:no-vertices-sameY}
	Let $\langle G, H, \Gamma_H \rangle$ be an instance of the {\sc UPE} or {\sc UPE-FUE} problem and let $n = |\langle G, H, \Gamma_H \rangle|$. There exists an equivalent instance $\langle G', H', \Gamma_{H'} \rangle$ of the {\sc UPE} or {\sc UPE-FUE} problem, respectively, such that: 
	\begin{enumerate}[\bf (i)]
		\item no two vertices of $H'$ share the same $y$-coordinate in $\Gamma_{H'}$ and
		\item if $V(H)=V(G)$, then $V(H')=V(G')$.
	\end{enumerate}
	Further, the instance $\langle G', H', \Gamma_{H'} \rangle$  has $O(n)$ size and can be constructed in $O(n \log {n})$ time. The graph $H'$ may contain edges even if $H$ does not.
\end{restatable}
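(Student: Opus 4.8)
The plan is to mirror the structure of the proof of \cref{le:no-edges}, but to dualize its goal: there, edges of $H$ were replaced by monotone paths at the cost of creating vertices with equal $y$-coordinates, whereas here I want to break every set of vertices sharing a $y$-coordinate at the cost of introducing edges into $H'$. As in that proof I would assume that $\Gamma_H$ is a straight-line drawing, examine the interesting $y$-coordinates $y^*_1,\dots,y^*_m$ in increasing order with a sweep, and leave every vertex of $H$ at its position in $\Gamma_H$ except within thin horizontal strips centered at the lines $\ell^*_i\colon y=y^*_i$ that carry more than one vertex.

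For each maximal set $v_1,\dots,v_k$ of vertices of $H$ lying on a common line $\ell^*\colon y=y^*$, taken in the left-to-right order in which they appear in $\Gamma_H$, the construction would replace the horizontal-alignment constraint by a small \emph{locking gadget}. Concretely, I would choose $\varepsilon$ so small that the strip $(y^*-\varepsilon,y^*+\varepsilon)$ contains no vertex of $H$ other than $v_1,\dots,v_k$ and is crossed only by the edges of $H$ that already crossed $\ell^*$; then I would perturb $v_1,\dots,v_k$ to pairwise-distinct $y$-coordinates inside this strip, thereby ensuring Property~(i), and add to both $G'$ and $H'$ a set of new, drawn (hence fixed) edges and subdivision vertices placed inside the strip whose role is to pin the perturbed vertices so that, in any extension, nothing of $G$ may nest vertically between two consecutive perturbed heights. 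Because every newly inserted vertex lies in both $V(H')$ and $V(G')$, Property~(ii) is preserved whenever $V(H)=V(G)$; and since each group contributes a number of new vertices and edges proportional to its size, the instance $\langle G',H',\Gamma_{H'}\rangle$ has $O(n)$ size.

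As in \cref{le:no-edges}, the heart of the argument is the equivalence of $\langle G,H,\Gamma_H\rangle$ and $\langle G',H',\Gamma_{H'}\rangle$, and one direction is routine while the other is delicate. For the easy direction, given an upward planar drawing $\Gamma_G$ extending $\Gamma_H$, I would thicken each line $\ell^*$ into its strip, slide $v_1,\dots,v_k$ to their perturbed heights, and route the gadget inside the strip so that the edges of $G$ crossing $\ell^*$ keep the same interleaving with $v_1,\dots,v_k$, obtaining an extension of $\Gamma_{H'}$. The hard direction is to rule out spurious extensions: here I would prove, by an induction over the strips analogous to \cref{cl:simplification-no-edges}, that the drawn gadget forces every extension of $\Gamma_{H'}$ to route the edges of $G$ through the strip exactly as they would cross the single line $\ell^*$, so that collapsing each strip back onto $\ell^*$ recovers an upward planar drawing of $G$ extending $\Gamma_H$.

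I expect this last step to be the main obstacle, and for a subtle reason that is sharpened by the requirement ``$H'$ may contain edges even if $H$ does not'': the gadget must simultaneously \emph{forbid} vertices and edges of $G\setminus H$ from being stacked between consecutive perturbed heights — which is precisely what a naive global perturbation (e.g.\ a small shear) fails to prevent, thereby creating extensions with no counterpart in the original instance — while still \emph{allowing} exactly those horizontal crossings of the strip that a legitimate extension needs, including the edgeless case in which no edge of $H$ pre-marks where such crossings may occur. Designing a gadget that threads this needle, and in the \mbox{\sc UPE-FUE} variant additionally reproduces the prescribed upward embedding (so that the order in which the gadget edges enter and leave each $v_i$ is consistent with $\mathcal S(\cdot)$ and $\mathcal P(\cdot)$), is where the real work lies. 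The $O(n\log n)$ running time, by contrast, should follow from essentially the same $y$-sorted sweep used in \cref{le:no-edges}.
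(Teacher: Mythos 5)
Your plan has a genuine gap, and you have located it yourself: the entire content of the lemma sits inside the ``locking gadget'' that you never construct. The reason you cannot easily construct it is that perturbing each co-level vertex $v_j$ to a distinct height $y^*+\delta_j$ destroys an invariant that no family of extra fixed edges can cheaply restore: in the original instance, two vertices on the same horizontal line are vertically incomparable, i.e.\ no monotone path of $G$ can connect them in any extension; after your perturbation, a monotone path from the lower perturbed vertex to the higher one becomes geometrically drawable, and a gadget that blocks \emph{all} such connections while still admitting every legitimate horizontal traversal of the strip is precisely the needle you admit you cannot thread. The paper avoids the needle entirely by a different local operation: instead of moving $v_j$, it \emph{splits} it into two vertices $v_j^{1},v_j^{2}$ joined by an edge $(v_j^{1},v_j^{2})$ that is placed in $H'$ and drawn in $\Gamma_{H'}$ as a fixed vertical segment $s_j$ centered at the original point of $v_j$, with the incoming edges of $v_j$ rerouted to $v_j^{1}$ and the outgoing ones to $v_j^{2}$; choosing the segment lengths pairwise distinct within the strip (length proportional to $j$) yields Property~(i) for free. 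This preserves incomparability automatically --- every incoming edge still ends strictly below $\ell^*_i$ and every outgoing edge still starts strictly above it, so no monotone path between two co-level vertices can appear --- hence no forbidding gadget is needed, and the upward-embedding bookkeeping for the \mbox{\sc UPE-FUE} variant reduces to the trivial updates $\mathcal S(v_j^{1})=[v_j^{2}]$, $\mathcal P(v_j^{2})=[v_j^{1}]$.

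Two further divergences from the paper are worth noting. First, the paper begins by invoking \cref{le:no-edges} to pass to an equivalent instance with $E(H^*)=\emptyset$, so the sweep never has to manage edges of $H$ crossing the strips; your proposal keeps those edges around, which complicates the strip analysis for no benefit. Second, you expect the hard direction of the equivalence to require an induction over strips analogous to \cref{cl:simplification-no-edges}; with the splitting construction no such global induction is needed, since both directions are purely local: from an extension $\Gamma_{G'}$ one contracts each fixed segment back to its midpoint via a small Minkowski-sum neighborhood $B_j$ of $s_j$ that meets $\Gamma_{G'}$ only in curves incident to $v_j^{1},v_j^{2}$, and from an extension $\Gamma_{G^*}$ one empties each strip by vertically scaling the drawing toward its boundary lines and then inserts the vertical segments, extending the former edges of $v_j$ along them. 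So the correct conclusion is not merely that your gadget is missing, but that the perturb-and-pin architecture itself is the wrong primitive here; vertex splitting into a drawn vertical edge is the missing idea.
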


\begin{proof}
	First, we apply~\cref{le:no-edges} in order to transform $\langle G, H, \Gamma_H \rangle$ into an equivalent instance $\langle G^*, H^*, \Gamma_{H^*} \rangle$ such that $H^*$ contains no edges and such that $V(H^*)=V(G^*)$ if $V(H)=V(G)$. The instance $\langle G^*, H^*, \Gamma_{H^*} \rangle$ has $O(n)$ size and can be constructed in $O(n \log n)$ time. Hence, it suffices to prove the statement of the lemma with  $\langle G^*, H^*, \Gamma_{H^*} \rangle$ in place of $\langle G, H, \Gamma_H \rangle$. 
	
	
	We define the desired instance $\langle G', H', \Gamma_{H'} \rangle$ as follows. Refer to~\cref{fig:no-two-vertices-sameY}. We initialize $\langle G', H', \Gamma_{H'} \rangle = \langle G^*, H^*, \Gamma_{H^*} \rangle$. We construct a list $\cal V$ of the vertices of $H^*$ ordered by increasing $y$-coordinates and, secondarily, by increasing $x$-coordinates in $\Gamma_{H^*}$. This is done in $O(n\log n)$ time. Among all the possible $y$-coordinates, we call {\em interesting} the ones of the vertices of $H^*$ in $\Gamma_{H^*}$. We examine the interesting $y$-coordinates in increasing order $y^*_1,\dots,y^*_m$. For any $y^*_i$, denote by $\ell^*_i$ the line with equation $y=y^*_i$; further, denote by $\mathcal V^*_i$ the restriction of $\cal V$ to the vertices on $\ell^*_i$. 
	
	\begin{figure}[htb]
		\centering
		\hfil
		\begin{subfigure}{.4\textwidth}
			\includegraphics[scale=0.7]{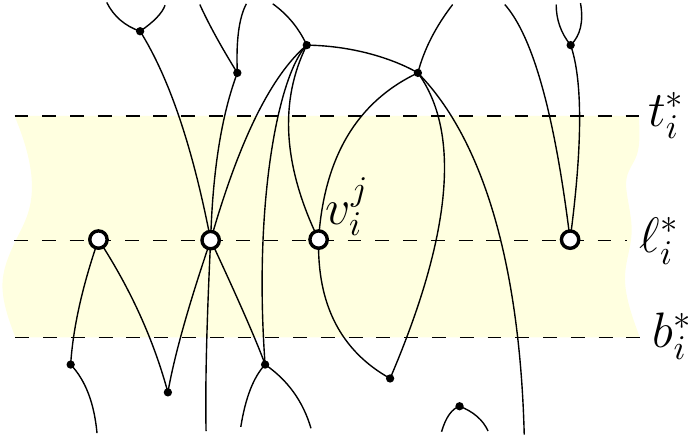}
			\label{fi:sameY1}
			\subcaption{}
		\end{subfigure}
		\begin{subfigure}{.4\textwidth}
			\includegraphics[scale=0.7]{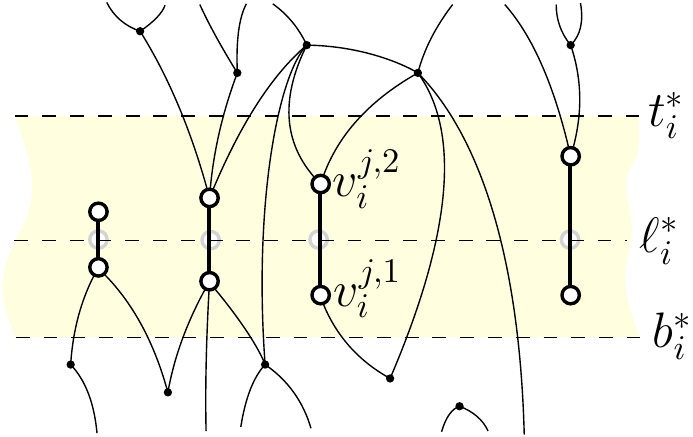}
			\label{fi:sameY2}
			\subcaption{}
		\end{subfigure}
		\caption{Illustration for the construction of $\langle G', H', \Gamma_{H'} \rangle$. The portions of $\Gamma_{H^*}$ (a) and of $\Gamma_{H'}$ (b) in the proximity of the strip $\mathcal S^*_i$ are shown; $\mathcal S^*_i$ is colored light yellow. The white disks represent vertices of $H^*$ and $H'$, while the black disks represent vertices in $V(G^*)\setminus V(H^*)$ and  $V(G')\setminus V(H')$.}
		\label{fig:no-two-vertices-sameY}
	\end{figure}

	We consider horizontal strips $\mathcal S^*_1,\dots,\mathcal S^*_m$ such that: (i) for $i=1,\dots,m$, the line $\ell^*_i$ is in the interior of $\mathcal S^*_i$, at equal distance from the horizontal lines $t^*_i$ and $b^*_i$ delimiting $\mathcal S^*_i$ from above and from below, respectively; and (ii) for $i=1,\dots,m-1$, the line $t^*_{i}$ lies below $b^*_{i+1}$. Observe that condition (ii) can be achieved by letting the strips $\mathcal S^*_1,\dots,\mathcal S^*_m$~be~sufficiently~thin. 
	
	We now replace each vertex in $\mathcal V^*_i$ by an edge. This is done as follows. Let $h^*_i$ be the height of $\mathcal S^*_i$. Recall that the vertices lying on $\ell^*_i$ are ordered in $\mathcal V^*_i$ by increasing $x$-coordinate; let $v_i^1,\dots,v_i^{|V^*_i|}$ be such an ordering. For $j=1,\dots,|V^*_i|$, we draw a vertical straight-line segment $s_i^j$ whose midpoint is $v_i^j$ and whose length is $j\cdot h^*_i/(3\cdot |V^*_i|)$. Note that $s_i^j$ lies entirely in $\mathcal S^*_i$ and crosses $\ell^*_i$. 
	
	We replace $v_i^j$ by two vertices $v_i^{j,1}$ and $v_i^{j,2}$ and by an edge $(v_i^{j,1},v_i^{j,2})$, both in $G'$ and in $H'$. The outgoing edges of $v_i^j$ in $G'$ now exit $v_i^{j,2}$ and the incoming edges of $v_i^j$ in $G'$ now enter $v_i^{j,1}$. If $G^*$ has a prescribed upward embedding, then the list of successors of the adjacent predecessors of $v_i^j$ and the list of predecessors of the adjacent successors of $v_i^j$ are updated by replacing $v_i^j$ with $v_i^{j,1}$ and $v_i^{j,2}$, respectively. Further, $\mathcal P(v_i^{j,1})$ coincides with $\mathcal P(v_i^j)$, while $\mathcal S(v_i^{j,1})=[v_i^{j,2}]$, and $\mathcal S(v_i^{j,2})$ coincides with $\mathcal S(v_i^j)$, while $\mathcal P(v_i^{j,2})=[v_i^{j,1}]$. The vertex $v_i^j$ is also removed from $\Gamma_{H'}$. Finally, the edge $(v_i^{j,1},v_i^{j,2})$ is represented in $\Gamma_{H'}$ by the segment $s_i^j$. This concludes the construction of $\langle G', H', \Gamma_{H'} \rangle$. 
	
	Vertices of $H'$ lying in different strips $\mathcal S^*_i$ and $\mathcal S^*_k$ do not share their $y$-coordinates in $\Gamma_{H'}$, as such strips are horizontal and disjoint. By construction vertices of $H'$ lying in the same strip $\mathcal S^*_i$ do not share their $y$-coordinates in $\Gamma_{H'}$. Thus, Property~(i) is satisfied. If $V(H^*)=V(G^*)$, then $V(H')=V(G')$, as every vertex in $H^*$ is replaced by two vertices belonging to $H'$. Hence Property~(ii) is also satisfied.
	
	The size of $\langle G', H', \Gamma_{H'} \rangle$ is linear in the size of $\langle G^*, H^*, \Gamma_{H^*} \rangle$ as at most two vertices and one edge are introduced in $G'$ for each vertex of $G^*$. Constructing the order $\mathcal V^*$ takes $O(n\log n)$ time. Every other step of the algorithm can be performed in $O(n)$ time, hence the total $O(n\log n)$ running time.
	
	We prove the equivalence between the instances $\langle G^*, H^*, \Gamma_{H^*} \rangle$ and $\langle G', H', \Gamma_{H'} \rangle$. 
	
	Suppose that an upward planar drawing $\Gamma_{G'}$ of $G'$ exists that extends $\Gamma_{H'}$. For each vertex $v_i^j$ of $G^*$ that has been replaced in $G'$ by two vertices $v_i^{j,1}$ and $v_i^{j,2}$ and by an edge $(v_i^{j,1},v_i^{j,2})$, we perform the following modification; see~\cref{fi:sameY3}. By construction and since $\Gamma_{G'}$ extends $\Gamma_{H'}$, the straight-line segment $(v_i^{j,1},v_i^{j,2})$ in $\Gamma_{G'}$ contains the point at which $v_i^j$ is placed in $\Gamma_{H^*}$. Then a sufficiently small value $\varepsilon>0$ can be defined so that the region $B_i^j$ obtained as the Minkowski sum of a ball with radius $\varepsilon$ with the straight-line segment $s_i^j$ intersects $\Gamma_{G'}$ only in a set of line segments incident to $v_i^{j,1}$ and $v_i^{j,2}$. For each vertex $v_i^j$ of $H^*$ we delete from $\Gamma_{G'}$ the interior of $B_i^j$, as in~\cref{fi:sameY4}, and we draw line segments connecting $v_i^j$ with the endpoints of the removed line segments on the boundary of $B_i^j$, as in~\cref{fi:sameY5}. This results in the desired drawing of $G^*$ extending $\Gamma_{H^*}$.
	
	\begin{figure}[htb]
		\centering
		\begin{subfigure}{.3\textwidth}
			\centering
			\includegraphics[scale=0.7]{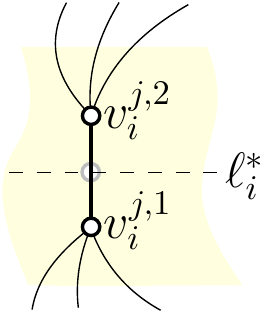}
			\subcaption{}\label{fi:sameY3}
		\end{subfigure}
		\begin{subfigure}{.3\textwidth}
			\centering
			\includegraphics[scale=0.7]{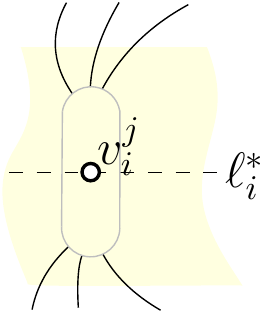}
			\subcaption{}\label{fi:sameY4}
		\end{subfigure}
		\begin{subfigure}{.3\textwidth}
			\centering
			\includegraphics[scale=0.7]{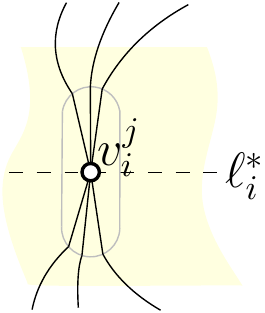}
			\subcaption{}\label{fi:sameY5}
		\end{subfigure}
		\caption{Construction of $\Gamma_{G^*}$ from $\Gamma_{G'}$. Modifications in the proximity of an edge $(v_i^{j,1},v_i^{j,2})$ of $G'$.}
		\label{fig:no-two-vertices-sameY-2}
	\end{figure}

	\begin{figure}[!b]
		\centering
		\begin{subfigure}{.3\textwidth}
			\centering
			\includegraphics[scale=0.7]{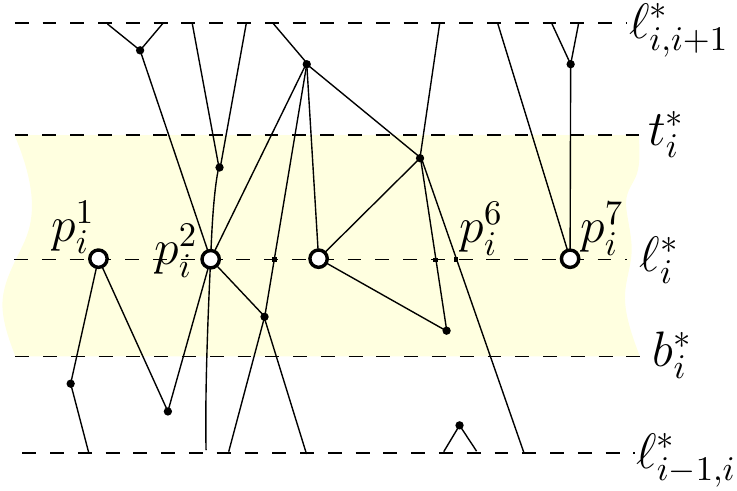}
			\subcaption{}
			\label{fi:sameY6}
		\end{subfigure}
		\hfil
		\begin{subfigure}{.3\textwidth}
			\centering
			\includegraphics[scale=0.7]{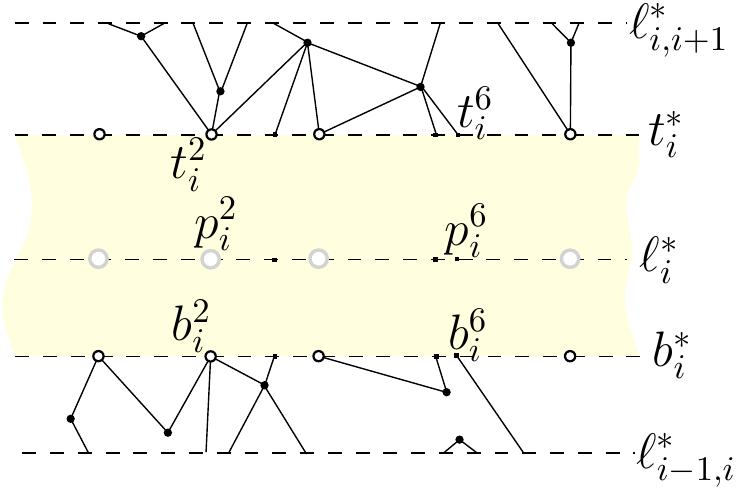}
			\subcaption{}
			\label{fi:sameY7}
		\end{subfigure}\\
		\begin{subfigure}{.3\textwidth}
			\centering
			\includegraphics[scale=0.7]{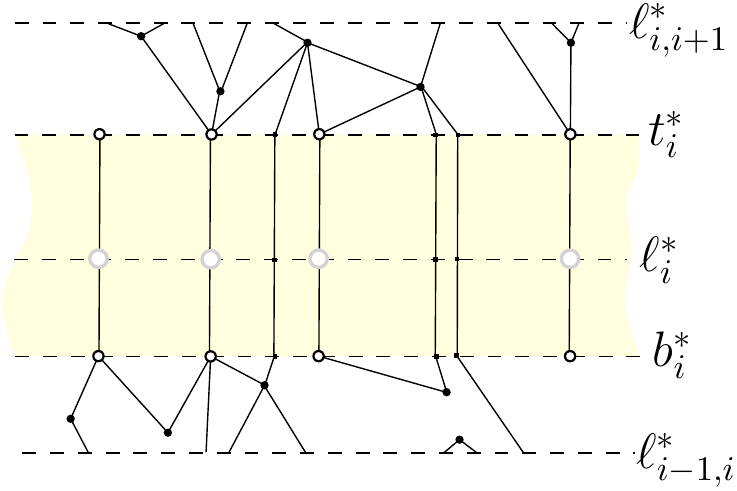}
			\subcaption{}
			\label{fi:sameY8}
		\end{subfigure}
		\hfil
		\begin{subfigure}{.3\textwidth}
			\centering
			\includegraphics[scale=0.7]{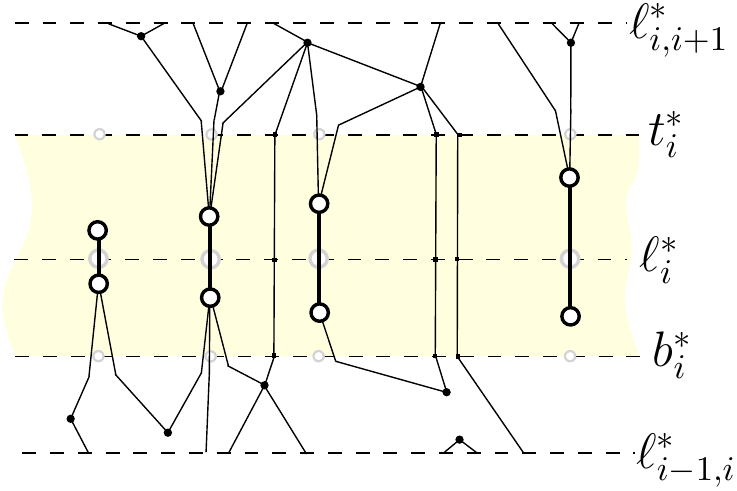}
			\subcaption{}
			\label{fi:sameY9}
		\end{subfigure}
		\caption{Construction of $\Gamma_{G'}$ from $\Gamma_{G^*}$. Modifications in the proximity of a strip $\mathcal S^*_i$, which is colored light yellow. (a) The drawing $\Gamma_{G^*}$ after the perturbation which moves vertices not in $H^*$ away from $\ell^*_{i}$. (b) Scaling down and translating the parts of $\Gamma_{G^*}$ between $\ell^*_{i}$ and $\ell^*_{i,i+1}$ and between $\ell^*_{i-1,i}$ and $\ell^*_{i}$. (c) Drawing the segments $|^h_i$. (d) Reconnecting the edges formerly incident to $v^j_i$ to the vertices $v^{j,1}_i$ and $v^{j,2}_i$.}
		\label{fig:no-two-vertices-sameY-3}
	\end{figure}
	Conversely, suppose that an upward planar drawing $\Gamma_{G^*}$ of $G^*$ exists that extends $\Gamma_{H^*}$.  For each $i=1,\dots,m$, we modify $\Gamma_{G^*}$ as follows. First, we vertically perturb the position of each vertex that lies on $\ell^*_{i}$ and that does not belong to $H^*$. If the perturbation is small enough, then the curves representing the edges incident to that vertex can also be suitably modified so that $\Gamma_{G^*}$ remains an upward planar drawing of $G^*$. Refer to~\cref{fi:sameY6} and denote by $p^1_i,\dots,p_i^{k(i)}$ the left-to-right order of the points of $\Gamma_{G^*}$ along $\ell^*_{i}$; note that each of such points might correspond to a vertex of $H^*$ or to a crossing point of an edge of $G^*$ with $\ell^*_{i}$. Let $\ell^*_{i,i+1}$ be a horizontal line above $t^*_i$ and below $b^*_{i+1}$. As in~\cref{fi:sameY7}, we vertically scale down the part of $\Gamma_{G^*}$ between $\ell^*_{i}$ and $\ell^*_{i,i+1}$ so that its vertical extension becomes equal to the distance between $t^*_{i}$ and $\ell^*_{i,i+1}$. Further, we translate the scaled part of the drawing so that it lies between $t^*_{i}$ and $\ell^*_{i,i+1}$; after doing so, the points $p^1_i,\dots,p_i^{k(i)}$ of $\Gamma_{G^*}$ on $\ell^*_{i}$ acquire corresponding points $t^1_i,\dots,t_i^{k(i)}$ on $t^*_{i}$. We perform an analogous modification to the part of $\Gamma_{G^*}$ comprised between $\ell^*_{i}$ and a horizontal line $\ell^*_{i-1,i}$ above $t^*_{i-1}$ and below $b^*_i$; denote by $b^1_i,\dots,b_i^{k(i)}$ the points on $b^*_{i}$ corresponding to $p^1_i,\dots,p_i^{k(i)}$, respectively. Now $\mathcal S^*_i$ does not contain any part of $\Gamma_{G^*}$ in its interior, except for the points $p^1_i,\dots,p^{k(i)}_i$. For every $h=1,\dots,k(i)$, we draw a vertical straight-line segment $|^h_i$ connecting $t^h_i$ with $b^h_i$, as in~\cref{fi:sameY8}; note that $|^h_i$ passes through $p^h_i$. If $p^h_i$ used to be a crossing point between an edge of $G^*$ and $\ell^*_{i}$ in $\Gamma_{G^*}$, then $|^h_i$ reconnects that edge; indeed, the latter became disconnected after scaling and translating parts of $\Gamma_{G^*}$. Otherwise, $p^h_i$ is the point in which a vertex $v_i^j$ was placed in $\Gamma_{G^*}$ and $\Gamma_{H^*}$. Then $s_i^j$ is a part of $|^h_i$; see~\cref{fi:sameY9}. We remove a small disk around $t^h_i$ from $\Gamma_{G^*}$ and extend the edges outgoing $v_i^j$ down along $|^h_i$ until they reach $v_i^{j,2}$. Analogously, we remove a small disk around $b^h_i$ from $\Gamma_{G^*}$ and extend the edges incoming $v_i^j$ up along $|^h_i$ until they reach $v_i^{j,1}$. This completes the construction of the desired drawing $\Gamma_{G'}$ of $G'$ extending $\Gamma_{H'}$.
\end{proof}





\section{Complexity of the {\sc UPE} and {\sc UPE-FUE} Problems} \label{se:complexity}

In this section we study the complexity of the {\sc UPE} and {\sc UPE-FUE} problems. We show the NP-hardness of such problems even for instances $\langle G, H, \Gamma_H \rangle$ in which $H$ contains no edges and $V(H)=V(G)$, and for instances $\langle G, H, \Gamma_H \rangle$ in which no two vertices share the same $y$-coordinate in $\Gamma_H$ and $V(H)=V(G)$. We will then show that the instances $\langle G, H, \Gamma_H \rangle$ in which $H$ contains no edges, no two vertices share the same $y$-coordinate in $\Gamma_H$, and $V(H)=V(G)$ can be solved in polynomial time. 

We start with the following.


\begin{lemma} \label{le:np}
	The {\sc UPE} and {\sc UPE-FUE} problems are in NP.
\end{lemma}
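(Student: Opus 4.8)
The plan is to exhibit a polynomial-size certificate for a yes-instance of the {\sc UPE} (respectively {\sc UPE-FUE}) problem and to argue that its validity can be checked in polynomial time. The natural certificate for a yes-instance $\langle G, H, \Gamma_H \rangle$ is an upward planar drawing $\Gamma_G$ of $G$ that extends $\Gamma_H$. The first obstacle is that a geometric drawing is a continuous object, so I cannot simply write down its vertex coordinates and edge curves and hope they have polynomial bit-complexity; I need a purely combinatorial object that faithfully captures an extending drawing. My approach is therefore to guess not the drawing itself but its combinatorial counterpart, namely an upward embedding of $G$ together with a \emph{consistent assignment of the relative vertical positions} of all vertices of $G$. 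Concretely, I would guess (i) the lists $\mathcal S(v)$ and $\mathcal P(v)$ for every vertex $v$ of $G$ (in the {\sc UPE-FUE} case these are already fixed by the prescribed upward embedding), and (ii) a total preorder on $V(G)$ recording, for every pair of vertices, whether one lies strictly below the other in $y$-coordinate or whether they share a $y$-coordinate. Both pieces of data have size polynomial in $|\langle G, H, \Gamma_H \rangle|$.

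The next step is to verify that such a combinatorial guess is \emph{realizable} by an upward planar drawing that extends $\Gamma_H$, and to do so in polynomial time. Here I would invoke the characterization of upward embeddings of Bertolazzi et al.~\cite{BertolazziBLM94} to check in polynomial time that the guessed lists $\mathcal S(v)$, $\mathcal P(v)$ actually constitute a valid upward embedding of $G$; this handles upwardness and planarity at the combinatorial level. The remaining, genuinely new requirement is \emph{extension}: the drawing must coincide with $\Gamma_H$ on $V(H)$ and $E(H)$. I would enforce this by checking that the guessed $y$-ordering is consistent with the $y$-coordinates the vertices of $H$ already have in $\Gamma_H$, and that the guessed rotation system restricted to $H$ agrees with the one read off from $\Gamma_H$. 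The heart of the verification is then a realizability argument: given a combinatorially valid upward embedding and a compatible vertical ordering that respects $\Gamma_H$, one can construct an actual upward planar drawing extending $\Gamma_H$, for instance by placing the vertices at $y$-coordinates consistent with the guessed preorder (refining $\Gamma_H$'s coordinates) and routing edges monotonically within the combinatorially prescribed faces.

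I expect the main obstacle to be precisely this realizability step, i.e., arguing that every combinatorially valid guess which is consistent with $\Gamma_H$ does lift to a genuine extending drawing, rather than just a planar (non-extending) one. The subtlety is that the fixed partial drawing $\Gamma_H$ constrains not only a rotation system but also the actual positions of edges of $H$, so the embedding and the vertical order alone might not suffice to guarantee that the edges of $H$ are drawn exactly as in $\Gamma_H$; one must be careful that the regions into which $\Gamma_H$ partitions the plane admit the required monotone routings of the edges of $G \setminus H$. A clean way to sidestep part of this difficulty is to first apply \cref{le:no-edges} and reduce to the case $E(H)=\emptyset$, so that the partial drawing only fixes a set of points; the certificate then needs only to respect those point positions together with the guessed embedding, and the routing of all edges of $G$ is free subject to planarity and upwardness. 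Since both the guessing and every verification step are polynomial, membership in NP follows for both problems.
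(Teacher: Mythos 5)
There is a genuine gap at the step you yourself flag as the heart of the matter: the realizability claim is false as stated, and reducing to $E(H)=\emptyset$ via \cref{le:no-edges} does not rescue it. Your certificate consists of an upward embedding plus a total preorder recording relative $y$-positions, and your checks against $\Gamma_H$ only compare $y$-orders (and, vacuously after the reduction, rotations of $H$). But the fixed positions of the vertices of $H$ also impose \emph{horizontal} constraints that neither the embedding nor the vertical preorder captures, and that your verification never examines. Concretely, take two vertices $u,v\in V(H)$ with $y(u)=y(v)$ and $x(u)<x(v)$ in $\Gamma_H$, in a graph whose (valid, realizable) upward embedding forces $u$ to lie to the right of $v$ in every drawing with that embedding --- this is exactly the situation ruled out by Condition~2 of \cref{le:characterization-upward-plane}. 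Your guessed embedding is combinatorially valid, your vertical preorder is consistent with the $y$-coordinates of $\Gamma_H$, so all your checks pass, yet no extending drawing exists: your verifier accepts a negative instance, i.e., the procedure is unsound. Symmetrically, for graphs that are not $st$-graphs (disconnected, multiple sources/sinks) the embedding does not even determine the left--right relation of all same-level pairs, so deciding whether the fixed positions \emph{can} be respected given your certificate is itself a nontrivial decision problem that you have not shown to be polynomial; it is dangerously close to the original problem.

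The paper's proof avoids exactly this by enriching the certificate with the missing horizontal data: besides the partition $V_1,\dots,V_k$ by $y$-coordinate (your preorder), it guesses, for each level, the full left-to-right order $\sigma_i$ of the vertices in $V_i$ \emph{together with the edges of $G$ crossing the horizontal line} $h_i$. With that in hand, verification really does reduce to local checks --- upwardness of each edge against the level order, consistency of $\sigma_i$ and $\sigma_{i+1}$ for edges spanning consecutive levels (which replaces your appeal to face-based routing for planarity), and agreement of $\sigma_i$ with the $x$-coordinates of the $H$-vertices and with the crossing points of the edges of $H$ --- and sufficiency is proved constructively strip by strip; notably this handles $E(H)\neq\emptyset$ directly, with no need for \cref{le:no-edges}. (Also note the paper does not guess the embedding at all; for {\sc UPE-FUE} it merely adds a check that $Y$ and $\sigma_1,\dots,\sigma_k$ induce the prescribed one.) To repair your argument you would have to add the per-level horizontal orders, including edge crossings, to your certificate and replace the realizability claim with the corresponding consistency checks --- at which point you have essentially reconstructed the paper's proof.
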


\begin{proof}
	Let $\langle G, H, \Gamma_H \rangle$ be an instance of the {\sc UPE} problem. Ideally, we would like to guess a solution for $\langle G, H, \Gamma_H \rangle$, that is, a drawing $\Gamma_G$ of $G$, and then to verify in polynomial time whether the guess actually is a solution, that is, whether $\Gamma_G$ is an upward planar drawing of $G$ extending $\Gamma_H$. However, since there is an infinite number of drawings of $G$, we cannot explicitly guess one of them. Hence, we proceed by associating to each drawing a combinatorial structure, in such a way that: (1) the set of distinct combinatorial structures the drawings of $G$ are associated to is finite; and (2) it is possible to test in polynomial time whether an upward planar drawing of $G$ extending $\Gamma_H$ exists by assuming that such a drawing is associated with a fixed combinatorial structure. This is done in the following.
	
	First, we guess the order by increasing $y$-coordinates of the vertices of $G$ in an upward planar drawing $\Gamma_G$ of $G$ extending $\Gamma_H$. This can be done by assigning a number in $\{1,2,\dots,n\}$ to each vertex of $G$. The assignment is, in general, not injective, i.e., the guess might result in several vertices having the same $y$-coordinate. Let $Y$ be the guessed order of the vertices of $G$. Denote by $V_1,\dots,V_k$ the subsets of $V(G)$ such that: (i) for any $u,v\in V_i$, the $y$-coordinates of $u$ and $v$ are the same in $Y$; and (ii) for any $u\in V_i$ and $v\in V_{i+1}$, the $y$-coordinate of $u$ precedes the one of $v$ in $Y$. Note that there are only finitely many such orders $Y$. 
	
	Having guessed $Y$, we now know whether the horizontal line $h_i$ through the vertices in $V_i$ crosses an edge $(u,v)$ in $\Gamma_G$. Indeed, this happens if and only if $u\in V_j$ and $v\in V_l$, with $j<i$ and $l>i$, or with $j>i$ and $l<i$. For each $i=1,\dots,k$, we independently guess a left-to order $\sigma_i$ of the vertices in $V_i$ together with the edges crossing $h_i$ in $\Gamma_G$. Again, there are only finitely many such orders.  
	
	We now show how to verify in polynomial time whether an upward planar drawing $\Gamma_G$ of $G$ exists that extends $\Gamma_H$ and that respects $Y$ and $\sigma_1,\dots,\sigma_k$. We perform the checks described below. If any check fails, then we conclude that there is no upward planar drawing of $G$ that extends $\Gamma_H$ and that respects $Y$ and $\sigma_1,\dots,\sigma_k$, otherwise we proceed to the next check. If all the checks succeed, then an upward planar drawing $\Gamma_G$ of $G$ exists that extends $\Gamma_H$ and that respects $Y$ and $\sigma_1,\dots,\sigma_k$. 
	
	\begin{enumerate} 
		\item For any edge $(u,v)$ of $G$ such that $u\in V_i$ and $v \in V_j$, we check whether $i<j$.
		\item We say that an edge $(u,v)$ of $G$, with $u\in V_j$ and $v\in V_l$, {\em spans} $V_{i}$ and $V_{i+1}$ if $j\leq i$ and $l\geq i+1$. For any two edges $e$ and $e'$ that span $V_{i}$ and $V_{i+1}$, we check whether the order of $e$ and $e'$ (or of their end-vertices) in $\sigma_i$ and $\sigma_{i+1}$ is the same.
		\item For any two vertices $u,v\in V(H)$ such that $u\in V_i$ and $v \in V_j$ with $i<j$, we check whether \mbox{$y(u)<y(v)$ in $\Gamma_H$.}
		\item For any two vertices $u,v \in V(H)\cap V_i$, we check whether $y(u)=y(v)$ in $\Gamma_H$.
		\item For any two vertices $u,v\in V(H)\cap V_i$ such that $y(u)=y(v)$ in $\Gamma_H$ and such that $u$ precedes $v$ in $\sigma_i$, we check whether $x(u)<x(v)$ in $\Gamma_H$.
		\item For any vertex $u\in V(H)\cap V_i$ and for any edge $e\in E(H)$ in $\sigma_i$, we check whether $e$ precedes $u$ in $\sigma_i$ if and and only if the crossing point between $e$ and the horizontal line through $u$ is \mbox{to the left of $u$ in $\Gamma_H$.}
		\item For any vertex $u$ in $V(H)\cap V_i$ and any two edges $e,e'\in E(H)$ in $\sigma_i$, we check whether $e$ precedes $e'$ in $\sigma_i$ if and only if the crossing point of $e$ with the horizontal line through $u$ is to the left of the crossing point of $e'$ with the horizontal line through $u$ in $\Gamma_H$. 
	\end{enumerate}
	
	If check 1.\ fails that any drawing of $G$ that respects $Y$ is not upward. If check 2.\ fails that any upward drawing of $G$ that respects $Y$, $\sigma_i$, and $\sigma_{i+1}$ is not planar. If checks 3.\ or 4.\ fail, then any upward drawing of $G$ that respects $Y$ does not extend $\Gamma_H$. If checks 5., 6., or 7. fail, then any drawing of $G$ that respects $Y$ and $\sigma_i$ does not extend $\Gamma_H$. 
	
	On the other hand, if all the checks succeed, then we can construct an upward planar drawing $\Gamma_G$ of $G$ that extends $\Gamma_H$ and that respects $Y$ and $\sigma_1,\dots,\sigma_k$ as follows. First, for $i=1,\dots,k$, we draw a horizontal line $h_i$ with equation $y=y_i$ on which the vertices in $V_i$ are going to be placed, where $y_1<y_2<\dots<y_k$. This is done so that $h_i$ passes through a vertex $u$ if $u\in V(H) \cap V_i$, which is possible since checks 3.\ and 4.\ succeed. For $i=1,\dots,k$, we place the vertices in $V_i$ that are not in $V(H)$ along $h_i$ and we fix the crossing points between $h_i$ and the edges of $G$ that are in $\sigma_i$ but not in $H$, so that the left-to-right order of the vertices of $V_i$ (including those in $V(H)$) and of the edges of $G$ (including those in $E(H)$) along $h_i$ is $\sigma_i$. This is possible because checks 5., 6., and 7.\ succeed. Each edge $(u,v)$ in $E(G)\setminus E(H)$ is now drawn as a sequence of line segments, each connecting points on two consecutive lines $h_i$ and $h_{i+1}$. Since check 2.\ succeeds, every such a line segment can be drawn in the horizontal strip delimited by $h_i$ and $h_{i+1}$ without introducing any crossings, thus the resulting drawing $\Gamma_G$ of $G$ is planar. Since check 1.\ succeeds, it follows that $\Gamma_G$ is also upward. The drawing $\Gamma_H$ has not been modified during the construction of $\Gamma_G$, hence $\Gamma_G$ extends $\Gamma_H$. By construction, for each $i=1,\dots,k-1$, the vertices in $V_i$ have a smaller $y$-coordinate than the vertices in $V_{i+1}$, hence $\Gamma_G$ respects $Y$. Finally, by construction, for each $i=1,\dots,k$, the left-to-right order of the vertices in $V_i$ and of the edges of $G$ crossing $h_i$ is $\sigma_i$, hence $\Gamma_G$ respects $\sigma_1,\dots,\sigma_k$.
	
	Since all the above checks can be easily performed in polynomial time, we have that {\sc UPE} is in NP. The membership in NP of the {\sc UPE-FUE} problem can be proved analogously. In particular, a check has to be introduced in the above algorithm in order to verify whether the orders $Y$ and $\sigma_1,\dots,\sigma_k$ define an upward embedding which is the one prescribed~in~the~instance.
\end{proof}

We now prove the NP-hardness results. The NP-hardness of the {\sc UPE} problem directly follows from the one of the {\sc Upward Planarity Testing} problem~\cite{GargT01}, which coincides with the special case of the {\sc UPE} problem in which the partial graph is the empty graph. However, we can establish the NP-hardness of the {\sc UPE} problem even in a much more constrained scenario. In order to do that, we relate the complexity of the {\sc UPE} problem to the one of a problem called {\sc Ordered Level Planarity}~\cite{KlemzR17} ({\sc OLP}, for short), which is defined as follows. 

A \emph{level graph} is a pair $(G,\ell)$, where $G$ is a directed graph and $\ell: V(G) \rightarrow \{1,\dots,k\}$ is a function that assigns each vertex to one of $k$ levels in such a way that $\ell(u) < \ell(v)$, for each $(u,v) \in E(G)$. We denote $V_i= \{v \in V(G): \ell(v)=i\}$.
A level graph is \emph{proper} if $\ell(v)-\ell(u)=1$, for each $(u,v) \in E(G)$.
A \emph{level drawing} of $(G,\ell)$ is a drawing of $G$ that maps each vertex to a point on the horizontal line with equation $y=\ell(v)$ and each edge $(u,v)$ to a $y$-monotone curve connecting the points corresponding to $u$ and $v$.
A level drawing is \emph{level planar} if it is crossing-free and a level graph is \emph{level planar} if it admits a level planar drawing.

An \emph{ordered level graph} is a triple $(G, \ell, \xi)$ where $(G,\ell)$ is a level graph and $\xi$ is a set of total orders  $\xi_i : V_i \rightarrow \{1, \dots, |V_i|\}$ of the vertices in $V_i$. Given an ordered level graph $(G, \ell, \xi)$, the {\sc OLP} problem asks whether there exists an \emph{ordered level planar drawing} of $(G, \ell, \xi)$, that is, a level planar drawing of $(G,\ell)$ in which, for every $v \in V_i$, the $x$-coordinate of $v$ is $\xi_i(v)$.
As observed by Klemz and Rote~\cite{KlemzR17}, for an ordered level graph $(G, \ell, \xi)$, the $y$- and $x$-coordinates assigned via $\ell$ and $\xi$, respectively, only serve the purpose of encoding a partial order among vertices of different levels and a total order among vertices of the same level, respectively. In particular, the {\sc OLP} problem does not change its complexity if $\xi$ and $\ell$ map to the reals. We exploit this fact in the proof of the following lemma.

\begin{lemma} \label{th:equivalence} 
The following statements hold true:	
	\begin{enumerate}[\bf (i)]
		\item Let $(G, \ell, \xi)$ be an instance of the {\sc OLP} problem, where $G$ is an $n$-vertex graph. It is possible to construct in $O(n)$ time an equivalent instance $\langle G,H,\Gamma_H \rangle$ of the {\sc UPE} problem with $H=(V(G),\emptyset)$ such that if $(G, \ell, \xi)$ contains at most $\lambda$ vertices belonging to the same level, then $\Gamma_H$ contains at most $\lambda$ vertices sharing the same $y$-coordinate.
		\item Let $\langle G,H,\Gamma_H \rangle$ be an instance of the {\sc UPE} problem, where $G$ is an $n$-vertex graph and $H=(V(G),\emptyset)$. It is possible to construct in $O(n \log n)$ time an equivalent instance $(G, \ell, \xi)$ of the {\sc OLP} problem such that if $\Gamma_H$ contains at most $\lambda$ vertices sharing the same $y$-coordinate, then $(G, \ell, \xi)$ contains at most $\lambda$ \mbox{vertices belonging to the same level.}
	\end{enumerate}
\end{lemma}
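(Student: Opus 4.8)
The plan is to exploit the fact that, once the partial graph $H=(V(G),\emptyset)$ is edgeless, the {\sc UPE} problem is literally a \emph{point-placement-plus-routing} problem of the same nature as {\sc OLP}: the positions of all vertices are prescribed by $\Gamma_H$ and one only has to route the edges of $G$ as $y$-monotone, pairwise non-crossing curves. The bridge between the two formulations is the observation of Klemz and Rote~\cite{KlemzR17} recalled above, namely that {\sc OLP} depends only on the partial order among vertices induced by the levels and on the total order within each level, and not on the specific values taken by $\ell$ and $\xi$; hence any realization of these orders by real coordinates is interchangeable.

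For statement (i) I would place each vertex $v$ of $G$ at the point $(\xi_{\ell(v)}(v),\ell(v))$ of the plane, set $H=(V(G),\emptyset)$, and let $\Gamma_H$ be this placement. This takes $O(n)$ time and puts exactly the vertices of a level $i$ on the horizontal line $y=i$, so the largest number of vertices sharing a $y$-coordinate equals the largest level size $\lambda$. With these coordinates the two problems coincide verbatim: an upward planar drawing of $G$ extending $\Gamma_H$ keeps every vertex at $(\xi_{\ell(v)}(v),\ell(v))$ and routes the edges upward and planarly, which is exactly an ordered level planar drawing of $(G,\ell,\xi)$, and vice versa. No deformation is needed here, since the chosen coordinates already realize $\ell$ and $\xi$.

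For statement (ii) I would read the orders off $\Gamma_H$. First I would discard the trivially infeasible cases: if $G$ has more than $3|V(G)|-6$ edges, or if some edge $(u,v)\in E(G)$ has $y(u)\ge y(v)$ in $\Gamma_H$, then no upward planar extension exists, and I would output a fixed unsatisfiable {\sc OLP} instance. Otherwise, I would sort the vertices by $y$-coordinate, assign the same level to vertices sharing a $y$-coordinate and strictly increasing levels to strictly increasing $y$-coordinates, so that $\ell(u)<\ell(v)$ holds for every edge (making $(G,\ell)$ a valid level graph and preserving the parameter $\lambda$), and within each level I would define $\xi_i$ by sorting the corresponding vertices by $x$-coordinate in $\Gamma_H$. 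The two sorts dominate, giving $O(n\log n)$ time.

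The equivalence for statement (ii) is where the Klemz--Rote observation does the real work, and this is the step I expect to require the most care. In one direction, an upward planar drawing $\Gamma_G$ extending $\Gamma_H$ is itself a real-coordinate level planar drawing in which the vertices of each level lie on a common horizontal line in the order $\xi_i$, hence an ordered level planar drawing of $(G,\ell,\xi)$. In the other direction, an ordered level planar drawing can, by coordinate-independence, be re-realized with its vertices placed exactly at their $\Gamma_H$ positions (which respect the level order and the within-level orders by construction), yielding an upward planar drawing of $G$ that extends $\Gamma_H$. The only subtlety to argue cleanly is that the $\Gamma_H$ coordinates are a legitimate realization of the orders encoded by $\ell$ and $\xi$ -- in particular that vertices sharing a level indeed share a $y$-coordinate in $\Gamma_H$ -- which holds by the way $\ell$ was defined.
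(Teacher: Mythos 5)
Your proposal is correct, and both reductions are exactly the paper's: in (i) you place each vertex at $(\xi_{\ell(v)}(v),\ell(v))$ with $H=(V(G),\emptyset)$ and observe the two problems coincide verbatim, and in (ii) you read $\ell$ off the sorted $y$-coordinates and $\xi_i$ off the sorted $x$-coordinates within each class, with the same $O(n)$ and $O(n\log n)$ bounds. The one divergence is in the equivalence argument for (ii): where you appeal to the coordinate-independence observation of Klemz and Rote~\cite{KlemzR17} to re-realize an ordered level planar drawing with the vertices at their $\Gamma_H$ positions, the paper instead carries out the deformation explicitly --- for each $i$ it records the left-to-right sequence $X^*_i$ in which the vertices of $V_i$ and the edges crossing the horizontal line through $V_i$ appear, places a corresponding point sequence $P_i$ on the target line (anchoring vertex points at the prescribed $x$-coordinates), and reroutes every edge as a chain of straight-line segments through its crossing points, in both directions of the equivalence. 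This explicit construction is in effect a self-contained proof of the very observation you cite, specialized to this setting; since the paper states that observation immediately before the lemma and announces that it is exploited in the proof, your shortcut is legitimate, though a referee wanting a self-contained argument would ask you to expand it into precisely the paper's $X^*_i$-based redrawing. Two minor points: your pre-check rejecting instances with an edge $(u,v)$ having $y(u)\ge y(v)$ in $\Gamma_H$ (needed so that $(G,\ell)$ satisfies $\ell(u)<\ell(v)$ and is a bona fide level graph) is a genuine, if small, gap that the paper leaves implicit, and you handle it cleanly; your $3|V(G)|-6$ edge-count test, by contrast, is unnecessary, since a non-planar $G$ simply yields negative instances on both sides of the reduction.
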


\begin{proof}
{\bf (i)} Given an ordered level graph $(G, \ell, \xi)$ with $k$ levels, where $G$ is an $n$-vertex graph, we construct an instance $\langle G,H,\Gamma_H \rangle$ of the {\sc UPE} problem as follows. First, $G$ coincides in the two instances. Second, we define $H=(V(G),\emptyset)$, as required. Third, for every $i \in \{1,\dots,k\}$ and for every $v\in V_i$, we define the position of $v$ in $\Gamma_H$ as $(x(v),y(v))=(\xi_i(v),\ell(v))$. This completes the construction of $\langle G,H,\Gamma_H \rangle$; this construction can be clearly performed in $O(n)$ time. By construction, if $(G, \ell, \xi)$ contains at most $\lambda$ vertices belonging to the same level, then $\Gamma_H$ contains at most $\lambda$ vertices sharing the same $y$-coordinate.

We claim that the instance $(G, \ell, \xi)$ of the {\sc OLP} problem and the instance $\langle G,H,\Gamma_H \rangle$ of the {\sc UPE} problem are equivalent. Specifically, we have that any ordered level planar drawing $\Gamma_o$ of $(G, \ell, \xi)$ is also an upward planar drawing of $G$ that extends $\Gamma_H$ and that any upward planar drawing $\Gamma_G$ of $G$ that extends $\Gamma_H$ is also an ordered level planar drawing of $(G, \ell, \xi)$. In fact, it holds that (i) each vertex of $G$ has the same coordinates both in $\Gamma_o$ and in $\Gamma_G$, and that (ii) the edges of $G$ are drawn as $y$-monotone curves both in $\Gamma_o$ and in $\Gamma_G$.

{\bf (ii)} Given an instance $\langle G,H,\Gamma_H \rangle$ of the {\sc UPE} problem, where $G$ is an $n$-vertex graph and $H=(V(G),\emptyset)$, we construct an ordered level graph $(G, \ell, \xi)$ as follows. First, the graph $G$ coincides in the two instances. Second, we define a partition $V_1,V_2,\dots,V_k$ of $V(G)$ such that, for each $i=1,\dots,k$, all the vertices in $V_i$ have the same $y$-coordinate in $\Gamma_H$, and such that, for each $i=1,\dots,k-1$, the vertices in $V_i$ have a $y$-coordinate smaller than the one of the vertices in $V_{i+1}$ in $\Gamma_H$. This partition can be constructed in $O(n)$ time after the vertices of $G$ have been ordered by their $y$-coordinates in $\Gamma_H$, which takes $O(n \log n)$ time. For each vertex $v$ of $G$, we now assign $\ell(v)=i$ if and only $v\in V_i$. Finally, for each $i=1,\dots,k$, we compute the order $\xi_i$ of the vertices in $V_i$ by increasing $x$-coordinates. Such an order can be constructed in $O(|V_i| \log |V_i|)$ time, hence in $O(n \log n)$ time over all sets $V_1,V_2,\dots,V_k$. We claim that the instance $\langle G,H,\Gamma_H \rangle$ of the {\sc UPE} problem and the instance $(G, \ell, \xi)$ of the {\sc OLP} problem are equivalent. 

Suppose that an upward planar drawing $\Gamma_G$ of $G$ exists that extends $\Gamma_H$. For each $i=1,\dots,k$, we draw a horizontal line $h_i$ through the vertices in $V_i$ in $\Gamma_G$; let $X^*_i=(x^i_1,x^i_2,\dots,x^i_{r_i})$ be the left-to-right order in which the vertices in $V_i$ and the edges of $G$ crossing $h_i$ appear along $h_i$ in $\Gamma_G$. We construct a level planar drawing $\Gamma_L$ of $(G, \ell, \xi)$ by placing, for each $i=1,\dots,k$, a sequence $P_i=(p^i_1,p^i_2,\dots,p^i_{r_i})$ of $r_i$ points along the line $y=i$, such that:

\begin{itemize}
	\item for each $j=1,\dots,r_i-1$, the $x$-coordinate of $p^i_j$ is smaller than the one of $p^i_{j+1}$; and
	\item for each $j=1,\dots,r_i$, if $x^i_j$ corresponds to a vertex $v$ of $G$, then the $x$-coordinate of $p^i_{j}$ \mbox{is equal to $\xi_i(v)$.}
\end{itemize} 

The level planar drawing $\Gamma_L$ of $(G, \ell, \xi)$ is completed by representing each edge $e=(u,v)$ of $G$ as a sequence of straight-line segments. Specifically, if $u\in V_i$ and $v\in V_l$, then $e$ starts at the point $p^i_{f(i,u)}$ of $P_i$ such that $x^i_{f(i,u)}$ corresponds to $u$, then proceeds with a straight-line segment to the point $p^{i+1}_{f(i+1,e)}$ of $P_{i+1}$ such that $x^{i+1}_{f(i+1,e)}$ corresponds to $e$, then proceeds with a straight-line segment to the point $p^{i+2}_{f(i+2,e)}$ of $P_{i+2}$ such that $x^{i+2}_{f(i+2,e)}$ corresponds to $e$, and so on until reaching the point $p^l_{f(l,v)}$ of $P_l$ such that $x^l_{f(l,v)}$ corresponds to $v$.  

The proof that an upward planar drawing $\Gamma_G$ of $G$ can be constructed from a level planar drawing $\Gamma_L$ of $(G, \ell, \xi)$ is analogous. In particular, $X^*_i$ now represents the left-to-right order in which the vertices of $V_i$ and the edges of $G$ crossing the line $y=i$ appear along such a line. This order is used so to define a sequence of points along the horizontal line through the vertices of $V_i$ in $\Gamma_H$. Such points are used to represent each edge as a sequence of straight-line segments.
\end{proof}
	
The first NP-hardness result is a direct consequence of reduction (i) from~\cref{th:equivalence} and of the NP-completeness of {\sc Ordered Level Planarity}~\cite{KlemzR17}.

\begin{theorem} \label{th:UPE-and-UPE-FE-hard}
	The {\sc UPE} problem is NP-complete even if 
	\begin{inparaenum}[(i)]
		\item the partial graph contains all the vertices and no edges and
		\item no three vertices share the same $y$-coordinate in the partial drawing.
	\end{inparaenum}
\end{theorem}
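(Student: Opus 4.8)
The plan is to combine the two ingredients the paper has already assembled: the reduction (i) from \cref{th:equivalence}, which turns an instance $(G,\ell,\xi)$ of the {\sc OLP} problem into an equivalent instance $\langle G,H,\Gamma_H\rangle$ of the {\sc UPE} problem with $H=(V(G),\emptyset)$, and the fact (cited from \cite{KlemzR17}) that {\sc Ordered Level Planarity} is NP-complete even under a suitable sparsity restriction on the levels. Membership in NP is already established by \cref{le:np}, so only NP-hardness remains to be shown, and the whole argument is a single reduction together with a tracking of the structural parameter.

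First I would recall the precise form of the hardness of {\sc OLP} proved by Klemz and Rote: they show that {\sc OLP} is NP-hard already when no three vertices share the same level, i.e.\ when each level $V_i$ satisfies $|V_i|\le 2$. I would then invoke reduction (i) of \cref{th:equivalence} with $\lambda=3$. That reduction produces, in $O(n)$ time, an instance $\langle G,H,\Gamma_H\rangle$ of {\sc UPE} in which $H=(V(G),\emptyset)$—so condition (i) of the theorem, that the partial graph contains all vertices and no edges, holds by construction—and in which, because the original {\sc OLP} instance has at most $\lambda=3$ vertices per level (in fact at most $2$, certainly fewer than $3$), the partial drawing $\Gamma_H$ has at most $3$ vertices per $y$-coordinate. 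More carefully, since no three vertices share a level in the {\sc OLP} instance, no three vertices share a $y$-coordinate in $\Gamma_H$, which is exactly condition (ii) of the theorem.

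The equivalence of the two instances is precisely what reduction (i) guarantees: an ordered level planar drawing of $(G,\ell,\xi)$ is literally an upward planar drawing of $G$ extending $\Gamma_H$, and conversely, because in both settings the vertices occupy the prescribed coordinates and every edge is drawn as a $y$-monotone curve. Hence a polynomial-time algorithm for {\sc UPE} restricted to instances satisfying (i) and (ii) would solve the NP-hard restriction of {\sc OLP}, giving the claimed hardness. Combined with \cref{le:np} this yields NP-completeness.

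The only point requiring care—and the place I expect the main (minor) obstacle—is matching the sparsity bounds across the reduction: one must check that the ``no three vertices on a level'' restriction under which {\sc OLP} is NP-hard transfers to the ``no three vertices with the same $y$-coordinate'' conclusion, i.e.\ that reduction (i) preserves the parameter $\lambda$ exactly. This is guaranteed by the explicit clause in \cref{th:equivalence}(i) that if $(G,\ell,\xi)$ has at most $\lambda$ vertices per level then $\Gamma_H$ has at most $\lambda$ vertices per $y$-coordinate; so the argument goes through verbatim with $\lambda=3$, and no new technical work beyond citing \cite{KlemzR17} and \cref{th:equivalence} is needed.
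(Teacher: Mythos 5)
Your proposal is correct and matches the paper's proof essentially verbatim: NP-membership via \cref{le:np}, Klemz and Rote's hardness of {\sc OLP} with at most two vertices per level, and reduction (i) of \cref{th:equivalence} transferring the level-sparsity bound to the $y$-coordinate bound. The only cosmetic difference is that you invoke the parameter as $\lambda=3$ where $\lambda=2$ is the natural choice, but as you note yourself this makes no difference to the argument.
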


\begin{proof}
	The membership of {\sc UPE} in NP follows from~\cref{le:np}. Klemz and Rote proved the NP-hardness of the {\sc OLP} problem even for ordered level graphs $(G, \ell, \xi)$ such that $G$ consists of a set of disjoint paths and $\ell$ assigns at most $2$ vertices to any level~\cite{KlemzR17}. Thus, by applying reduction (i) of~\cref{th:equivalence} to an instance $(G, \ell, \xi)$ of the {\sc OLP} problem with the above properties, we obtain in linear time an equivalent instance $\langle G,H,\Gamma_H \rangle$ of the {\sc UPE} problem such that $H=(V(G),\emptyset)$ and such that no three vertices share the same $y$-coordinate in $\Gamma_H$. This completes the proof.
\end{proof}

\cref{le:no-vertices-sameY}, together with~\cref{th:UPE-and-UPE-FE-hard}, implies the following.

\begin{corollary} \label{cor:UPE-and-UPE-FE-hard}
	The {\sc UPE} problem is NP-complete even if 
	\begin{inparaenum}[(i)]
		\item the partial graph contains all the vertices and
		\item no two vertices share the same $y$-coordinate in the partial drawing.
	\end{inparaenum}
\end{corollary}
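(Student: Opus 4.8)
The plan is to derive the corollary immediately by composing the two results that precede it: the NP-completeness of the restricted {\sc UPE} problem established in \cref{th:UPE-and-UPE-FE-hard} and the ``no two vertices share a $y$-coordinate'' simplification of \cref{le:no-vertices-sameY}. Membership in NP has already been shown for the full {\sc UPE} problem in \cref{le:np}, and this clearly carries over to any restriction of it; hence the only thing that remains is to prove NP-hardness for instances satisfying conditions (i) and (ii) of the statement.

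For the NP-hardness I would start from an instance $\langle G, H, \Gamma_H \rangle$ of the {\sc UPE} problem of the type produced by \cref{th:UPE-and-UPE-FE-hard}: such instances satisfy $V(H)=V(G)$ and $E(H)=\emptyset$, no three vertices share the same $y$-coordinate in $\Gamma_H$, and deciding them is NP-hard. Applying \cref{le:no-vertices-sameY} to $\langle G, H, \Gamma_H \rangle$ yields, in $O(n\log n)$ time, an equivalent instance $\langle G', H', \Gamma_{H'} \rangle$. By Property~(i) of \cref{le:no-vertices-sameY}, no two vertices of $H'$ share the same $y$-coordinate in $\Gamma_{H'}$, which is exactly condition~(ii) of the corollary. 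Moreover, since the starting instance has $V(H)=V(G)$, Property~(ii) of \cref{le:no-vertices-sameY} guarantees $V(H')=V(G')$, which is condition~(i) of the corollary. Because the transformation is equivalence-preserving and runs in polynomial time, this is a valid polynomial-time reduction from an NP-hard problem, and together with membership in NP it gives NP-completeness.

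The one point I would flag as deserving care — and the only conceptual obstacle in an otherwise routine argument — is that \cref{le:no-vertices-sameY} may introduce edges into $H'$ even though the starting $H$ is edgeless, since each vertex lying on a common horizontal line is split into two vertices joined by a short vertical edge. This is harmless in the present setting precisely because the corollary, unlike \cref{th:UPE-and-UPE-FE-hard}, no longer asserts that the partial graph is edgeless: it requires only that the partial graph contain all vertices and that no two vertices share a $y$-coordinate. Thus the loss of the edgeless property is exactly the price paid to obtain the stronger ``no two vertices'' separation, and both properties demanded by the corollary are preserved by the transformation.
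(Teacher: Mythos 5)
Your proposal is correct and matches the paper's argument exactly: the paper derives this corollary in one line by combining \cref{th:UPE-and-UPE-FE-hard} with \cref{le:no-vertices-sameY}, which is precisely your reduction, including the observation that Property~(ii) of the lemma preserves $V(H)=V(G)$ while membership in NP comes from \cref{le:np}. Your remark that the transformation may introduce edges into $H'$ — and that this is harmless because the corollary, unlike the theorem, does not require an edgeless partial graph — is exactly the right subtlety to flag.
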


We now discuss the complexity of the {\sc UPE-FUE} problem, in which the input graph comes with a prescribed upward embedding which the required drawing has to respect. We establish the NP-hardness of the {\sc UPE-FUE} problem via a reduction from the {\sc Partial Level Planarity} (for short, {\sc PLP}) problem, recently introduced by Br\"uckner and Rutter~\cite{DBLP:conf/soda/BrucknerR17}. This reduction is quite involved and requires a new analysis of some results in~\cite{DBLP:conf/soda/BrucknerR17}.

Given a $4$-tuple $(G,\ell,H,\Gamma_H)$ where $(G,\ell)$ is a level planar graph, $H$ is a subgraph of $G$, and $\Gamma_H$ is a level planar drawing of $(H,\ell)$, the {\sc PLP} problem asks whether a level planar drawing $\Gamma_G$ of $(G,\ell)$ exists that coincides with $\Gamma_H$ when restricted to the vertices and edges of $H$. The instance $(G,\ell,H,\Gamma_H)$ is \emph{proper} if $(G,\ell)$ is a proper level graph.

A proper instance $(G=(V,E),\ell,H,\Gamma_H)$ of the {\sc PLP} problem can also be represented as a triple $(G,\ell,\prec)$, where $\prec = \{\prec_1,\dots,\prec_k\}$ is a set of total orders $\prec_i$ of $V_i \cap V(H)$ such that, for any two vertices $u,v \in V_i \cap V(H)$, we have that $u \prec_i v$ if and only if $u$ precedes $v$ along the horizontal line $\ell_i$ with equation $y=i$. Then, $(G,\ell)$ admits a level planar drawing that extends $\Gamma_H$ if and only if $(G,\ell,\prec)$ admits a level planar drawing in which the order of the vertices in $V_i$ along $\ell_i$ is a linear extension of $\prec_i$, for $i=1,\dots,k$.

Br\"uckner and Rutter~\cite{DBLP:conf/soda/BrucknerR17} proved that the {\sc PLP} problem is NP-complete even for instances that are proper and connected. Since the reduction they present does not exploit changes of the planar embedding in the produced instances, this allows them to further augment such instances to subdivisions of triconnected graphs, which have a unique planar embedding (up to a flip). Therefore, the {\sc PLP} problem is NP-complete even for proper instances in which the planar embedding of the level~graph~is~prescribed. 

\begin{figure}[htb]
	\centering
	{\includegraphics[scale=.85]{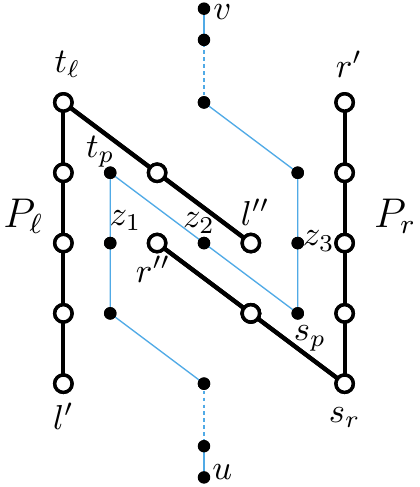}}
	\caption{A socket gadget (thick black edges) and a plug gadget (thin blue edges).}\label{fig:plug-socket}
\end{figure}

We claim that the hardness result of Br\"uckner and Rutter holds even when the upward embedding of the level graph is prescribed. 
The basic building blocks of their reduction are two gadgets: The \emph{plug} and the \emph{socket} gadgets, which are illustrated in~\cref{fig:plug-socket}.
The \emph{plug} gadget is simply a directed path $P=(u,\dots,z_1,t_p,z_2,s_p,z_3\dots,v)$ consisting of three monotone paths $(u,\dots,z_1,t_p)$, $(t_p,z_2,s_p)$, and $(s_p,z_3,\dots,v)$ that traverse several levels of the level graph. The vertices of the plugs are not subject to any ordering constraints, that is, there exists no ordering $\prec_i$ in $\prec$ such that $a \prec_i b$ or $b \prec_i a$ and $a,b \in V(P)$.
The \emph{socket} gadget consists of a \emph{left-boundary path} $P_\ell = (l',\dots, t_\ell,\dots, l'')$ consisting of two monotone paths $(l',\dots,t_\ell)$ and $(l'',\dots,t_\ell)$, and of a \emph{right-boundary path} $P_r = (r',\dots,s_r,\dots,r'')$, consisting of two monotone paths $(s_r,\dots,r')$ and $(s_r,\dots,r'')$, all traversing several levels of the level graph. The vertices in $V(P_\ell) \cup V(P_r)$ are totally ordered by the functions $\prec_i$, for each level $i$ traversed by such paths; in particular, the left-to-right order of the predecessors of $t_\ell$ and the left-to-right order of the successors of $s_r$ is prescribed. The crucial property exploited throughout the constructions presented in~\cite{DBLP:conf/soda/BrucknerR17} is that at most one plug can be placed between the left- and the right-boundary paths of a socket, in every level-planar drawing $\Gamma$ of the input level graph that realizes the prescribed orders of the vertices on each level. Our \cref{cl:PLP-hard} below then follows by observing that in $\Gamma$ 
the edges $(z_1,t_p)$ and $(z_2,t_p)$ appear in this left-to-right order and similarly for the edges $(s_p,z_2)$ and $(s_p,z_3)$. We thus have the following.

\begin{myclaim}\label{cl:PLP-hard}
	The {\sc PLP} problem is NP-complete even for proper instances $(G,\ell,\prec)$ such that $G$ is connected and has a prescribed upward embedding.
\end{myclaim}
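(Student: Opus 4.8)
The plan is to promote Br\"uckner and Rutter's hardness for proper connected instances with a prescribed \emph{planar} embedding to one with a prescribed \emph{upward} embedding, by exhibiting, for each such instance, a prescribed upward embedding whose realizability is equivalent to the existence of a level-planar extension. The key structural fact recalled in the preliminaries is that a planar embedding already determines $\mathcal{S}(v)$ and $\mathcal{P}(v)$ at every vertex $v$ that is neither a source nor a sink. Hence, once the planar embedding is fixed (which it is, since the instances are subdivisions of triconnected graphs), the upward embedding is undetermined only at the sources and the sinks, where the single remaining degree of freedom is the rotational position of the edges with respect to the empty downward direction. It therefore suffices to fix a consistent rotation at every source and every sink and to argue that this fixing neither creates nor destroys solutions.

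First I would classify the sources and sinks of the constructed level graph. The degree-one endpoints of the monotone paths (such as $l',l'',r',r''$ and the plug endpoints $u,v$) carry a single edge, so their upward embedding is trivially determined. The subdivision vertices introduced by the augmentation to a triconnected graph have exactly one incoming and one outgoing edge, so they are neither sources nor sinks and are handled by the planar embedding. The socket tips $t_\ell$ and $s_r$ are a sink and a source of degree larger than one, but the orders $\prec$ totally order the predecessors of $t_\ell$ and the successors of $s_r$ on their respective levels; hence $\mathcal{P}(t_\ell)$ and $\mathcal{S}(s_r)$ are prescribed directly by $\prec$. The only remaining sources and sinks of degree larger than one are the plug tips $t_p$ and $s_p$, whose incident vertices are explicitly left unconstrained by $\prec$.

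For the plug tips I would invoke the blocking property: in every level-planar drawing $\Gamma$ realizing the prescribed orders, at most one plug lies between the left-boundary path $P_\ell$ and the right-boundary path $P_r$ of a socket, and the plug's path $(u,\dots,z_1,t_p,z_2,s_p,z_3,\dots,v)$ is a simple up--down--up curve with unique peak $t_p$ and unique valley $s_p$. Such a curve admits only two chiralities, which are mirror images and yield the opposite orders $[z_1,z_2]$ versus $[z_2,z_1]$ at $t_p$ and $[z_2,z_3]$ versus $[z_3,z_2]$ at $s_p$; the requirement that the plug separate $P_\ell$ (to its left) from $P_r$ (to its right) selects exactly one of them, so that the edges $(z_1,t_p),(z_2,t_p)$ appear in this left-to-right order at $t_p$ and the edges $(s_p,z_2),(s_p,z_3)$ in this left-to-right order at $s_p$ in every such $\Gamma$. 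I expect this chirality-forcing step to be the main obstacle, as it is the only place where the argument must descend to the geometry of the gadgets rather than relying on the planar embedding or on $\prec$.

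Finally, I would define the prescribed upward embedding $\mathcal{E}$ by combining the rotations fixed above: the one induced by the unique planar embedding at the non-source, non-sink vertices; the trivial ones at degree-one vertices; those prescribed by $\prec$ at $t_\ell$ and $s_r$; and the forced ones $\mathcal{P}(t_p)=[z_1,z_2]$ and $\mathcal{S}(s_p)=[z_2,z_3]$ at the plug tips. Since every level-planar drawing that extends $\Gamma_H$ and realizes $\prec$ has upward embedding exactly $\mathcal{E}$, adding the constraint of respecting $\mathcal{E}$ neither removes nor introduces solutions; thus the original instance is a yes-instance of {\sc PLP} if and only if the augmented instance is a yes-instance with the prescribed upward embedding $\mathcal{E}$, and $\mathcal{E}$ is clearly computable in polynomial time. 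As the transformation preserves properness and connectivity, and membership in NP is routine (guess a combinatorial description of the drawing and verify it, as in \cref{le:np}), NP-completeness of {\sc PLP} for proper connected instances with a prescribed upward embedding follows.
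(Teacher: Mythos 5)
Your proposal is correct and follows essentially the same route as the paper: you exploit the fact that the planar embedding (unique, up to flip, for the triconnected augmentation of Br\"uckner and Rutter's instances) determines the rotation at all non-source, non-sink vertices, that $\prec$ fixes $\mathcal P(t_\ell)$ and $\mathcal S(s_r)$ at the socket tips, and that the blocking property forces $\mathcal P(t_p)=[z_1,z_2]$ and $\mathcal S(s_p)=[z_2,z_3]$ at the plug tips in every valid drawing, which is exactly the observation the paper uses to conclude \cref{cl:PLP-hard}. Your explicit chirality argument at the plug tips merely spells out the step the paper states in one line, so the two proofs coincide in substance.
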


We exploit \cref{cl:PLP-hard} in order to prove the following.
\begin{theorem}\label{th:UPE-FUE-hard}
	The {\sc UPE-FUE} problem is NP-complete. This result holds even if 
	\begin{inparaenum}[(i)]
		\item the instance is connected and
		\item the partial graph contains all the vertices and no edges.
	\end{inparaenum}
\end{theorem}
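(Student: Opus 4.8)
The plan is to establish membership in NP via \cref{le:np} and to prove NP-hardness by a polynomial-time reduction from the restricted version of {\sc PLP} furnished by \cref{cl:PLP-hard}, namely proper, connected instances $(G,\ell,\prec)$ equipped with a prescribed upward embedding. Given such an instance, I would build a {\sc UPE-FUE} instance $\langle G', H', \Gamma_{H'}\rangle$ in which $V(H')=V(G')$, $E(H')=\emptyset$, $G'$ is connected, and the prescribed upward embedding of $G'$ is inherited from that of $G$. The guiding principle is a dictionary between the two models: a vertex of $G$ whose horizontal position is constrained by $\prec$ corresponds to a vertex of $H'$ placed at a fixed point of $\Gamma_{H'}$ (its $y$-coordinate encoding its level and its $x$-coordinate encoding its rank in $\prec$), whereas a vertex of $G$ that is \emph{free} in the {\sc PLP} drawing corresponds to the crossing of an edge of $G'$ with a level line. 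This is sound because, once every vertex is pinned in $\Gamma_{H'}$ and no edge lies in $H'$, the only remaining freedom of a {\sc UPE-FUE} solution is the routing of the edges of $G'$ as $y$-monotone curves, and the sole combinatorial content of such a routing is the left-to-right order in which edges cross each level line---exactly the role played by the free vertices in a level-planar drawing.

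The construction itself proceeds in two stages. First, I would realize the socket gadgets: since their boundary paths belong to the partial graph of the {\sc PLP} instance and are therefore drawn, I would place all their vertices at the corresponding fixed points of $\Gamma_{H'}$ and then invoke \cref{le:no-edges} to reroute each socket edge through a monotone path of fixed vertices, thereby achieving $E(H')=\emptyset$ while keeping the socket rigid and maintaining $V(H')=V(G')$. Second, I would realize the plug gadgets so that their free pass-through vertices are replaced by edges of $G'$ that are absent from $H'$ and hence free to cross the intermediate level lines at arbitrary $x$-coordinates; the two switch vertices $t_p$ and $s_p$ of each plug, being local extrema, cannot be simulated by edge crossings and must instead be pinned as vertices of $H'$. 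The delicate point is to position these pinned switches on dedicated levels that carry no ordering constraint, so that pinning them does not destroy the horizontal freedom a plug needs in order to select its socket; the entire choice is thus shifted onto the routing of the plug's incident edges.

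The equivalence proof would then mirror the analysis underlying \cref{cl:PLP-hard}. In the forward direction, a level-planar drawing of the {\sc PLP} instance extending its partial drawing yields, level by level, a left-to-right order of vertices and edges that I would transcribe into a valid routing of the edges of $G'$ respecting the fixed positions and the prescribed rotation system; the resulting curves are $y$-monotone and pairwise non-crossing by construction. In the reverse direction, from a {\sc UPE-FUE} solution I would read off, on each level line, the order of fixed vertices and edge crossings and recover a level-planar drawing of $(G,\ell,\prec)$, using the crucial ``at most one plug per socket'' property, which I would re-derive in the edge-routing encoding.

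The main obstacle is precisely this last transfer: proving that the socket's capacity property survives the passage from free plug vertices to freely routed plug edges. I expect the technical heart of the argument to be a planarity/monotonicity analysis showing that the $y$-monotone edges realizing a plug can thread the fixed boundary of a socket in essentially one way, so that two distinct plugs cannot both traverse the same socket without forcing a crossing, exactly as two plugs cannot share a socket in the original reduction. Care is also needed to verify that fixing the upward embedding of $G'$---forced on us by the {\sc UPE-FUE} formulation and inherited from \cref{cl:PLP-hard}---neither rules out any intended routing nor admits a spurious one, and that the pinned switch vertices together with the monotone paths produced by \cref{le:no-edges} keep $G'$ connected.
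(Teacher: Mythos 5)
Your overall frame (membership via \cref{le:np}, hardness from the restricted {\sc PLP} instances of \cref{cl:PLP-hard}) matches the paper, but your encoding step has a genuine gap, and it is exactly at the point you flag as delicate: the plug switches $t_p$ and $s_p$. In a {\sc UPE-FUE} instance with $V(H')=V(G')$, ``pinning'' a vertex means assigning it actual coordinates in $\Gamma_{H'}$. Placing the switches on a level with no $\prec$-constraints does not help: the socket boundary paths are themselves pinned (and remain rigid after applying \cref{le:no-edges}), so whether the fold of a plug around its pinned $t_p$ lies between the left- and right-boundary paths of a given socket is decided by the fixed coordinates at construction time, not by the drawing. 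The plug can then never choose among sockets, so the reduction encodes no decision at all. You also cannot smooth $t_p$ and $s_p$ into edge crossings (they are a sink and a source, so no single $y$-monotone curve replaces them), and you cannot leave them out of $H'$ (condition (ii) of the statement). Separately, your plan requires re-deriving the ``at most one plug per socket'' property inside the edge-routing model, i.e., redoing the correctness analysis of the Br\"uckner--Rutter gadgets, which is a substantial additional obligation your sketch does not discharge.

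The paper resolves precisely this tension with a uniform vertex-splitting trick that treats all vertices alike, extrema included: every vertex $v$ of $G$ becomes a pair $v^b,v^a$ joined by an edge $(v^b,v^a)$ of $U$ that is \emph{not} in $H$, with the original edges rerouted so that $(u,v)$ becomes $(u^a,v^b)$. Both $v^b$ and $v^a$ are pinned, but for unconstrained vertices they are placed at nested, staggered heights around the line $L_i:=y=2i-1$, so the free edge $(v^b,v^a)$ may cross $L_i$ at an arbitrary $x$-coordinate; this crossing point plays the role of $v$'s position in the level drawing. \cref{cl:order-vertices} shows that any linear extension of $\prec_i$ is realizable by such crossings, and the equivalence proof is then a generic translation between level-planar drawings of $(G,\ell,\prec)$ and upward planar extensions of $\Gamma_H$ --- no gadget-level re-analysis is needed, since the plug/socket combinatorics is encapsulated once and for all in \cref{cl:PLP-hard}. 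Your proposal would be repaired by adopting this splitting for every vertex rather than distinguishing constrained, free, and switch vertices.
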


\begin{proof}
	The membership of the {\sc UPE-FUE} problem in NP follows from~\cref{le:np}.
	
	Let $(G,\ell,\prec)$ be a proper instance of the {\sc PLP} problem such that $(G,\ell)$ is a connected level graph with a prescribed upward embedding. We show how to construct in polynomial time an equivalent instance $\langle U,H,\Gamma_H\rangle$ of the {\sc UPE-FUE} problem satisfying properties (i) and (ii) of the statement. Refer to~\cref{fig:hardness}.

	\begin{figure}[htb]
		\centering
		{\includegraphics[page=4,width=\columnwidth]{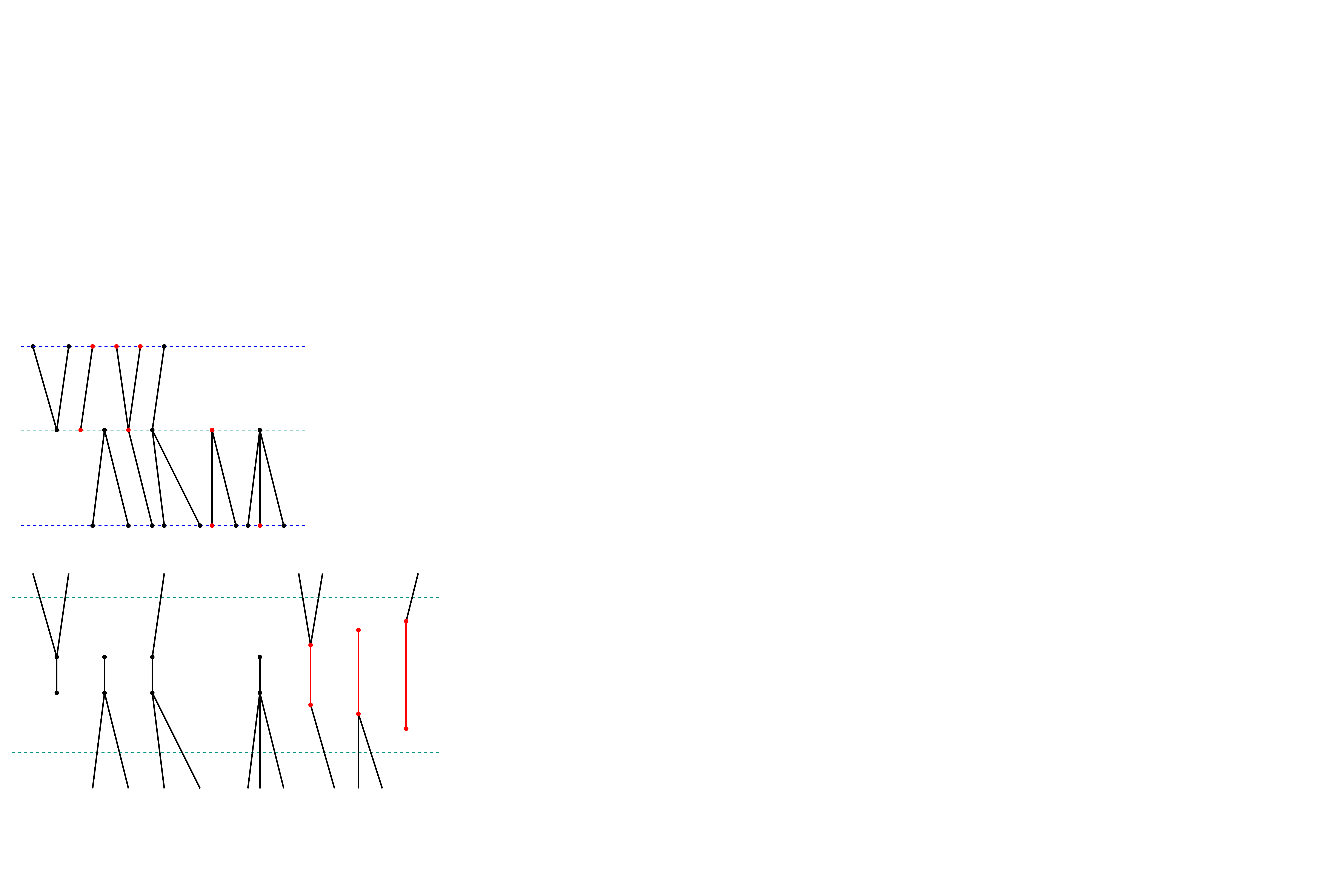}}
		\caption{(left) An instance $(G,\ell,\prec)$ of {\sc PLP} with a prescribed upward embedding. 
			(center) An instance $(U,H,\Gamma_H)$ of {\sc UPE-FUE} equivalent to $(G,\ell,\prec)$ constructed using the reduction of~\cref{th:UPE-FUE-hard}.
			(right) A drawing $\Gamma_U$ of $U$ that extends $\Gamma_H$, focused on the subgraph of $U$ stemming from the vertices of the third level of $(G,\ell)$ and from their incident edges.
		}\label{fig:hardness}
	\end{figure}
	
	The graph $U$ is defined as follows. For each vertex $v$ of $G$, the graph $U$ contains vertices $v^a$ and $v^b$ and an edge $(v^b,v^a)$. Further, for each edge $(u,v)$ in $G$, the edge set of $U$ contains an edge $(u^a,v^b)$. The graph $U$ is connected, since $G$ is connected, which implies~property~(i). 
	%
	We set the upward embedding of $U$ as follows. 
	For each vertex $v$ of $G$ such that $\mathcal P(v)=(u_1,u_2,\dots,u_k)$ and $\mathcal S(v)=(w_1,w_2,\dots,w_l)$, we set $\mathcal P(v^b)=[u^a_1,u^a_2,\dots,u^a_k]$, $\mathcal S(v^b)=[v^a]$, $\mathcal P(v^a)=[v^b]$, and $\mathcal S(v^a)=[w^b_1,w^b_2,\dots,w^b_l]$.

	The graph $H$ is defined as $H=(V(U),\emptyset)$, which implies property (ii).
	
	Finally, the drawing $\Gamma_H$ of $H$ is defined as follows. Since $H$ contains no edges, in order to construct $\Gamma_H$ we only need to assign a position to each vertex of $H$.
	
	For $i=1,\dots,k$, let $F_i \subseteq V_i$ be the subset of vertices of level $i$ of $(G,\ell)$ whose left-to-right order is fixed by $\prec_i$, and let $v_{i,1} \prec_i v_{i,2} \prec_i  \dots \prec_i v_{i,r_i}$ be the vertices of $F_i$. Also, let $x_{i,1},\dots,x_{i,d_i}$ be the vertices in $V_i \setminus F_i$ (whose left-to-right order is not prescribed by $\prec_i$).
	Let $F = \max^k_{i=1} |V_i \setminus F_i|$ and $\alpha = \frac{1}{F+2}$. 
	
	First, for $i=1,\dots,k$ and for $j=1,\dots,r_i$, we place the vertex $v^a_{i,j}$ at the point $(j, 2i-1 + \alpha)$ and the vertex $v^b_{i,j}$ at the point $(j, 2i-1 - \alpha)$.
	Second, for $q = 1,\dots,d_i$, we place the vertex $x^a_{i,q}$ at the point $(r_i + q, 2i-1 + (q+1)\alpha)$ and the vertex $x^b_{i,q}$ at the point $(r_i + q, 2i - 1 - (q+1)\alpha)$. 
	Observe that all the vertices of $H$ stemming from vertices in $V_i$ lie above the horizontal line $Q_{i-1} := y = 2i-2$ and below the horizontal line $Q_i:= y = 2i$.
	This concludes the construction of the instance $(U,H,\Gamma_H)$, which can be carried out in polynomial time in the size of $(G,\ell,\prec)$.

	We now prove that $(G,\ell,\prec)$ and $(U,H,\Gamma_H)$ are equivalent.

	For the first direction, suppose that the instance $(U,H,\Gamma_H)$ of the {\sc UPE-FUE} problem is positive, that is, there exists a drawing $\Gamma_U$ of $U$ that extends $\Gamma_H$ and respects its prescribed upward embedding. We show how to construct a level planar drawing $\Gamma_G$ of $(G,\ell)$ that respects its prescribed  upward embedding and such that the order of the vertices in $V_i$ along the horizontal line $L_i:= y = 2i-1$ is a linear extension of $\prec_i$, for $i=1,\dots,k$.
	
	We construct $\Gamma_G$ as follows; refer to~\cref{fi:UtoG}. For $i=1,\dots,k$ and for each vertex $v \in V_i$, let $q(v)$ be the crossing point between the edge $(v^b,v^a)$ and $L_i$ in $\Gamma_U$. Observe that such a crossing point exists since $v^b$ and $v^a$ lie below and above $L_i$, respectively, in $\Gamma_H$ (and thus in $\Gamma_U$), by construction. We place the vertex $v$ at $q(v)$ in~$\Gamma_G$. We modify the drawings of the edges of $U$ incident to $v^a$ and $v^b$ as follows. We delete the parts of $\Gamma_U$ inside disks with radius $\varepsilon>0$ centered at $v^a$ and $v^b$; if $\varepsilon$ is sufficiently small, only $v^a$, $v^b$, and parts of their incident edges are deleted from $\Gamma_U$. We extend the edges that used to be incident to $v^a$ and $v^b$  downwards and upwards, respectively, following the drawing of the edge $(v^b,v^a)$ in $\Gamma_U$ 	sufficiently close to it, until they reach $q(v)$. Finally, we delete the rest of the drawing of the edge $(v^b,v^a)$. Repeating this process for every vertex $v$ of $G$ results in a drawing $\Gamma_G$ of $G$.


	\begin{figure}[tb]
		\centering
		{\includegraphics[page=5,scale =0.4]{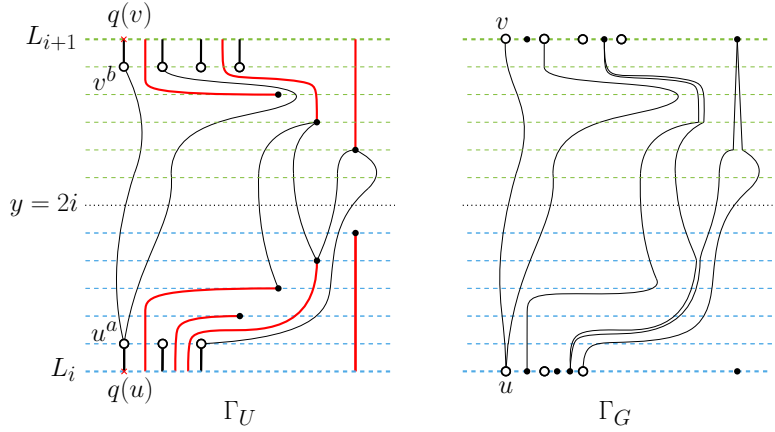}}
		\caption{Construction of the curves representing the edges of $G$ directed from level $i$ to level $i+1$.}\label{fi:UtoG}
	\end{figure}

%
	
	We have that $\Gamma_G$ is a level drawing. Namely, each edge $(u,v)$ of $G$ is represented in $\Gamma_G$ by a $y$-monotone curve, given that the edge $(u^a,v^b)$ of $U$ is represented in $\Gamma_U$ by a $y$-monotone curve and given that the parts of the representation of $(u,v)$ in $\Gamma_G$ that do not coincide with the representation of $(u^a,v^b)$ follow the $y$-monotone curves representing the edges $(u^b,u^a)$ and $(v^b,v^a)$. Further, by construction, for $i=1,\dots,k$, the vertices of the level $V_i$ lie on the horizontal line $L_i$ with equation $y=2i-1$. Hence, after a change of the coordinate system which maps any point $(x,y)\in \mathbb R^2$ to the point $(x,\frac{y+1}{2})$, we have that  $\Gamma_G$ is a level drawing. The drawing $\Gamma_G$ is also planar, given that $\Gamma_U$ is planar and given that, for each edge $(u,v)$ of $G$, the parts of the representation of $(u,v)$ in $\Gamma_G$ that do not coincide with the representation of $(u^a,v^b)$ in $\Gamma_U$ follow the crossing-free curves representing the edges $(u^b,u^a)$ and $(v^b,v^a)$.

	
	We now show that $\Gamma_G$ respects the prescribed upward embedding of $G$. Consider any vertex $v$ of $G$ and let $\mathcal S(v)=[w_1,w_2,\dots,w_l]$. By construction, the left-to-right order in $\Gamma_G$ of the edges outgoing at $v$ is $(v,w_1),(v,w_2),\dots,(v,w_l)$ if and only if the left-to-right order in $\Gamma_U$ of the edges outgoing at $v^a$ is $(v^a,w^b_1),(v^a,w^b_2),\dots,(v^a,w^b_l)$. Further, the left-to-right order in $\Gamma_U$ of the edges outgoing at $v^a$ is indeed $(v^a,w^b_1),(v^a,w^b_2),\dots,(v^a,w^b_l)$, given that $\mathcal S(v^a)=[w^b_1,w^b_2,\dots,w^b_l]$ and given that $\Gamma_U$ respects the prescribed upward embedding of $G$. Analogously, the left-to-right order of the edges incoming at each vertex $v$ of $G$ in $\Gamma_G$ corresponds to the prescribed order of the adjacent \mbox{predecessors of $v$.}

%
%
	Finally, we show that the ordering of the vertices in $V_i$ along $L_i$ is a linear extension of $\prec_i$, for $i=1,\dots,k$. Recall that each vertex $v \in V_i$ has been placed at the point $q(v)$ in $\Gamma_G$. Hence, it suffices to show that $q(v_{i,1}),q(v_{i,2}),\dots,q(v_{i,r_i})$ appear in this left-to-right order along $L_i$ in $\Gamma_U$, which follows from the fact that the vertices $v^a_{i,1},v^a_{i,2},\dots,v^a_{i,r_i}$ appear in this left-to-right order along the horizontal line $y= 2i-1 + \alpha$, that the vertices $v^b_{i,1},v^b_{i,2},\dots,v^b_{i,r_i}$ appear in this left-to-right order along the horizontal line $y= 2i-1 - \alpha$, and that the edges $(v^a_{i,j},v^b_{i,j})$ are drawn as non-crossing $y$-monotone curves in $\Gamma_U$. 
	
	
	For the second direction, suppose now that $(G,\ell,\prec)$ is a positive instance of the {\sc PLP} problem, that is, there exists a level planar drawing $\Gamma_G$ of $(G,\ell)$ that respects its prescribed upward embedding and such that the left-to-right order of the vertices in $V_i$ along the horizontal line $y=i$ is a linear extension of $\prec_i$, for $i=1,\dots,k$. We show how to construct an upward planar drawing $\Gamma_U$ of $U$ that extends $\Gamma_H$ and that respects its prescribed upward embedding.
	
	First, we change the coordinate system in $\Gamma_G$ so to map any point $(x,y)\in \mathbb R^2$ to the point $(x,2i-1)$. Now the vertices of $V_i$ lie on the horizontal line $L_i:= y = 2i-1$. For $i=1,\dots,k$ and for each vertex $v \in V_i$, let $q(v)$ be the point along $L_i$ the vertex $v$ lies upon in $\Gamma_G$. For $i=1,\dots,k$, let $T_i$ be the left-to-right order of the vertices in $V_i$ as they appear along $L_i$ in $\Gamma_G$. 
	The proof is based on the following claim.
	
	\begin{myclaim}\label{cl:order-vertices}
	There exists a drawing $\Gamma'$ of the edges $(v^b,v^a)$ of $U$ such that $\Gamma'$ extends $\Gamma_H$ and such that each edge $(v^b,v^a)$ intersects $L_i$ at~$q(v)$. 
	\end{myclaim}

\begin{figure}[htb]
	\centering
	{\includegraphics[page=6,height=.3\columnwidth]{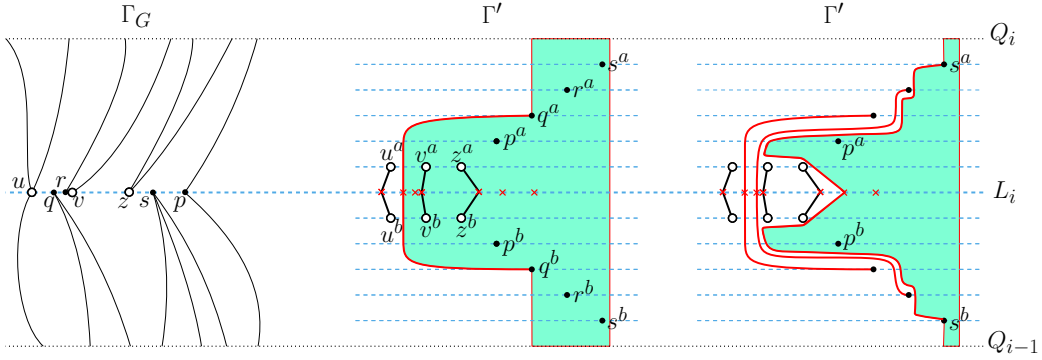}}
	\caption{Illustration for the proof of \cref{cl:order-vertices}. (left) A level planar drawing $\Gamma_G$ of $G$ focused on level $i$. (middle) and (right) $\Gamma'$ after drawing the edges $(q^b,q^a)$ and $(s^b,s^a)$, respectively. The corresponding regions $R$ are colored.}\label{fi:GtoU}
\end{figure}

	\begin{proof}
		First, we initialize $\Gamma'=\Gamma_H$. Then, for each $i=1,\dots,k$ and for each vertex $v$ in $F_i$, we draw the edge $(v^b,v^a)$ as a $y$-monotone curve composed of the straight-line segments $v^b q(v)$ and $q(v)v^a$. We prove that $\Gamma'$ is crossing-free. First, for any two vertices $u\in F_i$ and $v\in F_j$ with $i\neq j$, the edges $(u^b,u^a)$ and $(v^b,v^a)$ are vertically disjoint. Second, for any two vertices $u,v\in F_i$, the straight-line segments $u^b q(u)$ and $q(v)v^a$ are vertically disjoint, as well as the straight-line segments $v^b q(v)$ and $q(u)u^a$. Finally, the straight-line segments $u^b q(u)$ and $v^b q(v)$ do not cross since $q(u)$ and $q(v)$ have the same $y$-coordinate, since $u^b$ and $v^b$ have the same $y$-coordinate, and since the $x$-coordinate of $q(u)$ is smaller than the one of $q(v)$ if and only if the $x$-coordinate of $u^b$ is smaller than the one of $v^b$; similarly, the straight-line segments $u^a q(u)$ and $v^a q(v)$ do not cross.   
		
		In order to complete $\Gamma'$, it remains to draw each edge $(v^b,v^a)$, with $v \in V_i \setminus F_i$. To this aim, we process the vertices in $V_i \setminus F_i$ in the order they appear in $T_i$. When we process a vertex $v$, we draw the edge $(v^b,v^a)$ as a $y$-monotone curve passing trough $q(v)$, without introducing any crossings in $\Gamma'$; this is done so to maintain the invariant that in $\Gamma'$ there exists a region $R$ that satisfies the following properties (refer to~\cref{fi:GtoU}): 
		
		\begin{enumerate} [(i)]
			\item $R$ is delimited from below and from above by $Q_{i-1}$ and $Q_i$, respectively, and from the left and from the right by $y$-monotone curves $\lambda_R$ and $\rho_R$ that extend from $Q_{i-1}$ to $Q_i$, respectively;
			\item for every unprocessed vertex $u$, the region $R$ contains in its interior $u^b$, $u^a$, and $q(u)$; and
			\item the region $R$ does not contain in its interior any part of an edge $(u^b,u^a)$, for any processed vertex $u\in V_i\setminus F_i$.
		\end{enumerate}
		
		Let $v$ be the currently processed vertex. The edge $(v^b,v^a)$ is drawn as a $y$-monotone curve $\ell(v)$ lying inside $R$ and composed of eight parts. The first part $\ell_1(v)$ of $\ell(v)$ starts from $v^b$ and approaches $\lambda_R$ by moving left while slightly and monotonically increasing in the $y$-coordinate. The second part $\ell_2(v)$ of $\ell(v)$ follows $\lambda_R$, slightly to the right of it, until reaching a point slightly lower than the line $y=2i-1-\alpha$. The third part $\ell_3(v)$ of $\ell(v)$ approaches a point from which $q(v)$ is visible while slightly and monotonically increasing in the $y$-coordinate. The fourth part $\ell_4(v)$ of $\ell(v)$ is a straight-line segment reaching $q(v)$. The fifth part $\ell_5(v)$, the sixth part $\ell_6(v)$, the seventh part $\ell_7(v)$, and the eighth part $\ell_8(v)$ are symmetric to $\ell_4(v)$, $\ell_3(v)$, $\ell_2(v)$, and $\ell_1(v)$, respectively. Property~(i) of $R$ ensures that $\ell(v)$ can be drawn as a $y$-monotone curve. Properties~(ii) and~(iii) ensure that this can be done \mbox{without introducing crossings in $\Gamma'$.} 
	
		The new region $R$ is delimited from the right by the same $y$-monotone curve $\rho_R$ as the old region $R$; the new $y$-monotone curve $\lambda_R$ delimiting $R$ from the left is composed of $\ell(v)$ and of the straight-line vertical segments connecting $v^b$ and $v^a$ with $Q_{i-1}$ and $Q_i$, respectively. The new region $R$ satisfies Properties~(i)--(iii); in particular, for every unprocessed vertex $u$, the new region $R$ contains in its interior $u^b$ and $u^a$, provided that $\ell_2(v)$ and $\ell_7(v)$ are sufficiently close to the $y$-monotone curve $\lambda_R$ delimiting the old region $R$, and contains in its interior $q(u)$, given that $q(v)$ is to the left of $q(u)$, by the definition of $T_i$. This concludes the proof of the claim.	
\end{proof}

	For $i=1,\dots,k-1$, denote by $t'_i(e)$ the intersection point of an edge $e$ of $G$ directed from a vertex of the level $V_i$ to a vertex of the level $V_{i+1}$ with the line $Q_i:= y=2i$ in $\Gamma_G$.
	For each $i=1,\dots,k-1$, consider the left-to-right order $W_i = t'_i(e_1),t'_i(e_2),\dots$ of the intersection points of the edges directed from the vertices of the level $V_i$ to the vertices of the level $V_{i+1}$ with the line $Q_i$ in $\Gamma_G$. The
	The sequence $W_i$ has the following properties:
	\begin{enumerate}[(1)]
		\item For any vertex $v \in V_i$, the intersection points of the edges exiting $v$ with the line $Q_i$ are consecutive in $W_i$. This follows from the fact that $\Gamma_G$ is a level planar drawing.
		\item Let $e_{v,1},e_{v,2},\dots,e_{v,h}$ be the left-to-right order of the edges exiting $v$ as they appear in the upward embedding of $G$; then $t'_i(e_{v,1}), t'_i(e_{v,2}),\dots,t'_i(e_{v,h})$ appear in this left-to-right order in $W_i$. This is due to the fact that $\Gamma_G$ respects the upward embedding of $G$.
		\item For any two vertices $u,v \in V_i$ such that $u \prec_i v$, the intersection points of the edges exiting $u$ precede the intersection points of the edges exiting $v$ in $W_i$. This is due to the fact that $\Gamma_G$ is a level planar drawing and that the left-to-right order of the vertices in $V_i$ in $\Gamma_G$ along $L_i$ is a linear extension of $\prec_i$.
	\end{enumerate}
	
	We obtain a drawing $\Gamma_U$ of $U$ from $\Gamma'$ by drawing, for each edge $e=(u,v)$ of $G$ where $u \in V_i$ and $v \in V_{i+1}$, the edge $(u^a,v^b)$ of $U$ as a $y$-monotone curve between $u^a$ and $v^b$ passing trough $t'_i(e)$. As in the previous direction, such curves can be drawn so that no two of them intersect, except possibly at common endpoints, by processing the edges directed from the vertices in $V_i$ to the vertices in $V_{i+1}$ in the left-to-right order in which they cross $Q_i$ (that is, according to the sequence $W_i$ of their crossing points with such a line). We have that $\Gamma_U$ is a drawing of $U$ that extends $\Gamma_H$. Further, \cref{cl:order-vertices} and Properties (1) and (3) guarantee that $\Gamma_U$ is upward planar. Finally, Property (2) guarantees that $\Gamma_U$ respects the upward embedding of $U$. This concludes the proof of the theorem.
\end{proof}

\cref{le:no-vertices-sameY}, together with~\cref{th:UPE-FUE-hard}, implies the following.

\begin{corollary} \label{cor:UPE-FUE-hard}
	The {\sc UPE-FUE} problem is NP-complete even if 
	\begin{inparaenum}[(i)]
		\item the partial graph contains all the vertices and
		\item no two vertices share the same $y$-coordinate in the partial drawing.
	\end{inparaenum}
\end{corollary}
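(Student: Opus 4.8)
The plan is to obtain the result by composing the hardness construction of \cref{th:UPE-FUE-hard} with the simplification of \cref{le:no-vertices-sameY}. Membership in NP is immediate from \cref{le:np}, so the only task is to establish NP-hardness of the \textsc{UPE-FUE} problem under the two stated restrictions, namely that the partial graph contains all the vertices and that no two vertices share the same $y$-coordinate in the partial drawing.

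First I would start from an arbitrary instance $\langle U,H,\Gamma_H\rangle$ produced by the reduction in the proof of \cref{th:UPE-FUE-hard}. By that theorem, such an instance is NP-hard for \textsc{UPE-FUE} and, in particular, satisfies $V(H)=V(U)$ (the partial graph contains all the vertices) and $E(H)=\emptyset$. The key point I would emphasize is that the hypothesis $V(H)=V(U)$ is precisely what is needed to trigger property \textbf{(ii)} of \cref{le:no-vertices-sameY}; this is the crux of the whole argument, since without it the transformed partial graph could fail to contain all the vertices of the transformed input graph.

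Next I would apply \cref{le:no-vertices-sameY} to $\langle U,H,\Gamma_H\rangle$, producing in $O(n\log n)$ time an \emph{equivalent} instance $\langle U',H',\Gamma_{H'}\rangle$ of the \emph{same} problem, i.e., again of \textsc{UPE-FUE} (the lemma preserves the problem type, so the fixed-upward-embedding flavor is retained). Property \textbf{(i)} of the lemma guarantees that no two vertices of $H'$ share the same $y$-coordinate in $\Gamma_{H'}$, which is exactly condition \textbf{(ii)} of the corollary; and property \textbf{(ii)} of the lemma, applicable because $V(H)=V(U)$, yields $V(H')=V(U')$, which is exactly condition \textbf{(i)} of the corollary. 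I would explicitly observe that the corollary does not require $H'$ to be edgeless, so the fact that \cref{le:no-vertices-sameY} may introduce edges into $H'$ causes no harm.

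Finally, since the entire chain is polynomial-time and equivalence-preserving, it reduces the NP-hard restriction of \textsc{UPE-FUE} given by \cref{th:UPE-FUE-hard} to the restriction of \textsc{UPE-FUE} to instances satisfying conditions \textbf{(i)} and \textbf{(ii)}; together with membership in NP this establishes NP-completeness. I do not expect any genuine obstacle here beyond the bookkeeping just described: the only subtlety worth spelling out is the verification that the source instances indeed satisfy $V(H)=V(U)$, which is guaranteed by part \textbf{(ii)} of \cref{th:UPE-FUE-hard}.
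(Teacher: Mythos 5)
Your proposal is correct and takes essentially the same route as the paper, which derives the corollary by applying \cref{le:no-vertices-sameY} to the hard instances of \cref{th:UPE-FUE-hard} (where $V(H)=V(U)$ and $E(H)=\emptyset$), with NP membership supplied by \cref{le:np}. Your additional bookkeeping---that $V(H)=V(U)$ is what triggers property (ii) of the lemma, and that the edges the lemma introduces into $H'$ are harmless because the corollary does not require an edgeless partial graph---is exactly the implicit content of the paper's one-line justification.
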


We conclude this section by proving that the {\sc UPE} problem is solvable in almost-linear time for instances in which the partial graph contains all the vertices and no edges, and no two vertices share the same $y$-coordinate in the partial drawing. 


\begin{theorem}\label{th:UPE-algorithm}	
	The {\sc UPE} problem can be decided in $O(n \log n)$ time for instances $\langle G, H, \Gamma_H \rangle$ such that $G$ has $n$ vertices, $H=(V(G),\emptyset)$, and no two vertices share the same $y$-coordinate in $\Gamma_H$.
\end{theorem}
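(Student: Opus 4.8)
The plan is to reduce the given instance to an equivalent instance of the \textsc{Ordered Level Planarity} problem and then to exploit the linear-time algorithm of Klemz and Rote~\cite{KlemzR17}. The reduction in reduction (ii) of \cref{th:equivalence} is tailored exactly to instances $\langle G,H,\Gamma_H\rangle$ with $H=(V(G),\emptyset)$, which is the hypothesis here, so it applies directly.

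First, I would apply reduction (ii) of \cref{th:equivalence} to $\langle G,H,\Gamma_H\rangle$, obtaining in $O(n\log n)$ time an equivalent instance $(G,\ell,\xi)$ of the \textsc{OLP} problem whose underlying graph coincides with $G$ and hence has size $O(n)$. By the guarantee of that reduction, if $\Gamma_H$ contains at most $\lambda$ vertices sharing the same $y$-coordinate, then $(G,\ell,\xi)$ contains at most $\lambda$ vertices on the same level. Since by hypothesis no two vertices of $H$ share the same $y$-coordinate in $\Gamma_H$, we have $\lambda=1$, so every level of $(G,\ell,\xi)$ contains exactly one vertex (and each order $\xi_i$ is trivial).

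Next, I would invoke the algorithm of Klemz and Rote~\cite{KlemzR17}, which decides the \textsc{OLP} problem in linear time whenever no two vertices lie on the same level---precisely the situation produced above. As the reduced instance has $O(n)$ size, this phase runs in $O(n)$ time, and its output answers the original instance correctly by the equivalence asserted in \cref{th:equivalence}(ii).

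Combining the two phases yields a total running time of $O(n\log n)+O(n)=O(n\log n)$. There is no substantive obstacle beyond bookkeeping: the only superlinear cost is the sorting of the vertices of $G$ by their $y$-coordinate inside the reduction, which dominates the running time, and the sole point requiring care is matching the per-level bound $\lambda=1$ delivered by the reduction to the precondition of the Klemz--Rote algorithm. I would make explicit that the hypothesis ``no two vertices share the same $y$-coordinate'' translates, under reduction (ii), into ``at most one vertex per level,'' which is exactly the tractable regime identified in~\cite{KlemzR17}, thereby closing the argument.
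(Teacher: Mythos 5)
Your proposal is correct and follows exactly the paper's own proof: apply reduction (ii) of \cref{th:equivalence} in $O(n\log n)$ time to obtain an equivalent \textsc{OLP} instance with one vertex per level, then invoke the linear-time algorithm of Klemz and Rote~\cite{KlemzR17} for that regime. Nothing further is needed.
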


\begin{proof}
	By applying reduction (ii) of~\cref{th:equivalence} to the instance $\langle G, H, \Gamma_H \rangle$ of the {\sc UPE} 
	problem, we obtain in $O(n \log n)$ time an equivalent instance $(G, \ell, \xi)$ of {\sc Ordered Level Planarity} in which each level contains exactly one vertex.
	As observed by Klemz and Rote~\cite{KlemzR17} such instances of {\sc Ordered Level Planarity} are  solvable in $O(n)$ time, as a consequence of the fact that the {\sc Level Planarity} problem can be solved in $O(n)$ time~\cite{DBLP:conf/gd/JungerLM98}. 
\end{proof}

\section{Upward Planar $st$-Graphs}\label{se:st-graphs}

In this section we study the {\sc UPE} and {\sc UPE-FUE} problems for upward planar $st$-graphs.  
The following lemma will be useful for our algorithms.

\begin{restatable}{lemma}{LemmaConstantTimeQueries}\label{le:constant-time-queries}
	Let $G$ be an $n$-vertex upward planar $st$-graph with a given upward embedding. There exists a data structure to test in $O(1)$ time, for any two vertices $u$ and $v$ of $G$, whether $v\in S_G(u)$, $v\in P_G(u)$, $v\in L_G(u)$, or $v\in R_G(u)$. Further, such a data structure can be constructed in $O(n)$ time.
\end{restatable}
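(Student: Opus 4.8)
The plan is to precompute, for every vertex $v$, two integer labels $X(v),Y(v)\in\{1,\dots,n\}$ that together encode the entire successor/predecessor and left/right structure of $G$, and then to answer each query by a constant number of comparisons of these labels. Concretely, the two labels will be the ranks of $v$ in two topological orderings $\preceq_1$ and $\preceq_2$ of $G$ that form a \emph{realizer} of the reachability order of $G$: the ordering $\preceq_1$ breaks ties between reachability-incomparable vertices according to the left-to-right order of the upward embedding, while $\preceq_2$ breaks them according to the right-to-left order. Equivalently, $X(v)$ and $Y(v)$ are the coordinates of $v$ in a dominance drawing of $G$, which by Di Battista, Tamassia, and Tollis~\cite{dtt-arsdpud-92} exists and can be computed in $O(n)$ time from the upward embedding. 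The target characterization, for distinct vertices $u$ and $v$, is
\begin{align*}
 v\in S_G(u) &\iff X(u)<X(v)\ \text{and}\ Y(u)<Y(v),\\
 v\in P_G(u) &\iff X(v)<X(u)\ \text{and}\ Y(v)<Y(u),\\
 v\in L_G(u) &\iff X(v)<X(u)\ \text{and}\ Y(v)>Y(u),\\
 v\in R_G(u) &\iff X(v)>X(u)\ \text{and}\ Y(v)<Y(u).
\end{align*}

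After constructing $X$ and $Y$ and storing them in two arrays indexed by vertex, each query amounts to reading $X(u),Y(u),X(v),Y(v)$ and comparing them, which takes $O(1)$ time; the preprocessing is dominated by the $O(n)$-time construction of the two orderings, giving the claimed bounds. The logical skeleton of the correctness proof is that, since $\preceq_1$ and $\preceq_2$ are total orders, the labels are pairwise distinct in each coordinate, so any distinct $u,v$ fall into exactly one of the four sign patterns on the right-hand sides above; hence it suffices to show that each sign pattern matches the corresponding relation, and the four relations are then automatically seen to partition $V(G)\setminus\{u\}$. The successor case is exactly the defining property of a realizer/dominance drawing: $u$ reaches $v$ by a monotone path iff $u$ precedes $v$ in both orderings; the predecessor case is the same statement with $u$ and $v$ swapped.

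The main obstacle is the left/right case, i.e.\ proving that among the two reachability-incomparable sign patterns the one with $X(v)<X(u)$ is exactly $L_G(u)$. I would establish this by relating the tie-breaking rule of $\preceq_1$ to the geometric notion of ``left'': if $u$ and $v$ are reachability-incomparable, then $v$ lies on the left side of the monotone path $\mathcal L^+_G(u)\cup\mathcal L^-_G(u)$ precisely when $v$ precedes $u$ in the left-priority ordering $\preceq_1$ (and hence, being incomparable, follows $u$ in $\preceq_2$), which is the defining behaviour of a dominance drawing. Care is needed here because $G$ may contain transitive edges, so I would argue that neither reachability nor the paths $\mathcal L^+_G(u),\mathcal L^-_G(u)$ (and thus neither $L_G$ nor $R_G$) are affected by deleting transitive edges, reducing to the reduced planar $st$-graph for which the dominance-drawing machinery applies verbatim; alternatively one works directly with $G$ and the two embedding-guided topological orders. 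The $R_G$ case then follows from the $L_G$ case by the left-right symmetry of the construction, that is, by reversing both orderings, or equivalently by swapping the roles of $u$ and $v$ within the incomparable case.
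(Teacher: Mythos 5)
Your overall route coincides with the paper's: delete the transitive edges, compute a dominance drawing of the reduced graph via the algorithm of Di Battista, Tamassia, and Tollis, store the two coordinates of each vertex, and answer every query with a constant number of comparisons; your four sign-pattern characterizations also match the properties the paper cites for these dominance drawings. However, the one step on which your sketch would fail as written is precisely the reduction step. You propose to argue that ``neither reachability nor the paths $\mathcal L^+_G(u),\mathcal L^-_G(u)$ (and thus neither $L_G$ nor $R_G$) are affected by deleting transitive edges.'' Reachability is indeed unaffected (the longest monotone path between two vertices contains no transitive edge and hence survives in the reduction $G^*$), but the leftmost and rightmost paths can genuinely change: if $\mathcal L^-_G(u)$ uses a transitive edge $(x,y)$, that edge is absent from $G^*$ and is replaced by a monotone path from $x$ to $y$, so in general $\mathcal L^-_{G^*}(u)\neq\mathcal L^-_G(u)$. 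What must be shown --- and what the paper devotes its central claim to --- is that the \emph{sets} $L_G(u)$ and $R_G(u)$ are invariant even though the paths delimiting them move. Since the premise you plan to rest the left/right case on is false, the inference you draw from it collapses, and this is exactly the nontrivial part of the lemma.

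The invariance of $L_G$ and $R_G$ is not automatic. The paper proves $L_G(v)\subseteq L_{G^*}(v)$ by observing that every path replacing a transitive edge of $\mathcal L^-_G(v)$ lies to the right of the original leftmost boundary, and it proves the converse inclusion by contradiction: if $u\in L_{G^*}(v)\setminus L_G(v)$, then $u$ lies inside a cycle bounded by a transitive edge $(x,y)$ and the corresponding subpath of $\mathcal L^-_{G^*}(v)$, and taking \emph{longest} incoming and outgoing monotone paths of $u$ (which survive in $G^*$) yields a monotone path between two vertices of $\mathcal L^-_{G^*}(v)$ strictly to its left, contradicting leftmost-ness. Your proposal needs this (or an equivalent) argument spelled out. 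Your fallback remark about working ``directly with $G$ and the two embedding-guided topological orders'' runs into the same obstruction, since the cited dominance-drawing machinery requires the absence of transitive edges. The remaining ingredients are fine, though you should also note how the transitive reduction itself is obtained in $O(n)$ time: the paper does this face by face, using the fact that an edge $(u,v)$ of an upward planar $st$-graph is transitive if and only if one of the two paths bounding its incident faces and connecting $u$ to $v$ is monotone.
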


\begin{proof}
	First, we construct the \emph{transitive reduction} $G^*$ of $G$, that is, the upward planar $st$-graph obtained from $G$ by removing all its transitive edges. This can be done in $O(n)$ time by examining each face of $G$; indeed, an edge $(u,v)$ of an upward planar $st$-graph is transitive if and only if at least one of the two paths that connect $u$ and $v$, that delimit the faces the edge $(u,v)$ is incident to, and that are different from the edge $(u,v)$ is monotone.
	
	Next, we show that $G^*$ can be used in place of $G$ in order to answer the desired queries.
	\begin{figure}[t!]
		\centering
		\includegraphics[height=.25\textwidth]{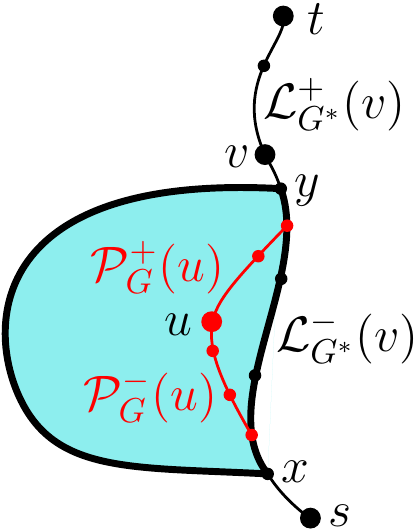}
		\caption{Illustration for the proof that $u\in L_{G^*}(v)$ implies that $u\in L_{G}(v)$. The cycle $\mathcal C_u$ is fat.}
		\label{fi:transitivereduction}
	\end{figure}
	
	\begin{myclaim}
		For each vertex $v$ of $G$ (and of $G^*$), we have $S_{G^*}(v)=S_{G}(v)$, $P_{G^*}(v)=P_{G}(v)$, $L_{G^*}(v)=L_{G}(v)$, and $R_{G^*}(v)=R_{G}(v)$.
	\end{myclaim}
	
	\begin{proof}
		If $u\in S_{G^*}(v)$, then $u\in S_{G}(v)$, as $G^*$ is a subgraph of $G$. Conversely, if $u\in S_{G}(v)$, then consider the longest monotone path from $v$ to $u$ in $G$; such a path also belongs to $G^*$, hence $u\in S_{G^*}(v)$. This proves that $S_{G^*}(v)=S_{G}(v)$; the proof that $P_{G^*}(v)=P_{G}(v)$ is analogous.

		Assume that $u\in L_{G}(v)$; hence, $u$ lies to the left of the monotone path composed of $\mathcal L^-_G(v)$ and $\mathcal L^+_G(v)$. Since $\mathcal L^-_G(v)$ is the leftmost incoming path of $v$ in $G$, all the monotone paths that replace transitive edges of $\mathcal L^-_G(v)$ in order to define $\mathcal L^-_{G^*}(v)$ lie to the right of the monotone path composed of $\mathcal L^-_G(v)$ and $\mathcal L^+_G(v)$ in $G$. Analogous arguments apply to $\mathcal L^+_{G^*}(v)$. Thus, $u$ lies to the left of the monotone path composed of $\mathcal L^-_{G^*}(v)$ and $\mathcal L^+_{G^*}(v)$, hence $u\in L_{G^*}(v)$.

		Conversely, assume that $u\in L_{G^*}(v)$ and suppose, for a contradiction, that $u\notin L_{G}(v)$, as in~\cref{fi:transitivereduction}. It follows that there is a transitive edge $(x,y)$ of $G$ such that: (i) $x$ and $y$ both belong to $\mathcal L^-_{G^*}(v)$ or both belong to $\mathcal L^+_{G^*}(v)$ -- assume the former, as the latter case can be treated analogously; (ii) the edge $(x,y)$ lies to the left of the monotone path composed of $\mathcal L^-_{G^*}(v)$ and $\mathcal L^+_{G^*}(v)$ in $G$; and (iii) $u$ lies inside the (undirected) cycle $\mathcal C_u$ of $G$ delimited by the edge $(x,y)$ and by the monotone path between $x$ and $y$ in $\mathcal L^-_{G^*}(v)$. Consider any longest incoming path $\mathcal P^-_{G}(u)$ of $u$ in $G$ and any longest outgoing path $\mathcal P^+_{G}(u)$ of $u$ in $G$. Since these are longest paths, they also belong to $G^*$. Further, since they are incident to $s$ and $t$, respectively, they share vertices with $\mathcal C_u$. In particular, the subpaths of $\mathcal P^-_{G}(u)$ and $\mathcal P^+_{G}(u)$ that are incident to $u$ and whose edges are inside $\mathcal C_u$ determine a monotone path between two vertices of $\mathcal L^-_{G^*}(v)$ that is to the left of the monotone path composed of $\mathcal L^-_{G^*}(v)$ and $\mathcal L^+_{G^*}(v)$. This contradicts the fact that $\mathcal L^-_{G^*}(v)$  is the leftmost incoming path of $v$ in $G^*$ and hence proves that $u\in L_{G}(v)$. It follows that $L_{G^*}(v)=L_{G}(v)$. 
		
		The proof that $R_{G^*}(v)=R_{G}(v)$ is symmetric.
\end{proof}
	
We now compute a dominance drawing $\Gamma^*$ of $G^*$. A \emph{dominance drawing} of a directed graph is a straight-line drawing such that there is a monotone path from a vertex $u$ to a vertex $v$ if and only if $x(u) \leq x(v)$ and $y(u) \leq y(v)$. Di Battista et al.~\cite{dtt-arsdpud-92} presented an $O(n)$-time algorithm to construct a planar dominance drawing of an $n$-vertex upward planar $st$-graph without transitive edges. The dominance drawings constructed by the cited algorithm satisfy the following properties: (i) $x(v)<x(u)$ if and only if $v \in P_G(u) \cup L_G(u)$; and (ii) $y(v)<y(u)$ if and only if $v \in P_G(u) \cup R_G(u)$. 

We use the algorithm by Di Battista et al.~\cite{dtt-arsdpud-92} in order to construct $\Gamma^*$. Then, in order to query whether $v\in P_G(u)$, it suffices to check whether $x(v)<x(u)$ and $y(v)<y(u)$ in $\Gamma^*$.
The other queries can be similarly answered in $O(1)$ time.
\end{proof}

We now characterize the positive instances of the {\sc UPE-FUE} problem.

\begin{restatable}{lemma}{LemmaCharacterizationUpwardPlane}\label{le:characterization-upward-plane}
	An instance $\langle G, H, \Gamma_H \rangle$ of the {\sc UPE-FUE} problem such that $G$ is an upward planar $st$-graph with a given upward embedding and such that $H$ contains no edges is a positive instance if and only if:
	\begin{description}
		\item[\em Condition~1:] For each vertex $v$ of $H$, all its successors (predecessors) in $G$ that belong to $H$ have a $y$-coordinate in $\Gamma_H$ that is larger (smaller) than $y(v)$; and
		\item[\em Condition~2:] For each vertex $v$ of $H$, all the vertices of $H$ whose $y$-coordinate is the same as $y(v)$ and whose $x$-coordinate is larger (smaller) than $x(v)$ in $\Gamma_H$ are to the right (to the left) of $v$ in $G$.
	\end{description}
\end{restatable}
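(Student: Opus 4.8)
The plan is to prove the two implications separately; necessity is routine, while sufficiency carries the weight of the argument. For \emph{necessity}, I would fix any upward planar drawing $\Gamma_G$ of $G$ that extends $\Gamma_H$ and respects the prescribed embedding, and read both conditions off it. Condition~1 is immediate: if $u\in S_G(v)$ there is a monotone path from $v$ to $u$, drawn $y$-monotonically increasing, so $y(u)>y(v)$ in $\Gamma_G$; since $\Gamma_G$ extends $\Gamma_H$, the vertices of $H$ keep their $\Gamma_H$-coordinates, and symmetrically for predecessors. For Condition~2 I would use that two vertices $u,v$ with the same $y$-coordinate are necessarily incomparable, hence $u\in L_G(v)\cup R_G(v)$; the monotone paths $\mathcal L^-_G(v)\cup\mathcal L^+_G(v)$ and $\mathcal R^-_G(v)\cup\mathcal R^+_G(v)$ that define ``left of'' and ``right of'' $v$ both cross the line $y=y(v)$ exactly at $v$, so every equal-height vertex of $R_G(v)$ lies to the right of $v$ and every one of $L_G(v)$ to its left. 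Thus $x(u)>x(v)$ forces $u\in R_G(v)$, which is exactly Condition~2.

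For \emph{sufficiency}, I would build $\Gamma_G$ in two phases. The first phase assigns $y$-coordinates: keep the $\Gamma_H$-heights on $V(H)$ and extend them to all of $V(G)$ so that $Y(u)<Y(v)$ for every edge $(u,v)$. This extension exists by a topological argument whose only nontrivial ingredient is Condition~1: for $w\notin V(H)$, each $H$-predecessor $p$ and each $H$-successor $q$ of $w$ satisfy $q\in S_G(p)$, so Condition~1 gives $y(p)<y(q)$; hence the highest $H$-predecessor of $w$ lies strictly below its lowest $H$-successor, leaving an open interval into which $w$ can be placed, and processing the vertices in a linear extension of the reachability order assigns all heights consistently. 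Two $H$-vertices share a final height only if they already did in $\Gamma_H$, in which case they are incomparable.

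The second phase fixes the horizontal structure. I would start from any upward planar drawing of $G$ realizing the prescribed upward embedding (one exists, as every upward planar $st$-graph is upward planar in each of its upward embeddings), relabel its heights to the values $Y$ of phase one, and reroute every edge as a $y$-monotone curve within each horizontal strip. The left-to-right order of vertices and edge-crossings along each horizontal line is then the order induced by the $L_G/R_G$ relation; I would place the non-$H$ elements so as to realize this order and place each $H$-vertex at its prescribed $\Gamma_H$-abscissa. Condition~2 guarantees that, on each line through equal-height $H$-vertices, the prescribed abscissae agree with the $L_G/R_G$ order, so the $H$-vertices can be inserted at their exact positions without disturbing it. Existence of the drawing then follows by verifying checks~1--7 in the proof of \cref{le:np} (plus the extra embedding check noted there): heights increase along every edge (upward), the spanning edges occur in the same relative order on the two sides of each strip (planar), and the cyclic orders around each vertex reproduce $\mathcal S(\cdot)$ and $\mathcal P(\cdot)$ (embedding).

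The step I expect to be the main obstacle is precisely the global consistency of these horizontal orders: I must show that pinning the $H$-vertices to fixed abscissae that live on several distinct heights never forces an unavoidable crossing. The crux is to argue that the only genuinely rigid horizontal constraints are the same-height ones captured by Condition~2, whereas constraints between $H$-vertices at different heights are always mediated by $y$-monotone curves that can be rerouted inside strips, and that collapsing, under $Y$, vertices of distinct reference-heights into one level keeps their order (given by $L_G/R_G$ on the now-incomparable vertices) compatible with the adjacent strips. I would discharge this by defining the level orders explicitly from the $S_G/P_G$ and $L_G/R_G$ relations provided by \cref{le:constant-time-queries} and checking that they pass checks~1--7, which yields the desired $\Gamma_G$.
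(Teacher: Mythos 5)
Your necessity argument is sound, and for Condition~2 it is even a pleasant alternative to the paper's: the paper rules out $u\in R_G(v)$ with $x(u)<x(v)$ by exhibiting a common successor $w$ reached by edge-disjoint monotone paths and comparing the forced left-to-right order of the edges entering $w$, whereas you read the contradiction directly off the $y$-monotone curve representing $\mathcal R^-_G(v)\cup\mathcal R^+_G(v)$, which meets the line $y=y(v)$ only at $v$. Your phase~1 (height assignment for $V(G)\setminus V(H)$ using Condition~1) is essentially the paper's augmentation step and is fine. The genuine gap is in phase~2. First, the opening move --- ``take any upward planar drawing of $G$ realizing the prescribed upward embedding and relabel its heights to the values $Y$'' --- fails as stated: in that reference drawing the vertical order of incomparable vertices will in general disagree with $Y$, and rerouting edges $y$-monotonically inside horizontal strips can never permute two vertices vertically, so there is nothing to ``relabel''. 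Second, your fallback --- define the level orders $\sigma_i$ from the $L_G/R_G$ relation and verify checks~1--7 of \cref{le:np} --- leaves unproven exactly the hard checks: check~2 (that edges spanning two consecutive levels occur in the same relative order in $\sigma_i$ and $\sigma_{i+1}$, which first requires extending $L_G/R_G$ to a total order on vertices \emph{and} edge crossings at every level) and the embedding check (that the resulting incoming/outgoing orders at each vertex reproduce the prescribed $\mathcal S(\cdot)$ and $\mathcal P(\cdot)$). Together these amount to the claim that, for a fixed upward embedding of an $st$-graph, \emph{every} edge-consistent height order is realizable with the horizontal structure forced by $L_G/R_G$ --- which is the entire content of sufficiency. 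You correctly identify this as ``the main obstacle'', but your proposed discharge restates the claim rather than proving it, so the proof as written is circular at its core step.

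For contrast, the paper closes this step without any level-by-level synthesis: after the same height augmentation it draws the leftmost path $\mathcal L^+_G(s)$ far to the left as a polyline, and then processes one face $f$ at a time, always choosing $f$ so that its left boundary is already drawn, and drawing each edge of the right boundary of $f$ as a polyline hugging the left boundary. Three invariants are maintained --- the drawn part is an upward planar $st$-graph plus isolated (already-placed) vertices; the current drawing is upward planar; the rightmost path of the drawn part is a $y$-monotone curve with all isolated vertices strictly to its right --- and the two conditions enter exactly here: Condition~1 gives $y(u_i)<y(u_{i+1})$ along each right boundary, so the new edges can be drawn upward, and Condition~2 guarantees that a newly drawn edge never traps a pinned vertex on its wrong side, preserving the third invariant. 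To repair your phase~2 you would need either this kind of explicit incremental construction, or an appeal to machinery (e.g., tessellation/dominance-style drawings of embedded $st$-graphs with prescribed topological numberings) that actually establishes the realizability claim you currently assume.
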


\begin{proof}
	Concerning the necessity of Condition~1, assume that two vertices $u$ and $v$ exist in $H$ such that $G$ contains a monotone path $P$ from $u$ to $v$ and such that the $y(u)\geq y(v)$ in $\Gamma_H$. Then $P$ cannot be upward in any drawing of $G$ extending $\Gamma_H$. Concerning the necessity of Condition~2, assume, for a contradiction, that two vertices $u$ and $v$ of $H$ exist such that: (i) $y(u)=y(v)$ and $x(u)<x(v)$ in $\Gamma_H$; and (ii) $u\notin L_G(v)$. If $u\in S_G(v)$, then a monotone path from $v$ to $u$ cannot be upward in any drawing of $G$ extending $\Gamma_H$. Analogously, $u\notin P_G(v)$. Finally, if $u\in R_G(v)$, then consider any vertex $w\in S_G(u) \cap S_G(v)$ such that there are two edge-disjoint monotone paths $Q_{uw}$ and $Q_{vw}$ from $u$ to $w$ and from $v$ to $w$, respectively. Such paths can be found by considering any two monotone paths from $u$ to $t$ and from $v$ to $t$, and by truncating these paths at their first common vertex; since $u\notin S_G(v)$ and $u\notin P_G(v)$, we have $w\neq u,v$. Since $u\in R_G(v)$, the left-to-right order of the edges entering $w$ in $G$ is: The edge of $Q_{vw}$ first and the edge of $Q_{uw}$ second. However, since $y(u)=y(v)$ and $x(u)<x(v)$, the left-to-right order of the edges entering $w$ in any upward planar drawing of $G$ extending $\Gamma_H$ is: The edge of $Q_{uw}$ first and the edge of $Q_{vw}$ second. This contradiction proves the necessity of Condition~2.
	
	To prove the sufficiency we construct an upward planar drawing $\Gamma_G$ of $G$ that extends~$\Gamma_H$. 
	
	We first augment $\Gamma_H$ by drawing every vertex $v\in V(G)\setminus V(H)$, so that Conditions~1 and~2 are still satisfied after the augmentation. This is done by assigning to $v$ any $x$-coordinate $x(v)$ and a $y$-coordinate $y(v)$ such that: (i) $y(v)\neq y(u)$, for any vertex $u\in V(G)$ with $v \neq u$; (ii) $y(v)>y(u)$, for any $u\in P_G(v)$; and (iii) $y(v)<y(w)$, for any $w\in S_G(v)$. We show how to construct such an assignment.
	
	We consider the vertices in $V(G)\setminus V(H)$ one at a time. When a vertex $v$ is considered, some of its successors in $G$ might be already drawn in $\Gamma_H$; denote by $S'_G(v)$ the set of such vertices. Analogously, let $P'_G(v)$ be the set of the predecessors of $v$ that are already drawn in $\Gamma_H$. Each vertex $w$ in $S'_G(v)$ has a $y$-coordinate in $\Gamma_H$ that is larger than the $y$-coordinate of any vertex $u$ in $P'_G(v)$; namely, since $w$ is a successor of $v$ and since $u$ is a predecessor of $v$, it follows that $w$ is a successor of $u$ and then Condition~1 ensures that $y(u)<y(w)$ in $\Gamma_H$. We place $v$ at any point that is higher than all the vertices in $P'_G(v)$, that is lower than all the vertices in $S'_G(v)$, and whose $y$-coordinate is different from all the other vertices of $H$. Clearly, Conditions~1 and~2 are still satisfied by the new instance.  After repeating this augmentation for all the vertices in $V(G)\setminus V(H)$, we eventually get that $V(H)=V(G)$ and the instance $\langle G, H, \Gamma_H \rangle$ still satisfies Conditions~1 and~2.
	
	Now set $\Gamma_G=\Gamma_H$. We are going to draw the edges of $G$ in $\Gamma_G$; we start with the edges of the leftmost path $\mathcal L^+_G(s)$ of $G$ (see~\cref{fig:leftmost}). We draw each edge $(u,v)$ of $\mathcal L^+_G(s)$ as follows. If there is no vertex $w$ such that $y(u)<y(w)<y(v)$, then we draw $(u,v)$ as a straight-line segment. Otherwise, we draw $(u,v)$ as a polygonal line composed of three straight-line segments: The first one connects $u$ with a point $p_u$ whose $y$-coordinate is slightly larger than $y(u)$ and whose $x$-coordinate is smaller than the one of every vertex of $G$ in $\Gamma_G$; the second one is a vertical straight-line segment connecting $p_u$ with a point $p_v$ whose $y$-coordinate is slightly smaller than $y(v)$; the third one connects $p_v$ with $v$. By construction all the vertices of $G$ not in $\mathcal L^+_G(s)$ are to the right of $\mathcal L^+_G(s)$ in $\Gamma_G$. 
	
\begin{figure}[htb]
	\centering
	\begin{subfigure}{.3\textwidth}\centering
		\includegraphics[height=.8\textwidth,page=1]{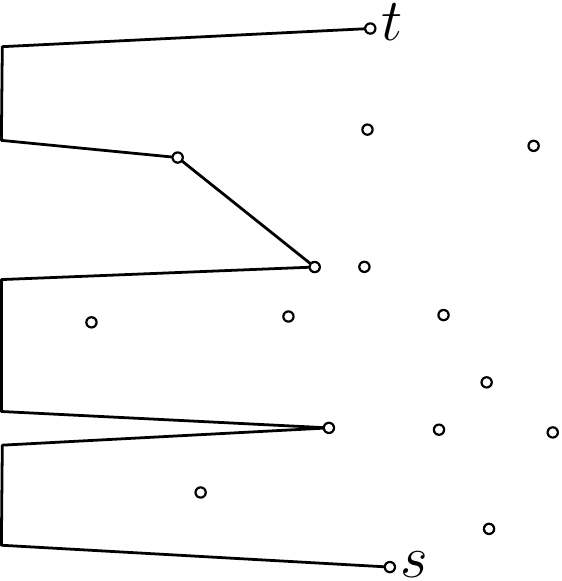}
		\subcaption{}\label{fig:leftmost}
	\end{subfigure}\hfil
	\begin{subfigure}{.3\textwidth}\centering
		\includegraphics[height=.8\textwidth,page=2]{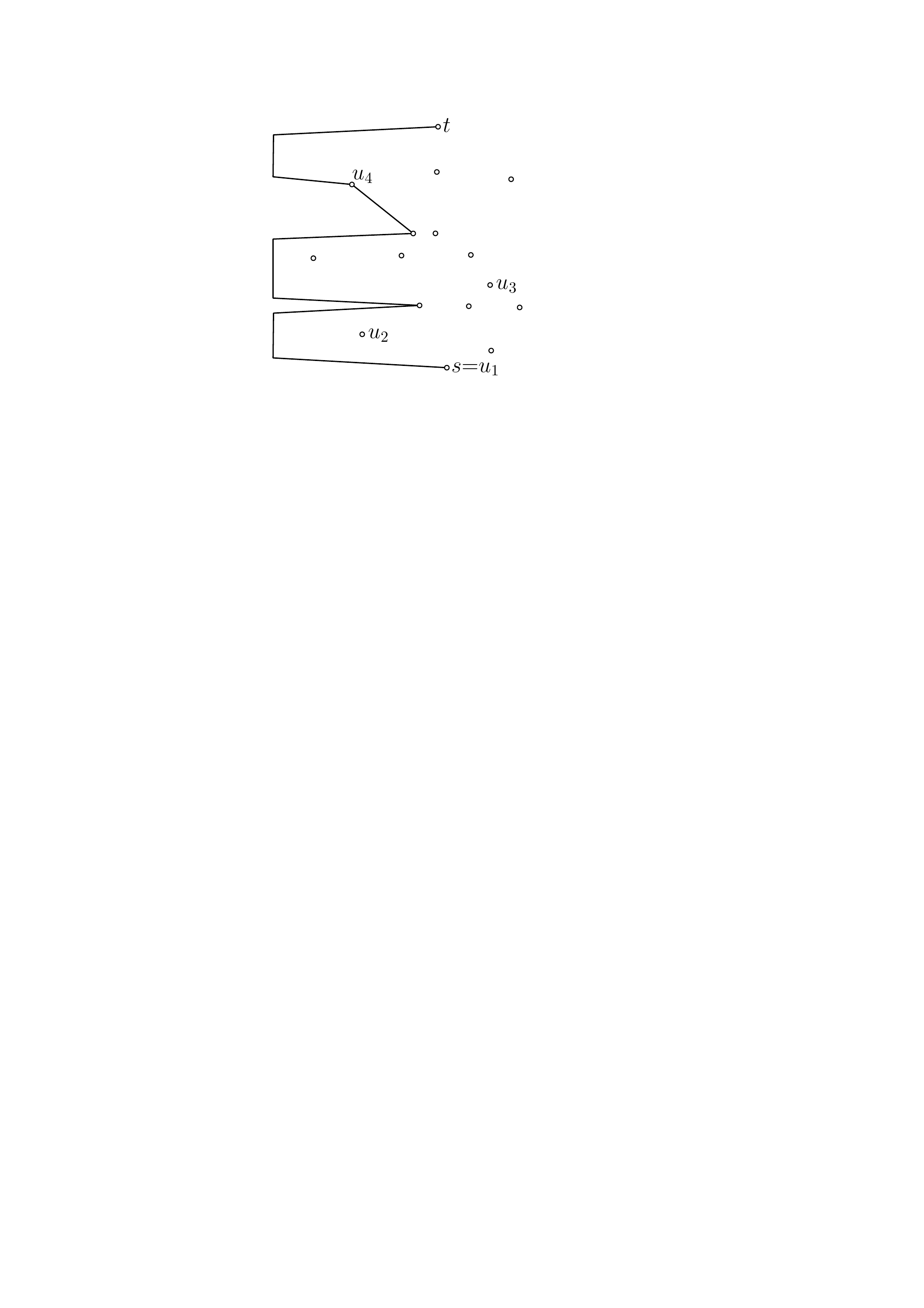}
		\subcaption{}\label{fi:rightboundary}
	\end{subfigure}
	\caption{
		(a) Drawing the leftmost path $\mathcal L^+_G(s)$ of $G$. (b) Drawing the right boundary $(u_1,\dots,u_l)$ of a face $f$.}
\end{figure}

	Now the algorithm proceeds in steps. After each step we maintain the invariants that: (i) the subgraph of $G$ currently drawn consists of an upward planar $st$-graph $G'$ plus a set of isolated vertices; (ii) the current drawing of $G'$ in $\Gamma_G$ is upward planar; and (iii) the rightmost path of $G'$ is represented by a $y$-monotone curve $\gamma_r$ such that all the isolated vertices lie to the right of $\gamma_r$ in $\Gamma_{G}$, when traversing $\gamma_r$ from $s$ to $t$. 
	
	The invariants are initially satisfied with $G'=\mathcal L^+_G(s)$ and with the drawing $\Gamma_G$ constructed as above. In each step we consider a face $f$ of $G$ whose left boundary belongs to $G'$ and whose right boundary consists of edges not in $G'$. In order to draw the right boundary $(u_1,\dots,u_l)$ of $f$, we draw each edge $(u_i,u_{i+1})$ independently (see~\cref{fi:rightboundary}). We draw $(u_i,u_{i+1})$ as a polygonal line composed of three parts. The first part is a straight-line segment connecting $u_i$ with a point $p_{u_i}$ whose $y$-coordinate is slightly larger than $y(u_i)$ and whose $x$-coordinate is slightly larger than the $x$-coordinate of the point of the left boundary of $f$ with the same $y$-coordinate. The second part is a polygonal line arbitrarily close to the left boundary of $f$, connecting $p_{u_i}$ with a point $p_{u_{i+1}}$ whose $y$-coordinate is slightly smaller than $y(u_{i+1})$. The third part is a straight-line segment connecting $p_{u_{i+1}}$ with $u_{i+1}$. Invariant (i) is maintained by the choice of the face $f$. Invariant (ii) is maintained since, for each $i=1,\dots,l-1$, we have $y(u_i)<y(u_{i+1})$ in $\Gamma_G$, by Condition~1, and since the drawing of $(u_i,u_{i+1})$ is upward and does not cross any previously drawn edge, by construction. Invariant (iii) is maintained since each edge $(u_i,u_{i+1})$ is arbitrarily close to the drawing of the rightmost path of $G'$, except in the interior of arbitrarily narrow horizontal strips enclosing $u_i$ and $u_{i+1}$; further, $(u_i,u_{i+1})$ does not keep any isolated vertex to its left inside such strips by Condition~2. 
	
	By invariant (ii), $\Gamma_G$ respects the upward embedding of $G$. Further, by construction, the vertices of $H$ have the same coordinates in $\Gamma_G$ as in $\Gamma_H$, hence $\Gamma_G$ extends $\Gamma_H$. 
\end{proof}

We can now prove the following algorithmic theorem.

\begin{restatable}{theorem}{TheoremStUPEFUE} \label{th:st-UPE-FUE}
	The {\sc UPE-FUE} problem can be solved in $O(n \log n)$ time for instances $\langle G, H, \Gamma_H \rangle$ with size $n = |\langle G, H, \Gamma_H \rangle|$ such that $G$ is an upward planar $st$-graph with a given upward embedding.
\end{restatable}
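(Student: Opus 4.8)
The plan is to reduce the instance, via \cref{le:no-edges}, to an equivalent one in which $H$ has no edges while $G$ remains an upward planar $st$-graph with the inherited upward embedding; this costs $O(n\log n)$ time and blows up the size only by a constant factor, so I may henceforth assume $E(H)=\emptyset$. On such an instance \cref{le:characterization-upward-plane} asserts that the answer is positive if and only if Conditions~1 and~2 hold, so it suffices to test these two conditions within the claimed bound. The difficulty is that both conditions quantify over the relations $S_G$, $P_G$, $L_G$, $R_G$, each of which may involve $\Omega(n^2)$ vertex pairs, so a direct check is too slow.

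To overcome this I would compute the planar dominance drawing $\Gamma^*$ of the transitive reduction $G^*$ of $G$, exactly as in the proof of \cref{le:constant-time-queries}; this takes $O(n)$ time and assigns to each vertex $w$ a pair $(X(w),Y(w))$ with pairwise distinct $X$- and $Y$-values (a consequence of properties (i) and (ii) recalled there) satisfying $X(v)<X(u)\iff v\in P_G(u)\cup L_G(u)$ and $Y(v)<Y(u)\iff v\in P_G(u)\cup R_G(u)$. Combining these gives a quadrant characterisation: for distinct $u,w$ we have $w\in P_G(u)$ iff $X(w)<X(u)$ and $Y(w)<Y(u)$; $w\in S_G(u)$ iff $X(w)>X(u)$ and $Y(w)>Y(u)$; $w\in L_G(u)$ iff $X(w)<X(u)$ and $Y(w)>Y(u)$; and $w\in R_G(u)$ iff $X(w)>X(u)$ and $Y(w)<Y(u)$. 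In particular $w\in L_G(u)\iff u\in R_G(w)$, so the two clauses of Condition~2 are symmetric.

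With this encoding, Condition~1 becomes a two-dimensional dominance statement: for every pair $u,v\in V(H)$ with $X(u)<X(v)$ and $Y(u)<Y(v)$ (equivalently $u\in P_G(v)$) one must have $y(u)<y(v)$ in $\Gamma_H$. I would verify this by a plane sweep over $V(H)$ in increasing $X$-order, maintaining a Fenwick tree indexed by the $Y$-rank that stores the maximum $\Gamma_H$-$y$-coordinate inserted so far. When processing $v$, a prefix-maximum query over the $Y$-ranks below $Y(v)$ returns the largest $y(u)$ among the already-inserted predecessors of $v$; Condition~1 fails exactly when this value is $\ge y(v)$. Since the predecessor relation is precisely the lower-left quadrant and every predecessor of $v$ has smaller $X$ and is thus already inserted, this single $O(n\log n)$ sweep certifies both the successor and the predecessor halves of Condition~1.

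For Condition~2 I would group $V(H)$ by $\Gamma_H$-$y$-coordinate and sort each group by $\Gamma_H$-$x$-coordinate. By the quadrant characterisation, demanding that every later vertex of a group lie to the right of every earlier one is equivalent to demanding that, along the group's $x$-order, the $X$-values be strictly increasing and the $Y$-values be strictly decreasing; as strict monotonicity is transitive, this is certified by inspecting only consecutive pairs, in $O(n)$ time after the $O(n\log n)$ sorting. The algorithm answers ``yes'' precisely when both checks succeed, and correctness is immediate from \cref{le:characterization-upward-plane}. The crux of the argument — and the step I expect to be the main obstacle — is exactly this translation of the relational Conditions~1 and~2 into geometric dominance and monotonicity tests on $\Gamma^*$, which is what brings the running time down from the naive $\Omega(n^2)$ to $O(n\log n)$.
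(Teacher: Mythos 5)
Your proposal is correct, and its overall architecture is the paper's: reduce via \cref{le:no-edges} to an edgeless partial graph, invoke the characterization of \cref{le:characterization-upward-plane}, and use the dominance drawing of the transitive reduction (the machinery of \cref{le:constant-time-queries}) to make left/right queries cheap. Where you diverge is the test for Condition~1. The paper does not use the dominance structure there at all: it builds an auxiliary digraph $\mathcal A$ from $G$ by adding, between consecutive $y$-levels of $V(H)$ in $\Gamma_H$, a hub vertex with edges from the lower level and to the upper level, and then tests $\mathcal A$ for acyclicity in $O(n)$ time; a cycle witnesses precisely a monotone path of $G$ contradicting the $y$-order of $\Gamma_H$. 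You instead read Condition~1 off the quadrant characterization ($u\in P_G(v)$ iff $u$ dominates from the lower-left in $\Gamma^*$) and certify it with an $X$-ordered sweep and a prefix-maximum Fenwick tree over $Y$-ranks, which is also correct (the strictness handling via ``fails when the prefix maximum is $\ge y(v)$'' matches the lemma's strict inequalities, and the pairwise-distinct dominance coordinates make the sweep order well defined) and fits the $O(n\log n)$ budget. For Condition~2 your test coincides with the paper's consecutive-pair check; your explicit appeal to transitivity of the quadrant relation ($X$ strictly increasing, $Y$ strictly decreasing along each same-$y$ group) supplies a justification the paper leaves implicit when it only queries $v_l\in L_G(v_{l+1})$ for consecutive pairs. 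Net effect: your route reuses one geometric structure uniformly for both conditions at the price of an extra $\log$ factor in the Condition~1 step, whereas the paper's auxiliary-graph test is linear after sorting and purely combinatorial; both land at $O(n\log n)$ overall.
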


\begin{proof}
	If $H$ contains edges, then~\cref{le:no-edges} can be applied in $O(n \log n)$ time in order to transform $\langle G, H, \Gamma_H \rangle$ into an equivalent instance, which we again denote by $\langle G, H, \Gamma_H \rangle$, such that $|\langle G, H, \Gamma_H \rangle|\in O(n)$, such that $H$ contains no edges, and such that $G$ is still an upward planar $st$-graph. We show an $O(n \log n)$-time algorithm to test whether Conditions~1 and~2 of~\cref{le:characterization-upward-plane} are satisfied by $\langle G, H, \Gamma_H \rangle$.
	
	In order to test Condition~1, we proceed as follows. We construct an auxiliary graph $\mathcal A$, which we initialize to $G$. 
	We order in $O(h\log h)\in O(n\log n)$ time the vertices in $H$ according to their $y$-coordinates in $\Gamma_H$, where $h=|V(H)|$. Let $v_1,v_2,\dots,v_h$ be such an ordering.
	For every two maximal sets $\{v_i,v_{i+1},\dots,v_j\}$ and $\{v_{j+1},v_{j+2},\dots,v_{k}\}$ of vertices of $H$ such that $y(v_i)=\dots=y(v_j)$, such that $y(v_{j+1})=\dots=y(v_k)$, and such that $y(v_j)<y(v_{j+1})$,  we add to $\mathcal A$ a vertex $x_{i,k}$, directed edges $(v_l,x_{i,k})$, for $l = i,\dots,j$, and directed edges $(x_{i,k},v_l)$, for $l = j+1,\dots,k$. The graph $\mathcal A$ can be constructed in $O(n+h\log h)\in O(n \log n)$ time and has $O(n)$ vertices and edges.

	\begin{myclaim}\label{cl:acyclicity}
		$\langle G, H, \Gamma_H \rangle$ satisfies Condition~1 if and only if $\mathcal A$ is acyclic. 
	\end{myclaim}
	
	\begin{proof}
		For the necessity, suppose that Condition~1 holds true; we show that $\mathcal A$ is acyclic. For a contradiction, suppose that a simple directed cycle $\mathcal C$ exists in $\mathcal A$; then $\mathcal C$ can be partitioned into monotone paths $P_1,\dots,P_{2k}$ where all the edges of $P_i$ belong to $G$ if $i$ is even and all the edges of $P_i$ do not belong to $G$ if $i$ is odd. Note that, for $i=1,\dots,2k$, the vertex $u_i$ shared by two consecutive paths $P_i$ and $P_{i+1}$ belongs to $H$, where $P_{2k+1}=P_1$. Condition~1 implies that $y(u_i)<y(u_{i+1})$ for $i=2,4,\dots,2k$, where $u_{2k+1}=u_1$. The construction of $\mathcal A$ implies $y(u_i)<y(u_{i+1})$ for $i=1,3,\dots,2k-1$. However, this implies that $y(u_1)<y(u_1)$, a contradiction. 
		
		For the sufficiency, suppose that $\mathcal A$ is acyclic; we show that Condition~1 holds true. For a contradiction, suppose that two vertices $u,v\in V(H)$ exist such that $v\in S_G(u)$ and such that $y(u)> y(v)$ in $\Gamma_H$. We have that $\mathcal A$ contains a monotone path directed from $u$ to $v$ as $\mathcal A$ contains all the edges of $G$, and a monotone path directed from $v$ to $u$ composed of edges not in $G$, hence it contains a directed cycle, a contradiction.   
	\end{proof}
	
	Since $\mathcal A$ has $O(n)$ vertices and edges, it can be tested in $O(n)$ time whether it is acyclic. By \cref{cl:acyclicity} it follows that we can test Condition~1 in $O (n \log n)$ time.
	

	In order to test Condition~2, we proceed as follows. We sort each maximal set of vertices $\{v_i,\dots,v_j\}$ in $H$ with the same $y$-coordinate according to their $x$-coordinates. For all the vertices of $H$, this can be done in $O(h \log h)\in O(n\log n)$ time.
	Hence, we assume that $x(v_l)<x(v_{l+1})$, for $l=i,\dots,j-1$; then we test whether $v_l\in L_G(v_{l+1})$, for $l=i,\dots,j-1$. By~\cref{le:constant-time-queries}, this can be done in $O(1)$-time per query, after an $O(n)$-time preprocessing.  
	Therefore, the total time to test Condition~2 is also $O(n \log n)$. 
\end{proof}


Next, we deal with the {\sc UPE} problem. An instance $\langle G, H, \Gamma_H \rangle$ of the {\sc UPE} problem such that $G$ is an upward planar $st$-graph can be transformed into an equivalent instance of the {\sc PLP} problem.  
This is due to the fact that Condition~1 of~\cref{le:characterization-upward-plane} does not depend on the upward embedding of $G$ and that we can assume: 
\begin{enumerate}
	\item the edges set of $H$ to be empty, by~\cref{le:no-edges}; and 
	\item the partial drawing to contain all the vertices of $G$, by drawing each vertex in $V(G) \setminus V(H)$ as in the proof of~\cref{le:characterization-upward-plane}, without violating neither Condition~1 nor Condition~2 of the lemma. 
\end{enumerate}
Hence, the {\sc UPE} problem for upward planar $st$-graphs can be solved in quadratic time, due to the results of Br\"uckner and Rutter about the {\sc PLP} problem for single-source graphs~\cite{DBLP:conf/soda/BrucknerR17}.
However, in the following theorem we show how to reduce the time bound to almost linear. 

\begin{restatable}{theorem}{TheoremStUPE} \label{th:st-UPE}
	The {\sc UPE} problem can be solved in $O(n \log n)$ time for instances $\langle G, H, \Gamma_H \rangle$ with size $n = |\langle G, H, \Gamma_H \rangle|$ such that $G$ is an upward planar $st$-graph.
\end{restatable}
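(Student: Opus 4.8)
The plan is to reduce the problem to a search over the upward embeddings of $G$, exploiting the characterization of \cref{le:characterization-upward-plane} together with the fact, recalled in \cref{se:spqr-trees}, that the upward embeddings of an upward planar $st$-graph are exactly the choices of permutations at the P-nodes and flips at the R-nodes of its SPQR-tree. First I would apply \cref{le:no-edges} in $O(n\log n)$ time to assume that $H$ has no edges while $G$ remains an upward planar $st$-graph of size $O(n)$. By \cref{le:characterization-upward-plane}, an upward embedding of $G$ yields a drawing extending $\Gamma_H$ if and only if Conditions~1 and~2 hold; hence the {\sc UPE} instance is positive if and only if \emph{some} upward embedding of $G$ satisfies both conditions. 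Since Condition~1 does not mention the relations $L_G,R_G$, it is embedding-independent and can be tested once, exactly as in the proof of \cref{th:st-UPE-FUE} (build the auxiliary graph $\mathcal A$ and test its acyclicity); if it fails we reject. Once Condition~1 holds, any two vertices sharing a $y$-coordinate are incomparable in $G$, as a monotone path between them would violate Condition~1.

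It remains to decide whether some embedding satisfies Condition~2. I would first argue that it suffices to enforce, for each interesting $y$-coordinate, only the constraints between $\Gamma_H$-consecutive vertices on that level: in any fixed upward embedding the relation ``to the left of'' restricted to an antichain is a total order (it is read off the $x$-coordinates of a dominance drawing, in the spirit of the properties used in \cref{le:constant-time-queries}), so enforcing consecutive pairs enforces all pairs by transitivity. This yields $O(n)$ constraints ``$u$ is to the left of $v$'', each between an incomparable pair. The core step is a lemma stating that the left/right relationship of an incomparable pair $u,v$ is decided at the single node $w=\mathrm{lca}(\alpha_u,\alpha_v)$ of the SPQR-tree, where $\alpha_u,\alpha_v$ are the proper allocation nodes of $u,v$: letting $x_u,x_v$ be the representatives of $u,v$ in $\sk(w)$, the node $w$ must be a P- or an R-node (an S- or Q-node would force $u,v$ to be comparable), and whether $u$ is to the left of $v$ in $G$ equals whether $x_u$ precedes $x_v$ in the reference embedding of $\sk(w)$, XORed with the parity of the flips of $w$ and of its R-node ancestors. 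Each constraint is thus routed to its node $w$ in $O(1)$ time via the LCA and proper-allocation-node data structures of \cref{se:spqr-trees}; the left/right test inside a triconnected R-node skeleton is answered in $O(1)$ after precomputing a dominance drawing of every skeleton, in $O(n)$ total time.

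Finally I would show that the routed constraints decouple into independent, locally checkable conditions. Because every R-node carries a free flip bit, the inclusive flip-parity at each node can be set to any desired value independently of the others; hence the ancestor-parity contribution can always be absorbed, and an R-node is feasible if and only if all constraints routed to it agree on the required relative orientation of their representatives in $\sk(w)$. Dually, reversing the parity above a P-node merely reverses its entire left-to-right order, which the free permutation can compensate; so a P-node is feasible if and only if the ``before'' constraints routed to it, viewed as arcs on the set of its virtual edges, form an acyclic digraph, in which case any topological order realizes them. Checking per-R-node agreement and per-P-node acyclicity over all nodes takes $O(n)$ time, since the constraints and all skeletons have total size $O(n)$. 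If every node is feasible, setting the flips and permutations accordingly produces an upward embedding satisfying Condition~2, which with Condition~1 and \cref{le:characterization-upward-plane} certifies a positive instance; otherwise no embedding satisfies Condition~2 and the instance is negative. The running time is dominated by the reduction and the sorting, giving $O(n\log n)$.

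I expect the main obstacle to be the decision-node lemma together with the decoupling argument: one must verify that flipping an R-node reverses left/right \emph{uniformly} for its whole subtree (so its effect on a pair decided deeper is captured entirely by a single parity bit) and that these parity bits, ranging over all R-nodes, can be realized independently, so that global feasibility reduces to purely local, per-node checks rather than a global $2$-SAT-style system.
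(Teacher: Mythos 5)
Your proposal is correct and follows essentially the same route as the paper's proof: reduce via \cref{le:no-edges}, test Condition~1 of \cref{le:characterization-upward-plane} through the acyclicity of the auxiliary graph as in \cref{th:st-UPE-FUE}, and route each constraint between consecutive same-$y$ vertices to the lowest common ancestor of their proper allocation nodes in the SPQR-tree, checking acyclicity of the ``left-of'' digraph at P-nodes and flip-consistency (the paper's $preserve(\nu)$/$flip(\nu)$ flags) at R-nodes, all in $O(n\log n)$. The only cosmetic difference is your ancestor flip-parity bookkeeping, which is superfluous under the paper's convention that recursive substitution preserves each skeleton's chosen orientation -- so the left/right relation of an incomparable pair is decided solely by the choice at its LCA node -- and your absorption argument correctly collapses that parity system to exactly the paper's local per-node checks.
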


\begin{proof}
	We are going to test whether an upward embedding of $G$ exists that satisfies the conditions in~\cref{le:characterization-upward-plane}. Actually, Condition~1 does not depend on the upward embedding of $G$, hence it can be tested in $O(n\log n)$ time as described in the proof of~\cref{th:st-UPE-FUE} before any upward embedding of $G$ is considered. If the test succeeds, we apply~\cref{le:no-edges} in $O(n \log n)$ time to modify $\langle G, H, \Gamma_H \rangle$ so that $H$ contains no edges while $G$ remains an upward planar $st$-graph, and proceed as described in the following, otherwise we conclude that the instance is negative.
	
	 In order to test whether $G$ admits an upward embedding satisfying Condition~2 of~\cref{le:characterization-upward-plane} we proceed as follows. First, we add the edge $(s,t)$ to $G$, if $G$ does not contain such an edge. Second, we compute in $O(n)$ time the SPQR-tree $\mathcal T$ of $G$. Third, we compute in $O(n \log n)$ time the order $v_1,v_2,\dots,v_h$ of the vertices in $H$ by increasing $y$-coordinates and, secondarily, by increasing $x$-coordinates in $\Gamma_H$. 
	 	 
	 We now aim to decide a left-to-right order of the virtual edges of the skeleton of each P-node of $\mathcal T$ and a flip for the triconnected skeleton of each R-node of $\mathcal T$ so that Condition~2 is satisfied. We outline the approach for such decisions. Consider two vertices $u=v_i$ and $v=v_{i+1}$ sharing their $y$-coordinate. Note that $x(u)<x(v)$ in $\Gamma_H$. Then $u$ has to belong to $L_G(v)$ in the upward embedding of $G$ we look for. This imposes a constraint on $\sk(\nu)$ for a node $\nu$ of $\mathcal T$ such that $u$ and $v$ are in the pertinent graphs of two different virtual edges $e_u$ and $e_v$ of $\skel(\nu)$. Namely, if $\nu$ is a P-node, then $e_u$ has to precede $e_v$ in the left-to-right order of the virtual edges of $\sk(\nu)$. Further, if $\nu$ is an R-node, then $e_u$ has to be to the left of $e_v$ \mbox{in the chosen embedding of $\sk(\nu)$.} To impose these constraints, we employ several algorithmic tools; e.g., we compute in $O(1)$ time the proper allocation nodes $\mu_u$ and $\mu_v$ of $u$ and $v$ in $\mathcal T$, and the lowest common ancestor $\nu$ of $\mu_u$ and $\mu_v$ in $\mathcal T$. This approach is detailed as follows.
	
	We compute the following data structures. 
	
	\begin{enumerate}
		\item We equip each P-node $\nu$ of $\mathcal T$ with an auxiliary directed graph $LR_\nu$ containing a vertex for each virtual edge of $\sk(\nu)$, except for the one corresponding to the parent of $\nu$ in $\mathcal T$. The edge set of $LR_\nu$ is initially empty. 
		\item The skeleton of each R-node $\nu$ of $\mathcal T$ is triconnected, hence it admits two upward embeddings, which can be obtained from each other via a flip. We arbitrarily choose one of these upward embeddings. We equip $\skel(\nu)$ with a data structure that, given a pair $(x,y)$ where $x$ and $y$ are vertices or edges of $\skel(\nu)$, determines in $O(1)$ time whether $x\in L_{\skel(\nu)}(y)$, $x\in R_{\skel(\nu)}(y)$, or none of the previous, in the chosen upward embedding of $\skel(\nu)$. Such a data structure can be constructed in $O(|\sk(\nu)|)$ time, as in~\cref{le:constant-time-queries}, by inserting a dummy vertex on each virtual edge in order to handle the fact that queries might involve vertices but also virtual edges of $\skel(\nu)$. Further, we equip $\skel(\nu)$ with two boolean variables $preserve(\nu)$ and $flip(\nu)$ that we both initially set to \texttt{false}.
	\end{enumerate}
	
	We execute the following algorithm. 
	
	\begin{enumerate}
		\item We consider each pair $u=v_i,v=v_{i+1}$ of vertices with the same $y$-coordinate that are consecutive in the computed order. Note that $x(u)<x(v)$ in $\Gamma_H$. We perform the following operations, which ensure that $u\in L_G(v)$ in the upward embedding of $G$ constructed by the algorithm, if any.
		\begin{enumerate}
			\item We compute in $O(1)$ time the proper allocation nodes $\mu_u$ and $\mu_v$ of $u$ and $v$ in $\mathcal T$, respectively.
			\item We compute in $O(1)$ time the lowest common ancestor $\nu$ of $\mu_u$ and $\mu_v$ in $\mathcal T$. Let $x_u$ and $x_v$ be the representatives of $u$ and $v$ in $\skel(\nu)$, respectively.
			\item We perform different operations depending on the type of $\nu$. 
			\begin{enumerate}
				\item If $u=s$, $u=t$, $v=s$, or $v=t$, then we reject the instance.
				\item If $\nu$ is an S-node, then we reject the instance.
				\item If $\nu$ is a P-node, then we add in $O(1)$ time a directed edge in $LR_\nu$ from the vertex corresponding to $x_u$ to the vertex corresponding to $x_v$.
				\item If $\nu$ is an R-node, then we query in $O(1)$ time the data structure $\skel(\nu)$ has been equipped with.
				If $x_u\in L_{\skel(\nu)}(x_v)$, then we set $preserve(\nu)=$\texttt{true}.
				If $x_u\in R_{\skel(\nu)}(x_v)$, then we set $flip(\nu)=$\texttt{true}.
				If $x_u\notin L_{\skel(\nu)}(x_v)$ and $x_u\notin R_{\skel(\nu)}(x_v)$, then we reject the instance.
			\end{enumerate}
		\end{enumerate}
		\item For each P-node $\nu$ of $\mathcal T$, we test whether $LR_\nu$ contains a directed cycle. In case of a positive answer, we reject the instance. 
		\item For each R-node $\nu$ of $\mathcal T$, we test in $O(1)$ time whether $preserve(\nu)=$\texttt{true} and $flip(\nu)=$\texttt{true}. In case of a positive answer, we reject the instance.
		\item We accept the instance. 
	\end{enumerate}
	
	We analyze the running time of the algorithm. Step 1 takes $O(1)$ time for each pair $[v_i,v_{i+1}]$, hence $O(h)\in O(n)$ time in total. Step~2 takes, for each node $\nu$, time proportional to the number of the edges that are inserted in $LR_\nu$. Over all the P-nodes $\nu$ of $\mathcal T$, at most $h-1$ edges are inserted in the auxiliary graphs $LR_\nu$, namely at most one for each pair $[v_i,v_{i+1}]$. Thus the overall time complexity of Step~2 is $O(h)\in O(n)$. Step 3 takes $O(1)$ time for each R-node, hence $O(n)$ time in total. Finally, Step~4 takes $O(n)$ time. Hence, the total running time is dominated by the sorting of the vertices in $H$ and the algorithm runs in $O(n\log n)$ time.

	
	Finally, we argue about the correctness of the algorithm. 
	
	Consider step 1.(c)i. If $u$ or $v$ coincides with $s$ or $t$, then Condition~2 is not satisfied by any upward embedding of $G$, as $s$ and $t$ do not have any vertices to their left or to their right in any upward embedding. 
	
	Consider step 1.(c)ii. If $\nu$ is an S-node, then four cases are possible. If $\nu \neq \mu_u$ and $\nu \neq \mu_v$, as in~\cref{fig:snode1}, then $x_u$ and $x_v$ are two distinct virtual edges of $\sk(\nu)$. Further, if $\nu = \mu_u$ and $\nu \neq \mu_v$, as in~\cref{fig:snode2} (the case in which $\nu \neq \mu_u$ and $\nu = \mu_v$ is symmetric), then $x_u$ is a vertex of $\skel(\nu)$ and $x_v$ is a virtual edge of $\sk(\nu)$. Finally, if $\nu = \mu_u$ and $\nu = \mu_v$, as in~\cref{fig:snode3}, then $x_u$ and $x_v$ are distinct vertices of $\skel(\nu)$. In all cases, we have that either $u\in S_G(v)$ or $u\in P_G(v)$, hence Condition~2 is not satisfied in any upward embedding of $G$. 
	
	\begin{figure}[htb]
		\begin{subfigure}{.1\textwidth}
			\centering
			\includegraphics[scale = 0.8]{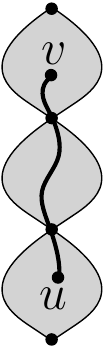}
			\subcaption{}\label{fig:snode1}
		\end{subfigure}
		\hfil
		\begin{subfigure}{.1\textwidth}
			\centering
			\includegraphics[scale = 0.8]{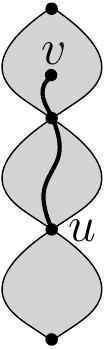}
			\subcaption{}\label{fig:snode2}
		\end{subfigure}
		\hfil
		\begin{subfigure}{.1\textwidth}
			\centering
			\includegraphics[scale = 0.8]{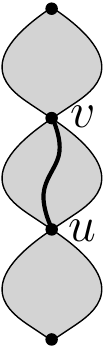}
			\subcaption{}\label{fig:snode3}
		\end{subfigure}
		\hfil
		\begin{subfigure}{.1\textwidth}
			\centering
			\includegraphics[scale = 0.8]{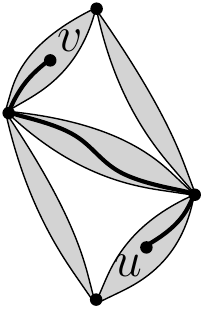}
			\subcaption{}\label{fig:rnode1}
		\end{subfigure}
		\hfil
		\begin{subfigure}{.1\textwidth}
			\centering
			\includegraphics[scale = 0.8]{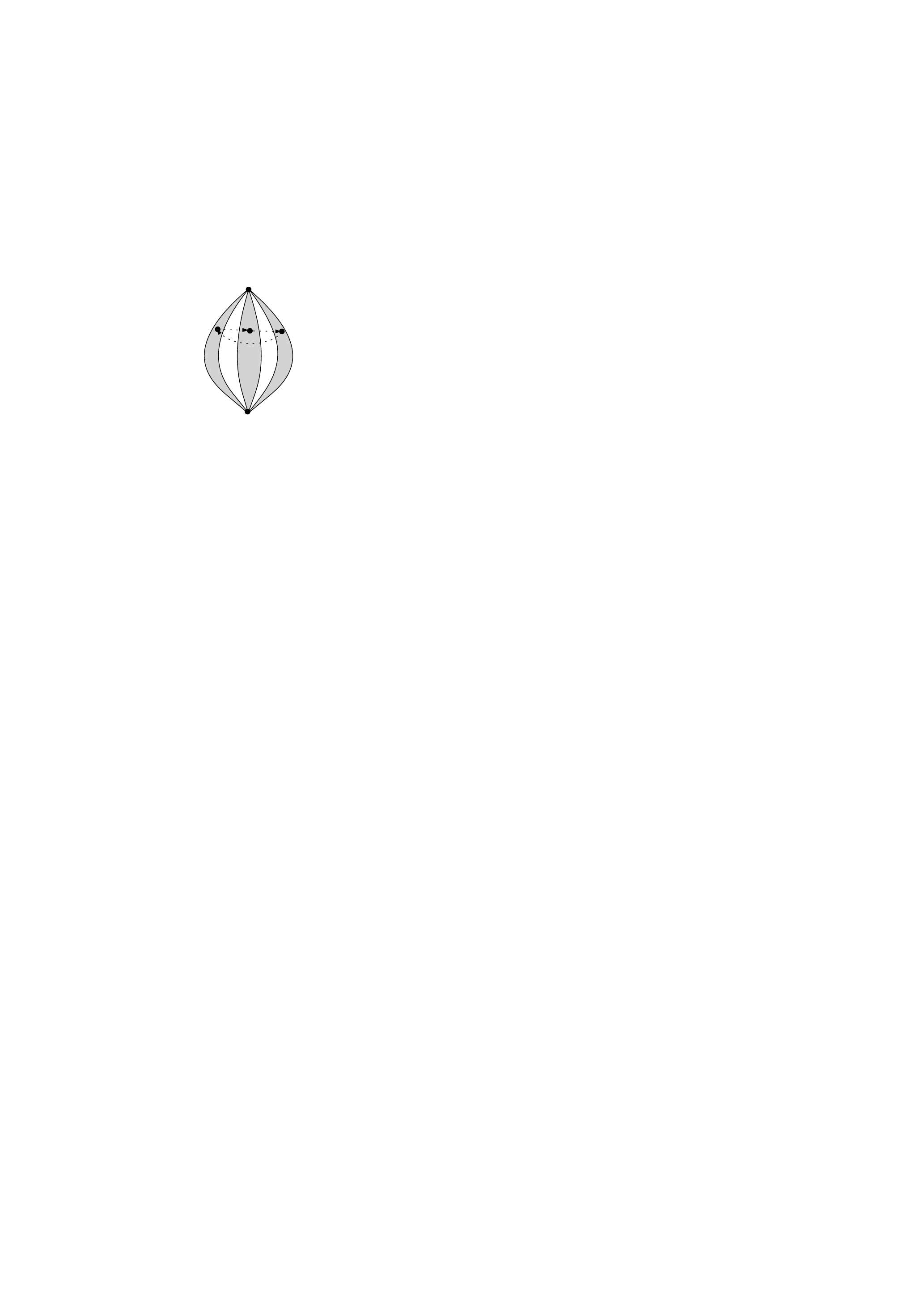}
			\subcaption{}\label{fig:pnode1}
		\end{subfigure}
		\hfil
		\begin{subfigure}{.1\textwidth}
			\centering
			\includegraphics[scale = 0.8]{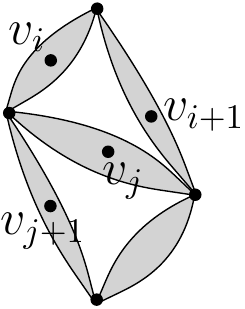}
			\subcaption{}	\label{fig:rnode2}
		\end{subfigure}
		\caption{The cases in which the algorithm rejects an instance. Gray-filled regions represent pertinent graphs of the virtual edges of the skeleton of the node under investigation. The virtual edge representing the parent of the considered node is not shown. (a)--(c) The node $\nu$ is an S-node. (d) The node $\nu$ is an R-node, $u\notin L_G(v)$, and $u\notin R_G(v)$. (e) For some P-node $\nu$ of $\mathcal T$ there is a directed cycle in $LR_\nu$. (f) For some R-node $\nu$ of $\mathcal T$, we have $preserve(\nu)=$\texttt{true} and $flip(\nu)=$\texttt{true}.}
		\label{fig:st-algorithm}
	\end{figure}
	
	Consider step 1.(c)iii. If $\nu$ is a P-node, then $x_u$ and $x_v$ are two virtual edges of $\sk(\nu)$. In fact if, say, $u$ were a vertex of $\sk(\nu)$, then it would also be a vertex of the skeleton of the parent of $\nu$ in $\mathcal T$; hence, its proper allocation node would be a proper ancestor of $\nu$, thus contradicting the fact that $\nu$ is the lowest common ancestor of the proper allocation nodes of $u$ and $v$ in $\mathcal T$. The insertion of the directed edge in $LR_\nu$ from the vertex corresponding to $x_u$ to the vertex corresponding to $x_v$ enforces $u\in L_G(v)$ in the upward embedding of $G$ that is constructed by the algorithm, if any.
	
	Consider step 1.(c)iv. If $\nu$ is an R-node, then four cases are possible, as for an S-node. In all of them we check whether the arbitrarily chosen upward embedding of $\skel(\nu)$ ensures that $u\in L_G(v)$ (then setting $preserve(\nu)=$\texttt{true} ensures that such an embedding  of $\skel(\nu)$ is the one in the upward embedding of $G$ that is constructed by the algorithm, if any), whether flipping the arbitrarily chosen upward embedding of $\skel(\nu)$ ensures that $u\in L_G(v)$ (then setting $flip(\nu)=$\texttt{true} ensures that such an embedding of $\skel(\nu)$ is flipped in the upward embedding of $G$ that is constructed by the algorithm, if any), or whether $u\in L_G(v)$ is not obtained by any choice of the embedding of $\skel(\nu)$, since $u\in S_G(v)$ or $u\in P_G(v)$; see~\cref{fig:rnode1}.
	
	Consider step 2. If a directed cycle is detected in $LR_\nu$, for some P-node $\nu$, then the left-to-right order of the pertinent graphs of the virtual edges of $\sk(\nu)$ in any upward embedding of $G$ does not satisfy all the constraints stemming from the left-to-right order of the vertices of $H$ sharing the same $y$-coordinate in $\Gamma_H$; see~\cref{fig:pnode1}.
	
	Consider step 3. If there is an R-node $\nu$ with $preserve(\nu)=$\texttt{true} and $flip(\nu)=$\texttt{true}, then no flip of $\sk(\nu)$ allows us to construct an upward embedding of $G$ satisfying Condition~2. For example, in~\cref{fig:rnode2} the pair $[v_i,v_{i+1}]$ forces the upward embedding of $\sk(\nu)$ to be the one shown in the illustration, given that $v_{i+1} \in R_G(v_i)$ in such an embedding; however the pair $[v_j,v_{j+1}]$ forces the upward embedding of $\sk(\nu)$ not to be the one shown in the illustration, given that $v_{j+1} \in L_G(v_j)$.
	
	Finally, consider step 4. In order to prove its correctness, we construct an upward embedding of $G$ satisfying Condition~2; this is done as follows. For each R-node $\nu$ of $\mathcal T$, if $flip(\nu)=$\texttt{true}, then we flip the arbitrarily chosen upward embedding of $\skel(\nu)$, otherwise we keep it as it is. 
	For each P-node $\nu$ in $\mathcal T$, we select a total order for the virtual edges of $\skel(\nu)$ corresponding to children of $\nu$ in $\mathcal T$ that extends the partial order given by $LR_\nu$. Finally, if $(s,t)$ was not originally in $G$, then we remove it. This leads to an upward embedding satisfying Condition~2 of~\cref{le:characterization-upward-plane}.
\end{proof}

\section{Directed Paths and Cycles}\label{se:pathsANDcycles}

In this section we study the upward planarity extension problem for instances $\langle G, H, \Gamma_H \rangle$ such that $G$ is a directed path or cycle. Determining the time complexity of the {\sc UPE} and {\sc UPE-FUE} problems for such instances, despite the simplicity of their structure, has proved to be very challenging. However, in the following we exhibit polynomial-time decision algorithms for the cases in which $H$ does not contain edges and no two vertices share the same $y$-coordinate in $\Gamma_H$. We start with the {\sc UPE-FUE} problem for directed paths.

\begin{restatable}{theorem}{TheoremAlgoUPEFUEpath}\label{th:algo-UPE-FUE-path}
	The {\sc UPE-FUE} problem can be solved in $O(n^4)$ time for instances $\langle G, H, \Gamma_H \rangle$ such that $G$ is an $n$-vertex directed path with a given upward embedding, $H$ contains no edges, and no two vertices share the same $y$-coordinate in $\Gamma_H$.
\end{restatable}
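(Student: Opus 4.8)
The plan is to decide feasibility by a horizontal sweep that is encoded as a dynamic program over the vertices of $H$ ordered by $y$-coordinate. First I would isolate what is actually free to choose. Since $H$ contains no edges, $\Gamma_H$ merely pins the positions of the vertices in $V(H)$, and since $G$ is a path its prescribed upward embedding is determined by the orientation pattern together with the left-to-right order of the two incident edges at each source and at each sink. Hence the only residual freedom is (a) the vertical order into which the remaining vertices in $V(G)\setminus V(H)$ are interleaved among the fixed $y$-levels of $H$, and (b) the left-to-right routing of the edges at each height. Because the drawing is upward, the $y$-coordinates of $v_1,\dots,v_n$ along the path must increase on every edge oriented upward and decrease on every edge oriented downward; the sources and sinks of $G$ are exactly the local minima and maxima of this height profile, and the prescribed embedding fixes, at each of them, which incident edge lies to the left.

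Second I would set up the sweep. Let a horizontal line $\ell$ move upward. Its intersection with a valid drawing is a left-to-right sequence of points, each a crossing of one path-edge with $\ell$. The part of the path below $\ell$ is a set of maximal sub-paths, i.e.\ contiguous intervals of $v_1,\dots,v_n$, which by planarity are pairwise non-crossing and hence nest; their endpoints are precisely the crossings on $\ell$, together with $v_1$ and/or $v_n$ when these lie below. The combinatorial state at $\ell$ is therefore the left-to-right list of crossing edges together with the non-crossing matching recording how they are joined below. I would process the distinct levels $y(h_1)<\dots<y(h_m)$ from bottom to top; between two consecutive levels the only events are insertions of vertices of $V(G)\setminus V(H)$, whose heights we may choose freely, so each transition selects a set of such vertices and then processes the next $H$-vertex at its prescribed position. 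A source opens two adjacent strands in the embedding-prescribed order; a sink merges two adjacent strands; a degree-two regular vertex is a pass-through; an endpoint opens or closes a single strand; and on reaching $h_j$ we must be able to place it at a crossing whose $x$-coordinate equals $x(h_j)$ and whose left-to-right rank agrees with the current strand order, while all previously placed $H$-vertices already sit at their fixed points. These are exactly feasibility conditions in the spirit of \cref{le:characterization-upward-plane}, specialized to a path.

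Third, I would turn the sweep into a dynamic program and certify realizability. The transition rules above define, for each level, which states are reachable from which, and an accepting run ends with the whole path drawn as a single strand above the top line. From an accepting run one reconstructs an actual upward planar drawing extending $\Gamma_H$ by an incremental strip-by-strip construction analogous to the sufficiency argument of \cref{le:characterization-upward-plane}: one chooses concrete heights for the interleaved free vertices and draws each edge as a $y$-monotone polyline that stays arbitrarily close to its prescribed channel, so that no new crossings arise; conversely, any valid drawing induces an accepting run by reading off its sweep. For the running time, there are $O(n)$ levels, and I would bound the per-level transition work (including the enumeration of free-vertex interleavings) by an $O(n)$ secondary computation; the dominant cost is then the number of reachable states, which I would aim to keep at $O(n^{2})$ per level, giving the stated $O(n^{4})$ bound.

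The main obstacle is precisely bounding the number of reachable sweep states to a polynomial, since the naive description, a left-to-right list of crossing edges with their nesting, has exponentially many instances. I would resolve this with a structural lemma showing that, under a fixed embedding and with the $x$-coordinates of all $H$-vertices fixed, the reachable nesting patterns are severely constrained, so that the essential content of a state is captured by a bounded number of indices into the path, for instance the contiguous sub-path that is currently ``active'' between its two bounding strands. Proving this compression, while simultaneously enumerating correctly the ways in which free vertices may be interleaved between two consecutive $H$-levels without creating spurious crossings or violating the prescribed order at the sources and sinks opened in the strip, is the delicate part of the argument and is what forces a sophisticated dynamic program; once it is in place, the realizability argument and the complexity bookkeeping are routine.
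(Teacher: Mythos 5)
There is a genuine gap, and you have in fact named it yourself: your entire argument hinges on an unproven ``structural lemma'' asserting that the reachable sweep states compress to $O(n^2)$ descriptions per level. Nothing in your proposal makes this plausible, and on inspection it looks false as stated. The portion of the path below a horizontal line is a collection of up to $\Theta(n)$ maximal contiguous subpaths, each contributing (generically) two strands crossing the line; a state must record the left-to-right order of these strands together with the non-crossing matching that pairs strands of the same component. Even with all $H$-vertices pinned at distinct $y$-coordinates, arbitrarily many components consisting entirely of free vertices can coexist between two consecutive $H$-levels, and their interleavings along the line are not obviously canonicalizable: the prescribed upward embedding constrains the local left-to-right order at each source and sink, but the global strand order is a nesting pattern with exponentially many instances. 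Your suggestion that the state reduces to ``the contiguous sub-path that is currently active between its two bounding strands'' is exactly the claim that would need a proof, and without it neither the correctness of the transitions nor the $O(n^4)$ bound follows. The per-level transition work (choosing which free vertices to interleave and in what strand positions) is likewise asserted to be $O(n)$ without an argument.

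For contrast, the paper's proof sidesteps the sweep entirely by decomposing along the path rather than by height. It fills a table $t(u_i,u_j,u_m,u_M)$ recording whether the subpath $G_{i,j}$ admits an upward planar extension in which $u_m$ and $u_M$ are the lowest and highest vertices; the key structural content is a set of gluing claims (\cref{cl:paths-no-endv,cl:paths-one-endv,cl:paths-two-endv}) showing that a drawing of $G_{i,j}$ with prescribed extremes splits at $u_m$ and $u_M$ into two or three independent subdrawings that can be recombined, subject to (a) $u_m$ (resp.\ $u_M$) being globally extreme among the pinned vertices of the subinstance and (b) a compatibility condition between $\mathcal P(u_{M'})$ and $\mathcal S(u_{m'})$ in the prescribed embedding when both extremes are end-vertices. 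Because subproblems are indexed by four path positions, the state space is $O(n^4)$ \emph{by construction}, and the sufficiency direction of each claim carries out exactly the kind of region-by-region redrawing (curves $\gamma_m,\gamma_M$ separating the pieces) that your sweep would need but never pins down. If you want to salvage your approach, the honest comparison is that the paper's decomposition at the global minimum and maximum is precisely the compression device you are missing: it replaces the unbounded nesting pattern of a sweep state by two indices into the path.
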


\begin{proof}
	Let $G=(u_1,\dots,u_n)$. We show a decision algorithm for the {\sc UPE-FUE} problem employing dynamic programming. The idea is to decide whether $\langle G, H, \Gamma_H \rangle$ is a positive instance of the {\sc UPE-FUE} problem based on whether the subpaths of $G$ admit upward planar extensions with given upward embedding. 
	
	
	In particular, we fill a table with entries $t(u_i,u_j,u_m,u_M)$, for all the indices $i,j,m,M\in \{1,\dots,n\}$ such that $i \leq m \leq j$ and $i \leq M \leq j$, with $i\neq j$ and $m\neq M$. Consider the subpath $G_{i,j}=(u_i,\dots,u_j)$ of $G$. 
	Let $\Gamma_{H,i,j}$ be the restriction of $\Gamma_H$ to the vertices that belong to $G_{i,j}$. The entry $t(u_i,u_j,u_m,u_M)$ has value {\sc true} if there is an upward planar drawing $\Gamma_{G,i,j}$ of $G_{i,j}$ that extends $\Gamma_{H,i,j}$ and such that $u_m$ and $u_M$ are the vertices with the smallest and largest $y$-coordinate in $\Gamma_{G,i,j}$, respectively; the entry $t(u_i,u_j,u_m,u_M)$ has value {\sc false} otherwise.  
	
	We start by computing the entries $t(u_i,u_j,u_m,u_M)$ such that $G_{i,j}$ is a monotone path; these include the entries $t(u_i,u_{i+1},u_m,u_M)$. Assume that the edge $(u_i,u_{i+1})$ of $G$ is outgoing $u_i$, the other case is symmetric. Then $t(u_i,u_j,u_m,u_M)=$ {\sc true} if and only if the following conditions are satisfied: (1) $m=i$; (2) $M=j$; and (3) for any two indices $i'$ and $j'$ such that $i\leq i'<j'\leq j$ and such that $u_{i'},u_{j'}\in V(H_{i,j})$, we have $y(u_{i'})<y(u_{j'})$ in $\Gamma_{H,i,j}$. 
	
	Assume now that $G_{i,j}$ is not a monotone path and that the values of all the entries $t(u_i,u_j,u_m,u_M)$ such that $1\leq j-i\leq x$ have been computed, for some $x\in \{1,2,\dots\}$. After the computation of the entries $t(u_i,u_j,u_m,u_M)$ such that $G_{i,j}$ is a monotone path, this is indeed the case with $x=1$. We compute the values of the entries $t(u_i,u_j,u_m,u_M)$ such that $j-i=x+1$. We distinguish three cases, based on how many of the equalities $i=m$, $i=M$, $j=m$, and $j=M$ are satisfied, that is, based on how many vertices among $u_m$ and $u_M$ are end-vertices of $G_{i,j}$. Refer to~\cref{fig:paths-3cases}.
	
	\begin{figure}[htb]
		\centering
		\begin{subfigure}{.3\textwidth}
			\includegraphics[scale=0.6]{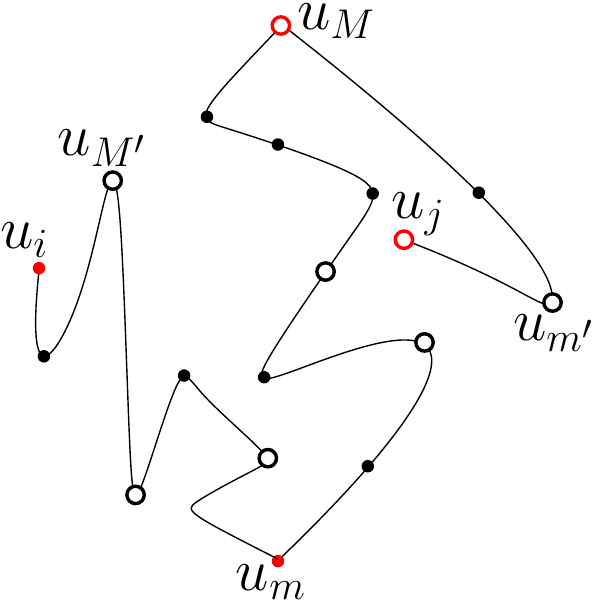}
			\subcaption{}
		\end{subfigure}
		\begin{subfigure}{.3\textwidth}
			\includegraphics[scale=0.6]{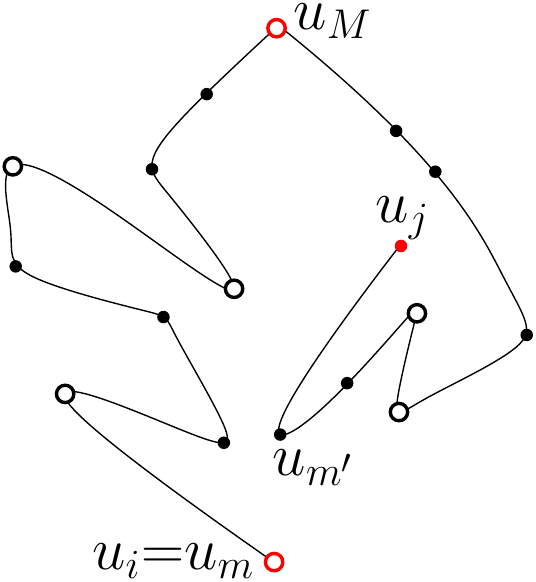}
			\subcaption{}
		\end{subfigure}
		\begin{subfigure}{.3\textwidth}
			\includegraphics[scale=0.6]{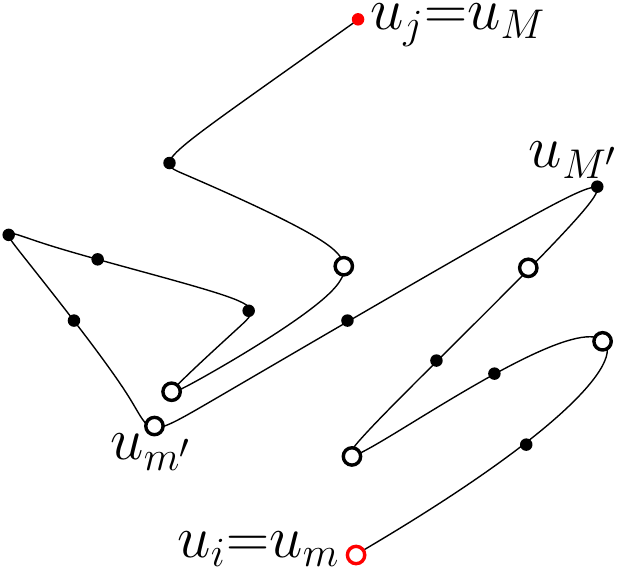}
			\subcaption{}
		\end{subfigure}
		\caption{Three cases that might occur in an upward planar drawing of a path $G_{i,j}$. (a) The vertices $u_m$ and $u_M$ with the smallest and largest $y$-coordinate are not end-vertices of $G_{i,j}$; (b) one of $u_m$ and $u_M$ is an end-vertex of $G_{i,j}$, while the other one is not; and (c) $u_m$ and $u_M$ are both end-vertices of $G_{i,j}$.}
		\label{fig:paths-3cases}
	\end{figure}
	
	{\em Suppose first that neither $u_m$ nor $u_M$ is an end-vertex of $G_{i,j}$.} Recall that $m\neq M$; in fact, assume that $m<M$, the case in which $m>M$ can be treated symmetrically. We have the following.
	
	\begin{myclaim} \label{cl:paths-no-endv}
		$t(u_i,u_j,u_m,u_M)=$ {\sc true} if and only if there exist indices $M'\in \{i,\dots,m-1\}$ and $m'\in \{M+1,\dots,j\}$ such that the following conditions hold true:
		\begin{enumerate} [(1)]
			\item $t(u_i,u_m,u_m,u_{M'})=$ {\sc true};
			\item $t(u_m,u_M,u_m,u_M)=$ {\sc true};
			\item $t(u_M,u_j,u_{m'},u_M)=$ {\sc true};
			\item either $u_m$ does not belong to $H$ or $u_m$ has the smallest $y$-coordinate among the vertices in $\Gamma_{H,i,j}$; and 
			\item either $u_M$ does not belong to $H$ or $u_M$ has the largest $y$-coordinate among the vertices in $\Gamma_{H,i,j}$.
		\end{enumerate}  
	\end{myclaim}  
	
	\begin{proof}
		We first prove the necessity. Suppose that $t(u_i,u_j,u_m,u_M)=$ {\sc true}, hence an upward planar drawing $\Gamma_{G,i,j}$ of $G_{i,j}$ exists that extends $\Gamma_{H,i,j}$ and in which $u_m$ and $u_M$ are the vertices with the smallest and largest $y$-coordinate, respectively. Restricting $\Gamma_{G,i,j}$ to the vertices and edges of $G_{m,M}$ yields an upward planar drawing $\Gamma_{G,m,M}$ of $G_{m,M}$ that extends $\Gamma_{H,m,M}$ and in which $u_m$ and $u_M$ are the vertices with the smallest and largest $y$-coordinate, respectively, which proves Condition~(2). Let $u_{M'}$ be the vertex of $G_{i,m}$ with the largest $y$-coordinate in $\Gamma_{G,i,j}$. Then restricting $\Gamma_{G,i,j}$ to the vertices and edges of $G_{i,m}$ yields an upward planar drawing $\Gamma_{G,i,m}$ of $G_{i,m}$ that extends $\Gamma_{H,i,m}$ and in which $u_m$ and $u_{M'}$ are the vertices with the smallest and largest $y$-coordinate, respectively, which proves Condition~(1). The proof of Condition~(3) is analogous. By assumption $u_m$ is the vertex with the smallest $y$-coordinate in $\Gamma_{G,i,j}$, hence Condition~(4) follows, given that $\Gamma_{G,i,j}$ extends $\Gamma_{H,i,j}$. Condition~(5) is proved analogously.
		
		The proof of the sufficiency is more involved. By Conditions~(1)--(3) there exist upward planar drawings $\Gamma_{G,i,m}$, $\Gamma_{G,m,M}$, and $\Gamma_{G,M,j}$ of $G_{i,m}$, $G_{m,M}$, and $G_{M,j}$ extending $\Gamma_{H,i,m}$, $\Gamma_{H,m,M}$, and $\Gamma_{H,M,j}$ in which the vertices with the smallest and largest $y$-coordinate are $u_{m}$ and $u_{M'}$, $u_{m}$ and $u_{M}$, and $u_{m'}$ and $u_{M}$, respectively, for some $M'\in \{i,\dots,m-1\}$ and $m'\in \{M+1,\dots,j\}$. We are going to glue together these drawings in order to construct an upward planar drawing $\Gamma_{G,i,j}$ of $G_{i,j}$ that extends $\Gamma_{H,i,j}$ and in which $u_m$ and $u_M$ are the vertices with the smallest and largest $y$-coordinate, respectively. However, before doing so, we need to perform some modifications on $\Gamma_{G,i,m}$, $\Gamma_{G,m,M}$, and $\Gamma_{G,M,j}$, while not altering the positions of the vertices of $H_{i,m}$, $H_{m,M}$, and $H_{M,j}$, respectively. 
		
		The first set of modifications aim to establish that: 
		\begin{enumerate}[(i)]
			\item $u_m$ is at the same point in $\Gamma_{G,i,m}$ and in $\Gamma_{G,m,M}$, and it has the smallest $y$-coordinate among all the vertices in $\Gamma_{G,i,m}$, $\Gamma_{G,m,M}$, and $\Gamma_{G,M,j}$; and
			\item $u_M$ is at the same point in $\Gamma_{G,m,M}$ and in $\Gamma_{G,M,j}$, and it has the largest $y$-coordinate among all the vertices in $\Gamma_{G,i,m}$, $\Gamma_{G,m,M}$, and $\Gamma_{G,M,j}$. 
		\end{enumerate}
		Note that $u_m$ is the vertex with the smallest $y$-coordinate in both $\Gamma_{G,i,m}$ and $\Gamma_{G,m,M}$, however there might be a vertex in $\Gamma_{G,M,j}$ whose $y$-coordinate is smaller than or equal to $y(u_m)$. A similar problem might occur for $u_M$. 
		
		\begin{itemize}
			\item If $u_m\notin V(H)$, then let $y^*$ be the smallest $y$-coordinate of any vertex in $\Gamma_{G,i,m}$, $\Gamma_{G,m,M}$, or $\Gamma_{G,M,j}$. Let $p\equiv (x,y)$, where $x$ and $y$ are real numbers such that $y<y^*$. Let $\epsilon>0$ be smaller than the vertical distance between any two vertices in $\Gamma_{G,i,m}$, $\Gamma_{G,m,M}$, and $\Gamma_{G,M,j}$. We modify $\Gamma_{G,i,m}$ as follows. We delete the part of $\Gamma_{G,i,m}$ inside a disk with radius $\epsilon$ centered at $u_m$. Since $\Gamma_{G,i,m}$ is an upward drawing, since $u_m$ is the vertex with the smallest $y$-coordinate in $\Gamma_{G,i,m}$, and by the choice of $\epsilon$, only $u_m$ and part of the edge $(u_{m},u_{m-1})$ is deleted from $\Gamma_{G,i,m}$. We place $u_m$ at $p$. We extend the part of the edge $(u_{m},u_{m-1})$ still in $\Gamma_{G,i,m}$ downwards, until it reaches a $y$-coordinate smaller than $y^*$ and larger than $y$, and then we connect it to $u_m$. We perform an analogous modification of $\Gamma_{G,m,M}$. 
			
			\begin{figure}[htb]
				\centering
				\begin{subfigure}{.45\textwidth}
					\centering
					\includegraphics[scale=.6]{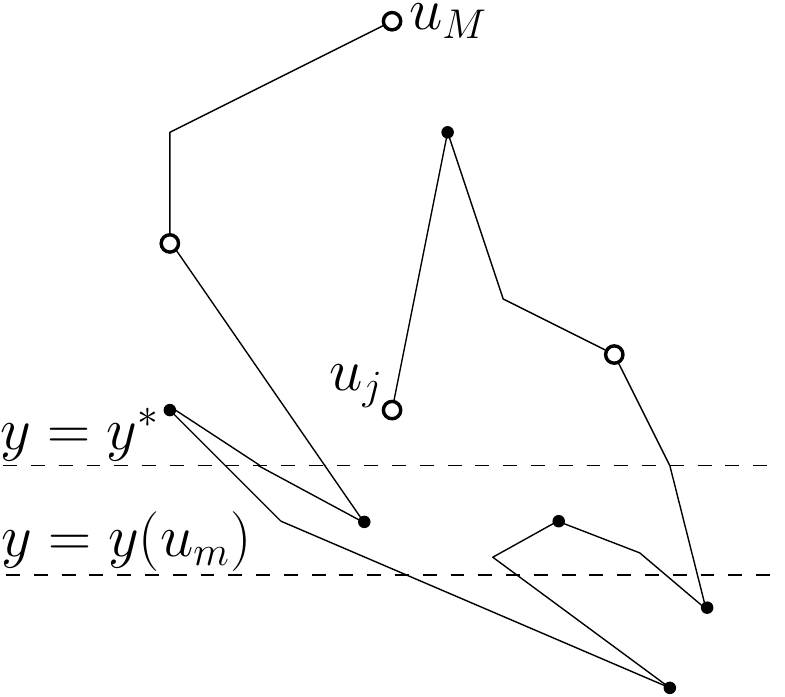}
				\end{subfigure}
				\begin{subfigure}{.45\textwidth}
					\centering
					\includegraphics[scale=.6]{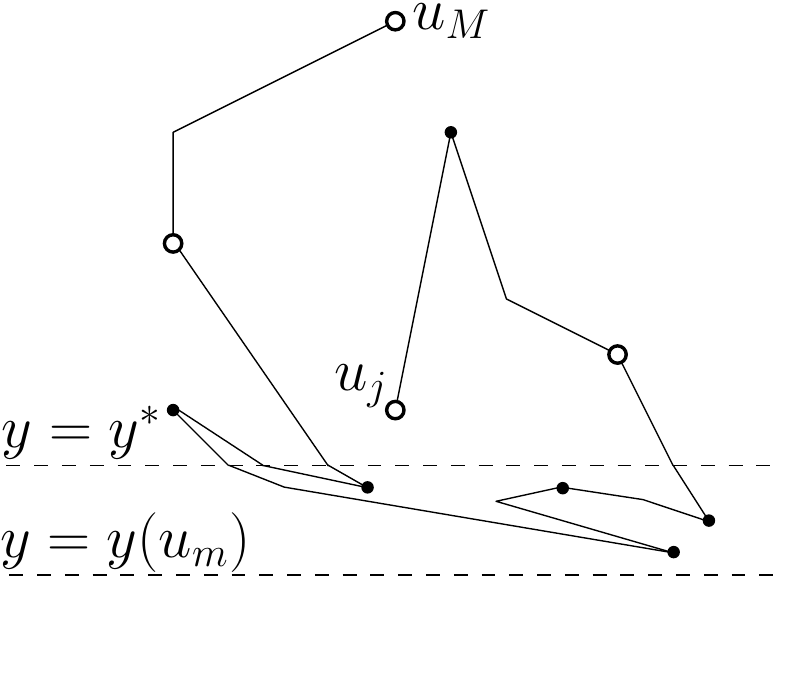}
				\end{subfigure}
				\caption{Modifying $\Gamma_{G,M,j}$ if $u_m\in V(H)$. Illustrations (a) and (b) show $\Gamma_{G,M,j}$ before and after the modification, respectively.}
				\label{fig:paths-claim1-vertical}
			\end{figure}
			
			\item If $u_m\in V(H)$, as in~\cref{fig:paths-claim1-vertical}, then $u_m$ is at the same point in $\Gamma_{G,i,m}$ and $\Gamma_{G,m,M}$, as such a point belongs to $\Gamma_{H,i,m}$ and $\Gamma_{H,m,M}$. Let $y^*$ be a real number larger than $y(u_m)$ and smaller than the $y$-coordinate of any vertex in $V(H)\cap \Gamma_{H,M,j}$. Such a real number exists by Condition~(4). Vertically scale down the part of $\Gamma_{G,M,j}$ in the half-plane $y\leq y^*$, while keeping fixed the points on the line $y=y^*$. As long as the scale factor is sufficiently small, the entire drawing $\Gamma_{G,M,j}$ lies above the line $y=y(u_m)$ after the scaling. 
		\end{itemize}


		The drawings $\Gamma_{G,i,m}$, $\Gamma_{G,m,M}$, and $\Gamma_{G,M,j}$ can be modified analogously in order to accomplish property~(ii) above. 
		
		
		Gluing together $\Gamma_{G,i,m}$, $\Gamma_{G,m,M}$, and $\Gamma_{G,M,j}$ results in an upward drawing $\Gamma'_{G,i,j}$ of $G_{i,j}$ that extends $\Gamma_{H,i,j}$ and in which $u_m$ and $u_M$ are the vertices with the smallest and largest $y$-coordinate, respectively. However, two problems arise. First, $\Gamma'_{G,i,j}$ might contain crossings; indeed, while each of $\Gamma_{G,i,m}$, $\Gamma_{G,m,M}$, and $\Gamma_{G,M,j}$ is planar, two edges from different graphs among $G_{i,m}$, $G_{m,M}$, and $G_{M,j}$ might cross each other. Second, the left-to-right order of the edges outgoing at $u_m$ in $\Gamma'_{G,i,j}$ might not correspond to $\mathcal S(u_m)$ and the left-to-right order of the edges incoming at $u_M$ in $\Gamma'_{G,i,j}$ might not correspond to $\mathcal P(u_M)$. 
		
		We show that both these problems can be overcome by redrawing the curves representing the edges of $G_{i,m}$, $G_{m,M}$, and $G_{M,j}$, while leaving the position of every vertex unaltered. Refer to~\cref{fig:paths-claim1-horizontal}. Assume first that  $\mathcal S(u_m)=[u_{m-1},u_{m+1}]$ and $\mathcal P(u_M)=[u_{M-1},u_{M+1}]$.
		
		\begin{figure}[htb]
			\centering
			\begin{subfigure}{.3\textwidth}
				\centering
				\includegraphics[height=\textwidth]{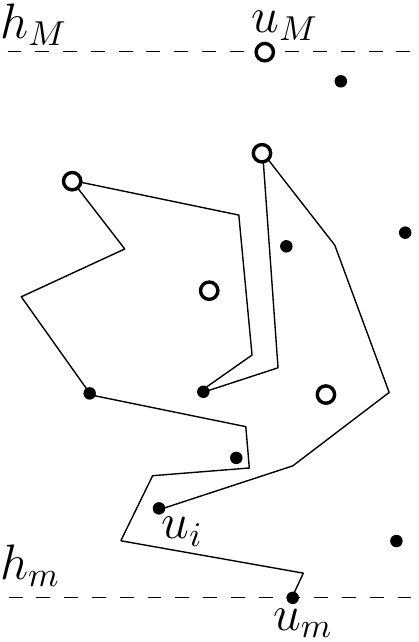}
			\end{subfigure}	
			\begin{subfigure}{.3\textwidth}
				\centering
				\includegraphics[height=\textwidth]{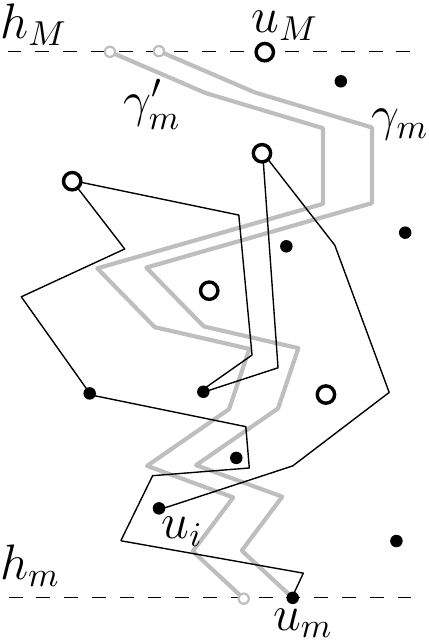}
			\end{subfigure}
			\begin{subfigure}{.3\textwidth}
				\centering
				\includegraphics[height=\textwidth]{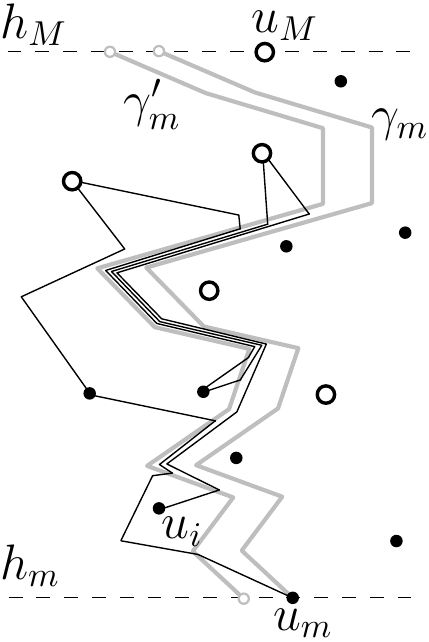}
			\end{subfigure}
			\caption{Modifications to $\Gamma'_{G,i,j}$ that yield to an upward planar drawing of $G_{i,m}$ which lies in the interior of $R_{i,m}$, except at $u_m$. (a) The drawing $\Gamma_{G,i,m}$ together with the vertices of $G_{m+1,j}$, as placed in $\Gamma'_{G,i,j}$; the edges incident to these vertices are not shown. (b) The curves $\gamma_m$ and $\gamma'_m$. (c) Redrawing parts of the edges in $G_{i,m}$ so that they lie in $R_{i,m}$.}
			\label{fig:paths-claim1-horizontal}
		\end{figure}
		
		Let the vertices of $G_{i,j}$ be placed as in $\Gamma'_{G,i,j}$. Let $h_m$ and $h_M$ be the horizontal lines through $u_m$ and $u_M$. Let $S$ be the horizontal strip delimited by $h_m$ and $h_M$. Let $\gamma_m$ be a $y$-monotone curve that connects $u_m$ with a point of $h_M$, that does not pass through any vertex of $G_{i,j}$ other than $u_m$, and such that the part of $S$ to the left of $\gamma_m$ contains in its interior all the vertices of $G_{i,m-1}$ and no vertex of $G_{m+1,j}$. Analogously, let $\gamma_M$ be a $y$-monotone line that connects $u_M$ with a point of $h_m$, that does not pass through any vertex of $G_{i,j}$ other than $u_M$, that does not intersect $\gamma_m$, and such that the part of $S$ to the right of $\gamma_M$ contains in its interior all the vertices of $G_{M+1,j}$ and no vertex of $G_{i,M-1}$. It is easy to see that such lines $\gamma_m$ and $\gamma_M$ exist; in particular, notice that there is no constraint on how these lines might intersect the edges of $G_{i,j}$ in $\Gamma'_{G,i,j}$.  
		
		The lines $\gamma_m$ and $\gamma_M$ partition $S$ into three regions: one unbounded region $R_{i,m}$ to the left of  $\gamma_m$, one bounded region $R_{m,M}$ between $\gamma_m$ and $\gamma_M$, and one unbounded region $R_{M,j}$ to the right of $\gamma_M$. Notice that $R_{i,m}$, $R_{m,M}$, and $R_{M,j}$ contain in their interiors all the points at which the vertices of $G_{i,m}$, $G_{m,M}$, and $G_{M,j}$ are placed in $\Gamma'_{G,i,j}$, respectively, except for the vertex $u_m$ which is on the common boundary of $R_{i,m}$ and $R_{m,M}$, and for the vertex $u_M$ which is on the common boundary of $R_{m,M}$ and $R_{M,j}$. We now redraw the edges of $G_{i,m}$, $G_{m,M}$, and $G_{M,j}$ in the interiors of $R_{i,m}$, $R_{m,M}$, and $R_{M,j}$, respectively. 
		
		We construct an upward planar drawing of $G_{i,m}$ inside $R_{i,m}$ by modifying $\Gamma_{G,i,m}$ via a homeomorphism of the plane which does not alter the positions of the vertices of $G_{i,m}$. More precisely, this can be done as follows. Let $\epsilon>0$ be the minimum horizontal distance between a vertex of $G_{i,m}$ different from $u_m$ and $\gamma_m$. Let $\gamma'_m$ be the translation of $\gamma'_m$ by $\epsilon/2$ to the left. Delete the part of $\Gamma_{G,i,m}$ to the right of $\gamma'_m$, except for the vertex $u_m$. Now any $y$-monotone curve that has been deleted from $\Gamma_{G,i,m}$ and that used to connect two points $p$ and $q$ on $\gamma'_m$ can be replaced by a $y$-monotone curve that connects $p$ and $q$ and that lies to the right of $\gamma'_m$, except at $p$ and $q$, and to the left of $\gamma_m$. The end-points of any two such curves do not alternate along $\gamma'_m$, given that $\Gamma_{G,i,m}$ is planar, hence all such curves can be drawn without intersections. The curves that used to connect a point $p$ with $u_m$ can be similarly redrawn to the right of $\gamma'_m$, except at $p$, and to the left of $\gamma_m$, except at $u_m$. 
		
		A similar modification allows us to construct an upward planar drawing of $G_{m,M}$ inside $R_{m,M}$ and an upward planar drawing of $G_{M,j}$ inside $R_{M,j}$. Since the interiors of these regions are disjoint, the edges of two different graphs among $G_{i,m}$, $G_{m,M}$, and $G_{M,j}$ do not intersect, except for the edges $(u_{m},u_{m-1})$ and $(u_{m},u_{m+1})$, which intersect at $u_m$, and for the edges $(u_{M},u_{M-1})$ and $(u_{M},u_{M+1})$, which intersect at $u_M$. Further, the left-to-right order of the edges outgoing at $u_m$ is $(u_{m},u_{m-1}),(u_{m},u_{m+1})$, which corresponds to $\mathcal S(u_m)=[u_{m-1},u_{m+1}]$, and the left-to-right order of the edges incoming at $u_M$ is $(u_{M-1},u_{M}),(u_{M+1},u_{M})$, which corresponds to $\mathcal P(u_M)=[u_{M-1},u_{M+1}]$. The drawing $\Gamma_{G,i,j}$ obtained as the union of the constructed drawings of $G_{i,m}$, $G_{m,M}$, and $G_{M,j}$ is hence the desired drawing of $G_{i,j}$.
		
		If $\mathcal S(u_m)=[u_{m+1},u_{m-1}]$ and $\mathcal P(u_M)=[u_{M+1},u_{M-1}]$, the construction is symmetric. In particular, $\gamma_m$ is defined as before, except that the part of $S$ to the {\em right} (and not to the left) of $\gamma_m$ contains in its interior all the vertices of $G_{i,m-1}$ and no vertex of $G_{m+1,j}$, and the part of $S$ to the {\em left} (and not to the right) of $\gamma_M$ contains in its interior all the vertices of $G_{M+1,j}$ and no vertex of $G_{i,M-1}$.
		
		If $\mathcal S(u_m)=[u_{m-1},u_{m+1}]$ and $\mathcal P(u_M)=[u_{M+1},u_{M-1}]$, then $\gamma_m$ is defined as in the case in which $\mathcal S(u_m)=[u_{m-1},u_{m+1}]$ and $\mathcal P(u_M)=[u_{M-1},u_{M+1}]$. However, the definition of $\gamma_M$ is now slightly different. Indeed, $\gamma_M$ is now a $y$-monotone line that connects $u_M$ with $u_m$ (and not just with any point of $h_m$), that does not pass through any vertex of $G_{i,j}$ other than $u_M$ and $u_m$, that does not intersect $\gamma_m$, except at $u_m$, and such that the part of $S$ to the right of $\gamma_M$ contains in its interior all the vertices of $G_{m+1,M-1}$ and no vertex of $G_{i,m-1}$ or $G_{M+1,j}$. The rest of the construction is analogous to the previous cases. 
		
		Finally, if $\mathcal S(u_m)=[u_{m+1},u_{m-1}]$ and $\mathcal P(u_M)=[u_{M-1},u_{M+1}]$, then the construction is symmetric to the case in which $\mathcal S(u_m)=[u_{m-1},u_{m+1}]$ and $\mathcal P(u_M)=[u_{M+1},u_{M-1}]$.

		This completes the proof of the claim. 
	\end{proof}
	
	{\em Suppose next that one of $u_m$ and $u_M$ is an end-vertex of $G_{i,j}$, while the other one is not.} Assume that $u_m=u_i$, the cases in which $u_m=u_j$, $u_M=u_i$, or $u_M=u_j$ can be treated analogously. Recall that $m\neq M$, hence $i<M<j$. We have the following.
	
	\begin{myclaim} \label{cl:paths-one-endv}
		$t(u_i,u_j,u_i,u_M)=$ {\sc true} if and only if there exists an index $m'\in \{M+1,\dots,j\}$ such that the following conditions hold true:
		\begin{enumerate}[(1)]
			\item $t(u_i,u_M,u_i,u_M)=$ {\sc true};
			\item $t(u_M,u_j,u_{m'},u_M)=$ {\sc true}; and
			\item either $u_i$ does not belong to $H$ or $u_i$ has the smallest $y$-coordinate among the vertices in~$\Gamma_H$.
		\end{enumerate}  
	\end{myclaim}  
	
	\begin{proof}
		The proof is very similar to (and in fact simpler than) the proof of~\cref{cl:paths-no-endv}. 
		
		Consider an upward planar drawing $\Gamma_{G,i,j}$ of $G_{i,j}$ in which $u_m$ and $u_M$ are the vertices with the smallest and largest $y$-coordinate, respectively. Restricting $\Gamma_{G,i,j}$ to the vertices and edges of $G_{i,M}$ (of $G_{M,j}$) provides an upward planar drawing $\Gamma_{G,i,M}$ of $G_{i,M}$ (resp.\ $\Gamma_{G,M,j}$ of $G_{M,j}$) extending $\Gamma_{H,i,M}$ (resp.\ $\Gamma_{H,M,j}$) in which the vertices with the smallest and largest $y$-coordinate are $u_i$ and $u_M$ (resp.\ $u_{m'}$ and $u_M$, for some $m'\in \{M+1,\dots,j\}$). This proves the necessity of Conditions~(1) and~(2). The property that $u_m$ is the vertex with the smallest $y$-coordinate in $\Gamma_{G,i,j}$ implies the necessity of Condition~(3).
		
		In order to prove the sufficiency we start from upward planar drawings $\Gamma_{G,i,M}$ and $\Gamma_{G,M,j}$ of $G_{i,M}$ and $G_{M,j}$ extending $\Gamma_{H,i,M}$ and $\Gamma_{H,M,j}$ in which the vertices with the smallest and largest $y$-coordinate are $u_{i}$ and $u_{M}$, and $u_{m'}$ and $u_{M}$, respectively, for some $m'\in \{M+1,\dots,j\}$. These drawings exist by Conditions~(1) and~(2). We then modify $\Gamma_{G,i,M}$ and/or $\Gamma_{G,M,j}$ so that the following properties hold true: 
		\begin{enumerate}[(i)]
			\item $u_M$ is at the same point in $\Gamma_{G,i,M}$ and $\Gamma_{G,M,j}$; and
			\item $u_i$ has the smallest $y$-coordinate among all the vertices in $\Gamma_{G,i,M}$ and $\Gamma_{G,M,j}$.
		\end{enumerate}
		In order to accomplish property~(ii) we act as follows. Let $y^*$ be a real number larger than $y(u_i)$ and smaller than the $y$-coordinate of any vertex in $V(H)\cap \Gamma_{H,M,j}$. Such a real number exists by Condition~(3). Vertically scale down the part of $\Gamma_{G,M,j}$ in the half-plane $y\leq y^*$, while keeping fixed the points on the line $y=y^*$. As long as the scale factor is sufficiently small, the entire drawing $\Gamma_{G,M,j}$ has a $y$-coordinate larger than $y(u_i)$ after the scaling. The satisfaction of property~(i) does not require any modifications to $\Gamma_{G,i,M}$ and $\Gamma_{G,M,j}$ if $u_M\in V(H)$; on the other hand, if $u_M\notin V(H)$, then $u_M$ and parts of its incident edges in $\Gamma_{G,i,M}$ and $\Gamma_{G,M,j}$ are redrawn so to let $u_M$ be at the same point in $\Gamma_{G,i,M}$ and $\Gamma_{G,M,j}$. 
		
		Now gluing together $\Gamma_{G,i,M}$ and $\Gamma_{G,M,j}$ results in an upward drawing $\Gamma'_{G,i,j}$ of $G_{i,j}$ that extends $\Gamma_{H,i,j}$ in which $u_i$ and $u_M$ are the vertices with the smallest and largest $y$-coordinate, respectively. However, $\Gamma'_{G,i,j}$ might contain crossings and the left-to-right order of the edges incoming at $u_M$ in $\Gamma'_{G,i,j}$ might not correspond to $\mathcal P(u_M)$. Hence, we redraw the curves representing the edges of $G_{i,M}$ and $G_{M,j}$, while leaving the position of every vertex unaltered. This is done by defining two internally-disjoint regions $R_{i,M}$ and $R_{M,j}$ that are separated by a $y$-monotone curve $\gamma_M$ through $u_M$ and that contain in their interiors the points at which the vertices of $G_{i,M}$ and $G_{M,j}$ are placed in $\Gamma'_{G,i,j}$, except for $u_M$ which is on the boundary of both such regions; whether $R_{i,M}$ is to the left or to the right of $R_{M,j}$ depends on whether $\mathcal P(u_M)=[u_{M-1},u_{M+1}]$ or $\mathcal P(u_M)=[u_{M+1},u_{M-1}]$, respectively. We redraw the edges of $G_{i,M}$ and $G_{M,j}$ inside $R_{i,M}$ and $R_{M,j}$, respectively. The drawing $\Gamma_{G,i,j}$ obtained as the union of the constructed drawings of $G_{i,M}$ and $G_{M,j}$ is the desired drawing of $G_{i,j}$. This completes the proof of the claim. 
	\end{proof}
	
	{\em Suppose finally that both $u_m$ and $u_M$ are end-vertices of $G_{i,j}$.} Assume that $u_m=u_i$ and $u_M=u_j$, the other case is symmetric. We have the following. 
	
	\begin{myclaim} \label{cl:paths-two-endv}
		$t(u_i,u_j,u_i,u_j)=$ {\sc true} if and only if there exist indices $M'\in \{i+1,\dots,j-2\}$ and $m'\in \{M'+1,\dots,j-1\}$ such that the following conditions hold true:
		\begin{enumerate}[(1)]
			\item $t(u_i,u_{M'},u_i,u_{M'})=$ {\sc true};
			\item $t(u_{M'},u_{m'},u_{m'},u_{M'})=$ {\sc true};
			\item $t(u_{m'},u_j,u_{m'},u_j)=$ {\sc true};
			\item either $u_i$ does not belong to $H$ or $u_i$ has the smallest $y$-coordinate among the vertices in $\Gamma_H$;  
			\item either $u_j$ does not belong to $H$ or $u_j$ has the largest $y$-coordinate among the vertices in $\Gamma_H$; and
			\item either $\mathcal P(u_{M'})=[u_{M'-1},u_{M'+1}]$ and $\mathcal S(u_{m'})=[u_{m'-1},u_{m'+1}]$, or $\mathcal P(u_{M'})=[u_{M'+1},u_{M'-1}]$ and $\mathcal S(u_{m'})=[u_{m'+1},u_{m'-1}]$.
		\end{enumerate}  
	\end{myclaim}  
	
	\begin{proof}
		We first prove the necessity. Suppose that $t(u_i,u_j,u_i,u_j)=$ {\sc true}, hence an upward planar drawing $\Gamma_{G,i,j}$ of $G_{i,j}$ exists that extends $\Gamma_{H,i,j}$ and in which $u_i$ and $u_j$ are the vertices with the smallest and largest $y$-coordinate, respectively. Since $G_{i,j}$ is not a monotone path, there exist internal vertices of $G_{i,j}$ which are sinks. Among all these vertices, let $u_{M'}$ be the one with the largest $y$-coordinate in $\Gamma_{G,i,j}$. Further, since $u_{M'}$ and $u_j$ are both sinks in $G_{i,j}$ (note that if $u_j$ were not a sink, then an upward drawing of $G_{i,j}$ in which $u_j$ is the vertex with the largest $y$-coordinate would not exist), there exist internal vertices of $G_{M',j}$ which are sources. Among all these vertices, let $u_{m'}$ be the one with the smallest $y$-coordinate in $\Gamma_{G,i,j}$. Note that $M'\in \{i+1,\dots,m-2\}$ and $m'\in \{M'+1,\dots,j-1\}$.
		
		Restricting $\Gamma_{G,i,j}$ to the vertices and edges of $G_{i,M'}$ yields an upward planar drawing $\Gamma_{G,i,M'}$ of $G_{i,M'}$ that extends $\Gamma_{H,i,M'}$ and in which $u_i$ and $u_{M'}$ are the vertices with the smallest and largest $y$-coordinate, respectively. In particular, no vertex $u$ of $\Gamma_{G,i,M'}$ has a $y$-coordinate larger than $u_{M'}$, as otherwise the sink $v$ such that there is a monotone path from $u$ to $v$ in $G_{i,M'}$ (possibly such a path is a single vertex if $u$ is a sink itself) would have a $y$-coordinate larger than $u_{M'}$, contradicting the choice of $u_{M'}$. This proves Condition~(1); the proofs of Conditions~(2) and~(3) are analogous. By assumption $u_i$ is the vertex with the smallest $y$-coordinate in $\Gamma_{G,i,j}$, hence Condition~(4) follows, given that $\Gamma_{G,i,j}$ extends $\Gamma_{H,i,j}$. Condition~(5) is proved analogously. 
		
		\begin{figure}[htb]
			\centering
			\includegraphics[scale=.7]{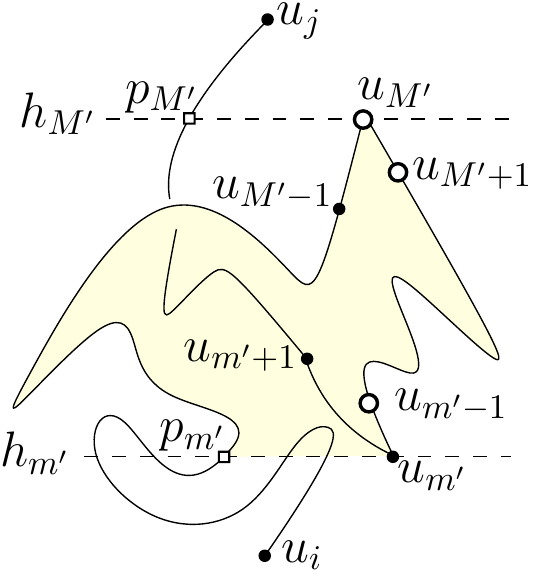}
			\label{fi:bridge1}
			\caption{Illustration for the proof of the necessity of Condition (6). The interior of $\mathcal C_m$ is colored light yellow.}
			\label{fig:paths-claim3-necessity}
		\end{figure}

		In order to prove the necessity of Condition (6) suppose, for a contradiction, that $\mathcal P(u_{M'})=[u_{M'-1},u_{M'+1}]$ and $\mathcal S(u_{m'})=[u_{m'+1},u_{m'-1}]$ (the case in which $\mathcal P(u_{M'})=[u_{M'+1},u_{M'-1}]$ and $\mathcal S(u_{m'})=[u_{m'-1},u_{m'+1}]$ can be treated analogously). Refer to~\cref{fig:paths-claim3-necessity}. Let $h_{M'}$ and $h_{m'}$ be the horizontal lines passing through $u_{M'}$ and $u_{m'}$ in $\Gamma_{G,i,j}$, respectively. Since $y(u_i)<y(u_{m'})<y(u_{M'})$, it follows that $G_{i,M'}$ crosses $h_{m'}$. Order the crossing points between $G_{i,M'}$ and $h_{m'}$ as they are encountered when walking along $G_{i,M'}$ from $u_i$ to $u_{M'}$ and let $p_{m'}$ be the last point in this order. Analogously, let $p_{M'}$ be the crossing point between $G_{m',j}$ and $h_{M'}$ that is encountered last when walking along $G_{m',j}$ from $u_j$ to $u_{m'}$. Thus the part of $\Gamma_{G,i,j}$ connecting $p_{m'}$ with $p_{M'}$ entirely lies in the strip $\mathcal H$ delimited by $h_{m'}$ and $h_{M'}$ (note that $\Gamma_{G,M',m'}$ does not cross $h_{m'}$ or $h_{M'}$, since $u_{m'}$ is the source with the smallest $y$-coordinate and $u_{M'}$ is the sink with the largest $y$-coordinate~in~$\Gamma_{G,M',m'}$). 
		Let $\mathcal C_m$ be the closed curve composed of the part of $\Gamma_{G,i,j}$ between $p_{m'}$ and $u_{m'}$ (this includes $u_{M'}$) and of the part of $h_{m'}$ between $p_{m'}$ and $u_{m'}$. Since $\mathcal P(u_{M'})=[u_{M'-1},u_{M'+1}]$, since the part of $\Gamma_{G,i,j}$ between $p_{m'}$ and $u_{m'}$ lies in $\mathcal H$, and by the planarity of $\Gamma_{G,i,j}$, the points $p_{m'}$, $u_{M'}$, and $u_{m'}$ appear in this clockwise order along $\mathcal C_m$. Further, since $\mathcal S(u_{m'})=[u_{m'+1},u_{m'-1}]$ and by the upward planarity of $\Gamma_{G,i,j}$, we have that $u_{m'+1}$ lies inside $\mathcal C_m$ in $\Gamma_{G,i,j}$; on the other hand, $u_j$ is outside $\mathcal C_m$ in $\Gamma_{G,i,j}$. By the Jordan curve's theorem, the path $G_{m'+1,j}$ crosses $\mathcal C_m$. Since the path $G_{m'+1,j}$ lies in the half-plane $y\geq y(u_{m'})$, it does not cross $h_{m'}$, hence it crosses the part of $\Gamma_{G,i,j}$ between $p_{m'}$ and $u_{m'}$, a contradiction to the planarity of $\Gamma_{G,i,j}$.

		
		The proof of the sufficiency is similar to the ones of~\cref{cl:paths-no-endv,cl:paths-one-endv}. Namely, we start from upward planar drawings $\Gamma_{G,i,M'}$, $\Gamma_{G,M',m'}$, and $\Gamma_{G,m',j}$ of $G_{i,M'}$, $G_{M',m'}$, and $G_{m',j}$ extending $\Gamma_{H,i,M'}$, $\Gamma_{H,M',m'}$, and $\Gamma_{H,m',j}$ in which the vertices with the smallest and largest $y$-coordinate are $u_{i}$ and $u_{M'}$, $u_{m'}$ and $u_{M'}$, and $u_{m'}$ and $u_j$, respectively, for some $M'\in \{i+1,\dots,j-2\}$ and $m'\in \{M'+1,\dots,j-1\}$. These drawings exist by Conditions~(1)--(3). 
		
		We then modify $\Gamma_{G,i,M'}$, $\Gamma_{G,M',m'}$, and $\Gamma_{G,m',j}$ so that the following properties hold:
		\begin{enumerate}[(i)]
			\item $u_{M'}$ is at the same point in $\Gamma_{G,i,M'}$ and $\Gamma_{G,M',m'}$;
			\item $u_{m'}$ is at the same point in $\Gamma_{G,M',m'}$ and $\Gamma_{G,m',j}$;
			\item $u_i$ has the smallest $y$-coordinate among all the vertices in $\Gamma_{G,i,M'}$, $\Gamma_{G,M',m'}$, and $\Gamma_{G,m',j}$; and
			\item $u_j$ has the largest $y$-coordinate among all the vertices in $\Gamma_{G,i,M'}$, $\Gamma_{G,M',m'}$, and $\Gamma_{G,m',j}$.
		\end{enumerate}  
		The satisfaction of property~(i) does not require any modifications to $\Gamma_{G,i,M'}$ and $\Gamma_{G,M',m'}$ if $u_{M'}\in V(H)$; on the other hand, if $u_{M'}\notin V(H)$, then $u_{M'}$ and parts of its incident edges in $\Gamma_{G,i,M'}$ and $\Gamma_{G,M',m'}$ are redrawn (in particular $u_{M'}$ is placed higher than it was in $\Gamma_{G,i,M'}$ and $\Gamma_{G,M',m'}$) so to let $u_{M'}$ be at the same point in $\Gamma_{G,i,M'}$ and $\Gamma_{G,M',m'}$. Property~(ii) is satisfied similarly. If $u_i\notin V(H)$, then Property~(iii) can be satisfied by redrawing $u_i$ (lower than every other vertex in $\Gamma_{G,i,M'}$, $\Gamma_{G,M',m'}$, and $\Gamma_{G,m',j}$) and part of its incident edge in $\Gamma_{G,i,M'}$. If $u_i\in V(H)$, then a real number $y^*$ larger than $y(u_i)$ and smaller than the $y$-coordinate of any vertex in $V(H)\cap \Gamma_{H,m',j}$ is chosen, which is possible by Condition~(4), and the part of $\Gamma_{G,m',j}$ in the half-plane $y\leq y^*$ is vertically scaled down, while keeping fixed the points on the line $y=y^*$. As long as the scale factor is sufficiently small, the entire drawing $\Gamma_{G,m',j}$ lies above the line $y=y(u_i)$ after the scaling. Property~(iv) is ensured analogously exploiting Condition~(5).

		
		Gluing together $\Gamma_{G,i,M'}$, $\Gamma_{G,M',m'}$, and $\Gamma_{G,m',j}$ results in an upward drawing $\Gamma'_{G,i,j}$ of $G_{i,j}$ that extends $\Gamma_{H,i,j}$ and in which $u_i$ and $u_j$ are the vertices with the smallest and largest $y$-coordinate, respectively. However, $\Gamma'_{G,i,j}$ might contain crossings, the left-to-right order of the edges incoming at $u_{M'}$ in $\Gamma'_{G,i,j}$ might not correspond to $\mathcal P(u_{M'})$, and the left-to-right order of the edges outgoing from $u_{m'}$ in $\Gamma'_{G,i,j}$ might not correspond to $\mathcal S(u_{m'})$. Hence, we redraw the curves representing the edges of $G_{i,M'}$, $G_{M',m'}$, and $G_{m',j}$, while leaving the position of every vertex unaltered. This is done by defining three internally-disjoint regions $R_{i,M'}$, $R_{M',m'}$, and $R_{m',j}$ containing in their interiors the points at which the vertices of $G_{i,M'}$, $G_{M',m'}$, and $G_{m',j}$ are placed in $\Gamma'_{G,i,j}$, respectively, except for $u_{M'}$ which is on the boundary of both $R_{i,M'}$ and $R_{M',m'}$, and for $u_{m'}$ which is on the boundary of both $R_{M',m'}$ and $R_{m',j}$. These regions are delimited by the horizontal lines $h_i$ and $h_j$ through $u_i$ and $u_j$, and by two $y$-monotone curves $\gamma_{M'}$ and $\gamma_{m'}$ passing through $u_{M'}$ and $u_{m'}$, respectively, and connecting points on $h_i$ and $h_j$. The curves $\gamma_{M'}$ and $\gamma_{m'}$ are defined so that the regions $R_{i,M'}$, $R_{M',m'}$, and $R_{m',j}$ appear in this left-to-right order inside the horizontal strip delimited by $h_i$ and $h_j$ in the case in which $\mathcal P(u_{M'})=[u_{M'-1},u_{M'+1}]$ and $\mathcal S(u_{m'})=[u_{m'-1},u_{m'+1}]$, or so that they appear in the opposite left-to-right order in the case in which $\mathcal P(u_{M'})=[u_{M'+1},u_{M'-1}]$ and $\mathcal S(u_{m'})=[u_{m'+1},u_{m'-1}]$. One of the two cases happens, because of Condition~(6). We now redraw the edges of $G_{i,M'}$, $G_{M',m'}$, and $G_{m',j}$ inside $R_{i,M'}$, $R_{M',m'}$, and $R_{m',j}$, respectively. The drawing $\Gamma_{G,i,j}$ obtained as the union of the constructed drawings of $G_{i,M'}$, $G_{M',m'}$, and $G_{m',j}$ is the desired drawing of $G_{i,j}$. This completes the proof of the claim. 
	\end{proof}
	
	\cref{cl:paths-no-endv,cl:paths-one-endv,cl:paths-two-endv} show how to compute the value of the entry $t(u_i,u_j,u_m,u_M)$, if $G_{i,j}$ is not a monotone path, based on the structure of the instance $\langle G_{i,j}, H_{i,j}, \Gamma_{H,i,j} \rangle$ and on the value of the entries $t(u_{i'},u_{j'},u_{m'},u_{M'})$ with $j'-i'<j-i$. Eventually the dynamic programming will compute the values of the entries $t(u_1,u_n,u_m,u_M)$, for all $1 \leq m \leq n$ and $1 \leq M \leq n$ with $m\neq M$; then the instance $\langle G, H, \Gamma_H \rangle$ of the {\sc UPE-FUE} problem is positive if and only if at least one of these entries \mbox{has value {\sc true}.} 
	
	We now show how to implement the described algorithm so that it runs in $O(n^4)$ time. Observe that there are $O(n^4)$ entries $t(u_i,u_j,u_m,u_M)$ whose value has to be computed. 
	
	We start by computing, for each entry $t(u_i,u_j,u_m,u_M)$, the smallest and largest $y$-coordinate of a vertex in $V(H)\cap \Gamma_{H,i,j}$, if any such a vertex exists. This can be done in $O(1)$ time per entry by induction on $j-i$. Indeed, if $j-i=1$, then at most two coordinates have to be compared in order to determine such two values. If $j-i>1$, then the two values either coincide with those computed for $t(u_{i+1},u_j,u_m,u_M)$, if $u_i\notin V(H)$, or are computed by comparing the two values computed for $t(u_{i+1},u_j,u_m,u_M)$ with $y(u_i)$, if $u_i\in V(H)$.  
	
	The computation of the {\sc true}-{\sc false} values for the entries $t(u_i,u_j,u_m,u_M)$ such that $G_{i,j}$ is a monotone path is done in $O(n^4)$ time as follows. First, for any pair $(i,j)$ of integers such that $1 \leq i<j \leq n$ we check whether $G_{i,j}$ is a monotone path. There are $O(n^2)$ such pairs of integers; further, for each pair $(i,j)$, the monotonicity of $G_{i,j}$ can be tested in $O(n)$ time by simply checking whether all the edges of $G_{i,j}$ are directed from $u_i$ to $u_j$, or whether they are all directed from $u_j$ to $u_i$. For each pair $(i,j)$ such that $G_{i,j}$ is monotone and its edges are directed from $u_i$ to $u_j$ (from $u_j$ to $u_i$), we give value {\sc true} to the entry $t(u_i,u_j,u_i,u_j)$ (resp.\ $t(u_i,u_j,u_j,u_i)$) and {\sc false} to all the other entries $t(u_i,u_j,u_m,u_M)$; there are $O(n^2)$ such entries, hence the quartic running time. 
	
	We now turn to the computation of the values of the entries $t(u_i,u_j,u_m,u_M)$ such that $G_{i,j}$ is not a monotone path. Consider a pair $(i,j)$ of integers such that $1 \leq i<j \leq n$, such that the values of the entries $t(u_i,u_j,u_m,u_M)$ have not been determined yet, and such that the values of the entries $t(u_{i'},u_{j'},u_{m'},u_{M'})$ have been determined for all the pairs $(i',j')$ of integers such that $j'-i'<j-i$. There are $O(n^2)$ such pairs $(i,j)$. 
	
	We perform the following preliminary check. For each index $m\in \{i+1,\dots,j-1\}$, we check whether any of the entries $t(u_{i},u_{m},u_{m},u_{M'})$ has value {\sc true}, over all the indices $M'\in \{i,\dots,m-1\}$. There are $O(n)$ indices $m$ to be considered and for each of them we check the value of $O(n)$ entries, hence this takes $O(n^2)$ time for the pair $(i,j)$, and hence $O(n^4)$ time over all the pairs $(i,j)$. Analogously, we check for each index $M\in \{i+1,\dots,j-1\}$ whether any of the entries $t(u_{i},u_{M},u_{m'},u_{M})$ has value {\sc true}, for each index $m\in \{i+1,\dots,j-1\}$ whether any of the entries $t(u_{m},u_{j},u_{m},u_{M'})$ has value {\sc true}, and for each index $M\in \{i+1,\dots,j-1\}$ whether any of the entries $t(u_{M},u_{j},u_{m'},u_{M})$ has value {\sc true}. This information can be stored in a separate table, whose entries are of the form $t(u_i,u_m,u_m,\cdot)$, $t(u_i,u_M,\cdot,u_M)$, $t(u_{m},u_j,u_m,\cdot)$, and $t(u_M,u_j,\cdot,u_M)$.   
	
	By means of \cref{cl:paths-no-endv}, we now determine the values of the entries $t(u_i,u_j,u_m,u_M)$ such that neither $u_m$ nor $u_M$ is an end-vertex of $G_{i,j}$. For each entry $t(u_i,u_j,u_m,u_M)$ such that $i<m<M<j$ (the entries such that $i<M<m<j$ are dealt with analogously) we test in $O(1)$ time whether Conditions~1--3 are satisfied by checking whether $t(u_i,u_m,u_m,\cdot)=$ {\sc true}, whether $t(u_m,u_M,u_m,u_M)=$ {\sc true}, and whether $t(u_M,u_j,\cdot,u_M)=$ {\sc true}. Condition~4 can be tested in $O(1)$ time by checking whether $u_m \in V(H)$ and, in case it does, whether $y(u_m)$ is the smallest $y$-coordinate among the vertices in $V(H)\cap \Gamma_{H,i,j}$ -- this information was computed at the beginning of the algorithm. Condition~5 can be tested in $O(1)$ time analogously.
	
	Next, by means of~\cref{cl:paths-one-endv}, we determine the values of the entries $t(u_i,u_j,u_m,u_M)$ such that one of $u_m$ and $u_M$ is an end-vertex of $G_{i,j}$. For each entry $t(u_i,u_j,u_i,u_M)$ (the entries $t(u_i,u_j,u_j,u_M)$, $t(u_i,u_j,u_m,u_i)$, and $t(u_i,u_j,u_m,u_j)$ are dealt with analogously) we test in $O(1)$ time whether Conditions~1--2 are satisfied by checking whether $t(u_i,u_M,u_i,u_M)=$ {\sc true} and whether $t(u_M,u_j,\cdot,u_M)=$ {\sc true}. Condition~3 can be tested in $O(1)$ time by checking whether $u_m \in V(H)$ and, in case it does, whether $y(u_m)$ is the smallest $y$-coordinate among the vertices in $V(H)\cap \Gamma_{H,i,j}$. 
	
	Finally, by means of~\cref{cl:paths-two-endv}, we determine the values of the entries $t(u_i,u_j,u_m,u_M)$ such that both $u_m$ and $u_M$ are end-vertices of $G_{i,j}$. For each entry $t(u_i,u_j,u_i,u_j)$ (the entries $t(u_i,u_j,u_j,u_i)$ are dealt with analogously) we test in $O(n^2)$ time whether Conditions~1--6 are satisfied as follows. First, we test in $O(1)$ time whether Condition~4 is satisfied by checking whether $u_m \in V(H)$ and, in case it does, whether $y(u_m)$ is the smallest $y$-coordinate among the vertices in $V(H)\cap \Gamma_{H,i,j}$. Condition~5 can be tested in $O(1)$ time analogously. In order to test Conditions~1--3 and Condition~6, we consider all the pairs of indices $(M',m')$ with $M'\in \{i+1,\dots,j-2\}$ and $m'\in \{M'+1,\dots,j-1\}$. For each such pair $(M',m')$ we check in $O(1)$ time whether $t(u_i,u_{M'},u_i,u_{M'})=$ {\sc true}, whether $t(u_{M'},u_{m'},u_{m'},u_{M'})=$ {\sc true}, whether $t(u_{m'},u_j,u_{m'},u_j)=$ {\sc true}, and whether $\mathcal P(u_{M'})=[u_{M'-1},u_{M'+1}]$ and $\mathcal S(u_{m'})=[u_{m'-1},u_{m'+1}]$, or $\mathcal P(u_{M'})=[u_{M'+1},u_{M'-1}]$ and $\mathcal S(u_{m'})=[u_{m'+1},u_{m'-1}]$. Note that there are $O(n^2)$ entries $t(u_i,u_j,u_i,u_j)$; for each of them we consider $O(n^2)$ pairs of indices $(M',m')$, and then we check the above conditions in $O(1)$ time. Thus the total running time is $O(n^4)$. 
\end{proof}

By exploiting arguments analogous to those in the proof of~\cref{th:algo-UPE-FUE-path} we can extend our quartic-time algorithm to cycles.

\begin{restatable}{theorem}{TheoremAlgoUPEFUEcycle}\label{th:algo-UPE-FUE-cycle}
	The {\sc UPE-FUE} problem can be solved in $O(n^4)$ time for instances $\langle G, H, \Gamma_H \rangle$ such that $G$ is an $n$-vertex cycle with given upward embedding, $H$ contains no edges, and no two vertices share the same $y$-coordinate in $\Gamma_H$.
\end{restatable}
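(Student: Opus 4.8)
The plan is to reduce the cycle case to the path case by guessing the two vertices that attain the extreme $y$-coordinates. Write $G=(u_1,u_2,\dots,u_n,u_1)$. In any upward planar drawing $\Gamma_G$ of $G$ (after perturbing the vertices not in $H$ so that all $y$-coordinates become distinct) there is a unique vertex $u_B$ with the smallest $y$-coordinate and a unique vertex $u_T$ with the largest $y$-coordinate; necessarily $u_B$ is a source and $u_T$ is a sink of $G$, since both edges incident to the bottom (top) vertex must leave (enter) it going upward. Removing the identification of the cycle at $u_B$ and $u_T$ splits $G$ into two directed paths, the two \emph{arcs} $A$ and $A'$, each running from $u_B$ to $u_T$. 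I would therefore iterate over all $O(n^2)$ ordered pairs $(u_B,u_T)$ of distinct vertices and, for each, test whether the two arcs they determine can be drawn and glued into an upward planar drawing of $G$ in which $u_B$ and $u_T$ are the global extremes.

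First I would precompute, exactly as in the proof of \cref{th:algo-UPE-FUE-path}, the dynamic-programming table $t(\cdot,\cdot,u_m,u_M)$; the only change is that the first two arguments now range over all \emph{arcs} of $G$ (contiguous subpaths, each determined by its pair of endpoints and its sense of travel around the cycle) rather than over index intervals. Since there are still $O(n^2)$ arcs, the table has $O(n^4)$ entries, and the recurrences of \cref{cl:paths-no-endv,cl:paths-one-endv,cl:paths-two-endv} remain valid verbatim, because they only split an arc at internal vertices; the analysis of the path proof then yields the same $O(n^4)$ bound for filling the table. For a fixed pair $(u_B,u_T)$ with arcs $A,A'$, the pair is a candidate exactly when $t(A,u_B,u_T)$ and $t(A',u_B,u_T)$ are both \textsc{true}: this guarantees that each arc admits an upward planar drawing, respecting the upward embedding of $G$ restricted to that arc, in which $u_B$ and $u_T$ are the extreme vertices. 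Because the $H$-vertices of $G$ are the union of the $H$-vertices of the two arcs, $u_B$ (resp.\ $u_T$) being the $H$-minimum (resp.\ $H$-maximum) of each arc makes it the global $H$-minimum (resp.\ maximum), so no further coordinate condition is needed; non-source choices of $u_B$ are automatically filtered out, as then no arc can have $u_B$ as its minimum.

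The remaining ingredient is a consistency check on the upward embedding at $u_B$ and $u_T$, the only two vertices shared by the arcs. Let $e,f$ be the cycle-edges leaving $u_B$ into $A$ and $A'$, and let $g,h$ be the cycle-edges entering $u_T$ from $A$ and $A'$. Since the cycle bounds a disk whose boundary is traversed by going up one arc and down the other, a drawing with $A$ on the left is possible only if $\mathcal S(u_B)=[e,f]$ and $\mathcal P(u_T)=[g,h]$, and one with $A$ on the right only if $\mathcal S(u_B)=[f,e]$ and $\mathcal P(u_T)=[h,g]$; I would reject $(u_B,u_T)$ whenever the prescribed embedding agrees with neither (a ``twisted'' configuration). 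When the check passes, the two arc drawings are glued by the region technique of the sufficiency part of \cref{cl:paths-two-endv}: place $u_B$ and $u_T$ on a common vertical through-curve and redraw $A$ and $A'$ inside a left and a right region (as dictated by the embedding) bounded by two $y$-monotone curves through $u_B$ and $u_T$, so that the edges of the two arcs do not cross and the prescribed orders $\mathcal S(u_B)$ and $\mathcal P(u_T)$ are realized. The instance is positive if and only if some pair $(u_B,u_T)$ passes all checks; the extra work is $O(1)$ per pair, hence $O(n^2)$ overall, and the total running time stays $O(n^4)$.

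I expect the main obstacle to be the sufficiency of the gluing: proving that whenever both arcs are \textsc{true} in the table and the embedding passes the consistency check, their drawings can always be combined into a single upward planar drawing of the whole cycle that respects the prescribed embedding. This is the analogue of the closing-up step in \cref{cl:paths-two-endv}, but now the two paths are joined \emph{in parallel} at both of their endpoints to form a closed cycle rather than concatenated into a longer path; the argument must verify that the left and right regions can be placed side by side with matching boundaries through $u_B$ and $u_T$ without forcing any crossing, which is precisely where the embedding-consistency check is used.
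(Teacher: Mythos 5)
Your proposal takes essentially the same route as the paper's proof: you guess the pair of extreme vertices, split the cycle into the two directed paths they determine, reuse the path dynamic programming (your table indexed over arcs is the paper's table over the doubled $2n$-vertex path $(u_1,\dots,u_n,u_{n+1}=u_1,\dots,u_{2n}=u_n)$ with $j-i\le n$), and accept exactly when both path entries are \textsc{true} and the upward embedding at the two extremes is untwisted, which is precisely Condition~(3) of the paper's \cref{cl:cycles-two-paths}. Your embedding-consistency check and the gluing via two internally-disjoint left/right regions separated by a $y$-monotone curve through the extremes coincide with the paper's necessity and sufficiency arguments, and your $O(n^4)+O(n^2)$ accounting matches the paper's.
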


\begin{proof}
	Let $G=(u_1,\dots,u_n)$. Suppose that an upward planar drawing $\Gamma_G$ of $G$ extending $\Gamma_H$ exists. Let $y_M$ be the largest $y$-coordinate of a vertex in $\Gamma_G$. Then it can be assumed without loss of generality that there is a unique vertex $u_M$ in $\Gamma_G$ such that $y(u_M)=y_M$. Indeed, if more than one vertex has $y$-coordinate equal to $y_M$ in $\Gamma_G$, then at least one vertex $v$ exists such that $y(v)=y_M$ and $v\notin V(H)$. Hence $v$ can be moved upwards and its incident edges can be extended upwards as well, so that $v$ becomes the unique vertex with the largest $y$-coordinate in $\Gamma_G$. It can be analogously assumed that there is a vertex $u_m$ in $\Gamma_G$ whose $y$-coordinate is smaller than the one of every other vertex.
	
	Our strategy is to test, for every possible pair of vertices $(u_m,u_M)$ with $m,M\in \{1,\dots,n\}$ and with $m\neq M$, whether there is an upward planar drawing $\Gamma_G$ of $G$ extending $\Gamma_H$ in which the vertices with the smallest and largest $y$-coordinate are $u_m$ and $u_M$, respectively. For any pair $(u_m,u_M)$, the cycle $G$ consists of two directed paths connecting $u_m$ and $u_M$, call them $G_{m,M}=(u_m,u_{m+1},\dots,u_M)$ and $G_{M,m}=(u_M,u_{M+1},\dots,u_m)$, where indices are modulo $n$. Let $\Gamma_{H,m,M}$ and $\Gamma_{H,M,m}$ be the restrictions of $\Gamma_H$ to the vertices that belong to $G_{m,M}$ and $G_{M,m}$, respectively. The following claim is the key ingredient for the proof of the theorem.
	
	\begin{myclaim} \label{cl:cycles-two-paths}
		$G$ has an upward planar drawing extending $\Gamma_H$ in which the vertices with the smallest and largest $y$-coordinate are $u_m$ and $u_M$, respectively, if and only if:
		\begin{enumerate}[(1)]
			\item $G_{m,M}$ has an upward planar drawing extending $\Gamma_{H,m,M}$ in which the vertices with the smallest and largest $y$-coordinate are $u_m$ and $u_M$, respectively;
			\item $G_{M,m}$ has an upward planar drawing extending $\Gamma_{H,M,m}$ in which the vertices with the smallest and largest $y$-coordinate are $u_m$ and $u_M$, respectively; and
			\item either $\mathcal P(u_{M})=[u_{M-1},u_{M+1}]$ and $\mathcal S(u_{m})=[u_{m+1},u_{m-1}]$, or $\mathcal P(u_{M})=[u_{M+1},u_{M-1}]$ and $\mathcal S(u_{m})=[u_{m-1},u_{m+1}]$.
		\end{enumerate}
	\end{myclaim}  
	
	\begin{proof}
		Suppose that $G$ has an upward planar drawing $\Gamma_G$ extending $\Gamma_H$ in which the vertices with the smallest and largest $y$-coordinate are $u_m$ and $u_M$, respectively. Restricting $\Gamma_G$ to the vertices and edges of $G_{m,M}$ yields an upward planar drawing $\Gamma_{G,m,M}$ of $G_{m,M}$ that extends $\Gamma_{H,m,M}$ and in which $u_m$ and $u_{M}$ are the vertices with the smallest and largest $y$-coordinate, respectively. This proves the necessity of Condition~(1). The necessity of Condition~(2) is proved analogously. In order to prove the necessity of Condition (3) suppose, for a contradiction, that $\mathcal P(u_{M})=[u_{M-1},u_{M+1}]$ and $\mathcal S(u_{m})=[u_{m-1},u_{m+1}]$ (the case in which $\mathcal P(u_{M})=[u_{M+1},u_{M-1}]$ and $\mathcal S(u_{m})=[u_{m+1},u_{m-1}]$ can be treated analogously). Let $h_{M}$ and $h_{m}$ be the horizontal lines passing through $u_{M}$ and $u_{m}$ in $\Gamma_{G}$, respectively. The path $G_{m,M}$ divides the strip delimited by $h_{M}$ and $h_{m}$ into two regions $R_l$ and $R_r$, to the left and to the right of $G_{m,M}$, respectively. Since $\mathcal P(u_{M})=[u_{M-1},u_{M+1}]$ and $\mathcal S(u_{m})=[u_{m-1},u_{m+1}]$, and by the upward planarity of $\Gamma_{G}$, it follows that $u_{m-1}$ and $u_{M+1}$ are in $R_l$ and $R_r$, respectively. Hence, the path $G_{M+1,m-1}$ crosses the boundary of such regions. Since $u_m$ and $u_M$ have the smallest and largest $y$-coordinate among the vertices in $\Gamma_{G}$, and by the upwardness of $\Gamma_{G}$, it follows that $G_{M+1,m-1}$ does not cross $h_{M}$ or $h_{m}$, hence it crosses $G_{m,M}$, a contradiction to the planarity of $\Gamma_{G}$.

		
		The proof of the sufficiency is similar to the ones of \cref{cl:paths-no-endv,cl:paths-one-endv,cl:paths-two-endv}. Namely, we start from upward planar drawings $\Gamma_{G,m,M}$ and $\Gamma_{G,M,m}$ of $G_{m,M}$ and $G_{M,m}$ extending $\Gamma_{H,m,M}$ and $\Gamma_{H,M,m}$, respectively, in which the vertex with the smallest $y$-coordinate is $u_{m}$ and the vertex with the largest $y$-coordinate is $u_{M}$. These drawings exist by Conditions~(1) and~(2).  
		
		We then modify $\Gamma_{G,m,M}$ and $\Gamma_{G,M,m}$ so that the following properties hold:
		\begin{enumerate}[(i)]
			\item $u_{M}$ is at the same point in $\Gamma_{G,m,M}$ and $\Gamma_{G,M,m}$; and
			\item $u_{m}$ is at the same point in $\Gamma_{G,m,M}$ and $\Gamma_{G,M,m}$.
		\end{enumerate} 
		If $u_{M}\in V(H)$, then Property~(i) is already satisfied by $\Gamma_{G,m,M}$ and $\Gamma_{G,M,m}$; on the other hand, if $u_{M}\notin V(H)$, then we redraw $u_{M}$ and parts of its incident edges so to let $u_{M}$ be at the same point in $\Gamma_{G,m,M}$ and $\Gamma_{G,M,m}$. Property~(ii) is satisfied analogously.
		
		Gluing together $\Gamma_{G,m,M}$ and $\Gamma_{G,M,m}$ results in an upward drawing $\Gamma'_{G}$ of $G$ that extends $\Gamma_H$ and in which $u_m$ and $u_M$ are the vertices with the smallest and largest $y$-coordinate, respectively. However, $\Gamma'_{G}$ might contain crossings. Hence, we redraw the curves representing the edges of $G_{m,M}$ and $G_{M,m}$, while leaving the position of every vertex unaltered. This is done by defining two internally-disjoint regions $R_{m,M}$ and $R_{M,m}$ containing in their interiors the points at which the vertices of $G_{m,M}$ and $G_{M,m}$ are placed in $\Gamma'_{G}$, respectively, except for $u_m$ and $u_{M}$ which are on the boundary of both $R_{m,M}$ and $R_{M,m}$. These regions are delimited by the horizontal lines $h_m$ and $h_M$ through $u_m$ and $u_M$, and by a $y$-monotone curve $\gamma$ connecting $u_{m}$ with $u_{M}$. The curve $\gamma$ is defined so that $R_{m,M}$ is to the left of $R_{M,m}$ in the case in which $\mathcal P(u_{M})=[u_{M-1},u_{M+1}]$ and $\mathcal S(u_{m})=[u_{m+1},u_{m-1}]$, or so that $R_{m,M}$ is to the right of $R_{M,m}$ in the case in which $\mathcal P(u_{M})=[u_{M+1},u_{M-1}]$ and $\mathcal S(u_{m})=[u_{m-1},u_{m+1}]$. One of the two cases happens, because of Condition~(3). We now redraw the edges of $G_{m,M}$ and $G_{M,m}$ inside $R_{m,M}$ and $R_{M,m}$, respectively. The drawing $\Gamma_G$ obtained as the union of the constructed drawings of $G_{m,M}$ and $G_{M,m}$ is the desired drawing of $G$. This completes the proof of the claim. 
	\end{proof}
	
	From a computational point of view, we act as follows. 
	
	First we compute, for every possible pair of vertices $(u_m,u_M)$ with $m,M\in \{1,\dots,n\}$ and with $m\neq M$, whether there are upward planar drawings of $G_{m,M}$ and $G_{M,m}$ extending $\Gamma_{H,m,M}$ and $\Gamma_{H,M,m}$, respectively, in which the vertex with the smallest $y$-coordinate is $u_m$ and the vertex with the largest $y$-coordinate is $u_M$. This can be done by considering the $2n$-vertex path $(u_1,u_2,\dots,u_n,u_{n+1}=u_1,u_{n+2}=u_2,\dots,u_{2n}=u_n)$ and by setting up a dynamic programming table with entries $t(u_i,u_j,u_{m'},u_{M'})$, for all the indices $i,j,m',M'\in \{1,\dots,2n\}$ such that $i \leq m' \leq j$ and $i \leq M' \leq j$, with $i\neq j$, $m'\neq M'$, and $j-i\leq n$. The values of the entries of this table can be computed in total $O(n^4)$ time as in the proof of~\cref{th:algo-UPE-FUE-path}. 
	
	Now, for each of the $O(n^2)$ pairs of vertices $(u_m,u_M)$ with $m,M\in \{1,\dots,n\}$ and with $m\neq M$, we exploit~\cref{cl:cycles-two-paths} in order to check whether $G$ has an upward planar drawing extending $\Gamma_H$ in which the vertices with the smallest and largest $y$-coordinate are $u_m$ and $u_M$, respectively. Concerning Conditions~(1) and~(2), we check in $O(1)$ time whether $t(u_m,u_M,u_m,u_M)=$ {\sc true} and $t(u_M,u_{n+m},u_{n+m},u_M)=$ {\sc true} (if $m<M$) or whether $t(u_M,u_m,u_m,u_M)=$ {\sc true} and $t(u_m,u_{n+M},u_m,u_{n+M})=$ {\sc true} (if $m>M$). Condition~(3) can also be trivially checked in $O(1)$ time. This concludes the proof of the theorem.
\end{proof}

It turns out that directed paths and cycles are much easier to handle in the case in which they do not come with a given upward embedding, as in the next theorem.

\begin{restatable}{theorem}{TheoremAlgoUPEpaths}\label{th:algo-UPE-paths}
	The {\sc UPE} problem can be solved in $O(n)$ time for instances $\langle G, H, \Gamma_H \rangle$ such that $G$ is an $n$-vertex directed path or cycle, $H$ contains no edges, and no two vertices share the same $y$-coordinate in $\Gamma_H$.
\end{restatable}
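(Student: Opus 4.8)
The plan is to reduce the decision to a single local test: the instance is positive if and only if every edge points upward, that is, for each edge $(u,v)\in E(G)$ we have $y(u)<y(v)$ in $\Gamma_H$. This condition can be verified in $O(n)$ time by inspecting the two endpoints of each edge, with no need to sort the vertices by $y$-coordinate; avoiding the sorting step is precisely what would separate this bound from the $O(n\log n)$ bound of \cref{th:UPE-algorithm}, whose only super-linear step is the construction of the sorted level structure.

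First I would argue necessity, which is immediate: since any extension must keep each vertex at its position in $\Gamma_H$ and must draw each edge $(u,v)$ as a $y$-monotone curve increasing from $u$ to $v$, an edge with $y(u)>y(v)$ can never be realized (recall that all $y$-coordinates are distinct, so the two values cannot be equal either). For sufficiency I would exhibit an upward planar extension. The key structural fact I would invoke is that, since $H=(V(G),\emptyset)$ and no two vertices share a $y$-coordinate, the instance is equivalent to a level-planarity question in which every level contains exactly one vertex: by reduction (ii) of \cref{th:equivalence} the instance is positive exactly when the leveled graph read off from $\Gamma_H$ is level planar, and because each level is a singleton the prescribed $x$-coordinates impose no ordering constraint. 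Thus it would suffice to show that an upward-consistent directed path, respectively cycle, is always level planar.

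For a directed path $G=(u_1,\dots,u_n)$ this is straightforward: placing $u_i$ at $x$-coordinate $i$ and at its prescribed level makes consecutive edges occupy interior-disjoint vertical strips, so the drawing is planar, and upward-consistency makes every edge $y$-monotone in the correct direction. For a cycle I would split it at its globally lowest vertex $s$ and globally highest vertex $t$ into two arcs $P$ and $Q$, each a path from $s$ to $t$ whose internal vertices lie strictly between the extreme levels. Drawing $P$ in the closed left half-plane and $Q$ in the closed right half-plane, with $s$ and $t$ on the separating vertical line, produces two level-planar curves meeting only at $s$ and $t$, whose union is a level-planar drawing of the whole cycle. Finally, invoking the singleton-level equivalence once more lets me transfer this level-planar drawing back to an upward planar drawing that keeps every vertex at its prescribed position in $\Gamma_H$, as is required for an extension.

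The main obstacle is the cycle sufficiency: one must be certain that no upward-consistent cyclic level assignment can force a crossing. I expect the delicate point to be justifying the half-plane placement of each arc, namely that a path whose endpoints are its unique lowest and highest vertices admits a level-planar drawing confined to a half-plane with both endpoints on its boundary, and then confirming that the resulting single closed curve is simple with every edge $y$-monotone. Once this nesting-at-the-extrema argument is in place, combined with the singleton-level equivalence that frees the $x$-coordinates, both the characterization and the linear-time bound follow; the algorithm itself is merely the per-edge comparison, and no further obstruction (such as acyclicity) needs to be tested separately, since upward-consistency of all edges already precludes a consistently oriented cycle.
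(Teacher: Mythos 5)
There is a genuine gap: you have silently strengthened the hypothesis to $V(H)=V(G)$. The theorem only assumes that $H$ contains no edges; it does \emph{not} assume that the partial drawing places all vertices of $G$, and the paper's proof explicitly handles vertices in $V(G)\setminus V(H)$ (they are ignored in the feasibility test and placed during the construction; in the cycle case an \emph{unplaced} source or sink is even preferred as the global minimum or maximum). Your proposed test --- check $y(u)<y(v)$ for every edge $(u,v)$ of $G$ --- is not even defined when an endpoint of an edge is unplaced, and no edge-local variant of it can be correct: take $G=(u_1\to u_2\to u_3)$ with $V(H)=\{u_1,u_3\}$ and $y(u_1)>y(u_3)$ in $\Gamma_H$; this instance is negative, yet no edge has both endpoints in $H$, so any per-edge check accepts it. The correct linear-time condition (\cref{cl:path-no-emb} in the paper) is run-wise rather than edge-wise: along each maximal monotone subpath, the vertices that belong to $H$ must appear in increasing $y$-order. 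For the same reason your appeal to reduction (ii) of \cref{th:equivalence} is unavailable in general: that reduction is stated for $H=(V(G),\emptyset)$, and with unplaced vertices there is no singleton-level instance to read off $\Gamma_H$. Closing the gap would require either first placing the missing vertices in $O(n)$ time (which itself presupposes the run-wise condition) or adopting the paper's direct characterization and its construction, which does real work precisely in threading the non-placed vertices onto $y$-monotone curves and in choosing the extreme source and sink.

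In the restricted case $V(H)=V(G)$ your route is sound and pleasantly different from the paper's: there your per-edge condition coincides with the run-wise one, the singleton-level equivalence does legitimately free the $x$-coordinates (so the algorithm is only the comparison, and the heavier machinery enters only the correctness proof, which is how you avoid sorting), and your splitting of the cycle at the global minimum and maximum into two arcs drawn in complementary half-planes is a clean counterpart of the paper's gluing of the two arcs along a $y$-monotone curve through the extreme source and sink. Your flagged ``nesting at the extrema'' claim is in fact correct: since all levels are distinct, in the horizontal strip between the two lowest (resp.\ two highest) levels of an arc only the single edge incident to its lowest (resp.\ highest) vertex is present, so both extreme endpoints can be slid onto the boundary line of the half-plane without creating crossings. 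But as written, the proposal proves a strictly weaker statement than \cref{th:algo-UPE-paths}.
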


\begin{proof}
	Suppose first that $G$ is a directed path. We partition $G$ into $k$ monotone paths $G_i=(u^i_1,u^i_2,\dots,u^i_{h(i)})$, for some integer $k\geq 1$. Assume that the edge $(u^1_1,u^1_2)$ is exiting $u^1_1$ and entering $u^1_2$, the other case being symmetric. Then for every odd $i$ we have that $u^i_{h(i)}=u^{i+1}_{h(i+1)}$ is a sink, while for every even $i$ we have that $u^i_1=u^{i+1}_1$ is a source. Our algorithm is based on the following characterization.
	
	\begin{myclaim} \label{cl:path-no-emb}
		There exists an upward planar drawing of $G$ extending $\Gamma_H$ if and only if, for every $i$, $j$, and $j'$ with $j<j'$ such that $u^i_j,u^i_{j'}\in V(H)$, it holds true that $y(u^i_j)<y(u^i_{j'})$ in $\Gamma_H$. 
	\end{myclaim}
	
	\begin{proof}
		The necessity is trivial. Indeed, if there are indices $i$, $j$, and $j'$ with $j<j'$ such that $u^i_j,u^i_{j'}\in V(H)$ and such that $y(u^i_j)>y(u^i_{j'})$, then the monotone path $(u^i_1,u^i_2,\dots,u^i_{h(i)})$ cannot be upward in any drawing of $G$ extending $\Gamma_H$. 
		
		For the sufficiency, we construct an upward planar drawing $\Gamma_G$ of $G$ extending $\Gamma_H$ by drawing one monotone path $G_i=(u^i_1,u^i_2,\dots,u^i_{h(i)})$ at a time. Roughly speaking, this is done by drawing $G_i$ ``to the right'' of what has been drawn so far. More precisely, for any $i=1,\dots,k$, consider the path $G_{1,\dots,i}=(G_1 \cup \dots \cup G_i)$ and let the end-vertex of $G_{1,\dots,i}$ different from $u^1_1$ be the {\em last} vertex of $G_{1,\dots,i}$. Denote by $\Gamma_{H,i}$ the restriction of $\Gamma_H$ to the vertices in $G_{1,\dots,i}$. We show how to construct an upward planar drawing $\Gamma_i$ of $G_{1,\dots,i}$ that extends $\Gamma_{H,i}$ and such that the last vertex of $G_{1,\dots,i}$ is visible from the right (meaning that a half-line starting at the last vertex and directed rightwards does not intersect $\Gamma_i$ other than at its starting point). Assume that $i$ is odd, the case in which $i$ is even is similar. Then $u^{i-1}_1=u^i_1$ is a source and is visible from the right in $\Gamma_{i-1}$ (the latter condition is vacuously true if $i=1$). We proceed similarly to the proofs of~\cref{cl:paths-no-endv,cl:paths-one-endv,cl:paths-two-endv}. Namely,  we consider a $y$-monotone curve $\gamma_i$ passing through $u^i_1$, having the vertices in $V(G_{1,\dots,i-1})$ -- as they are placed in $\Gamma_{i-1}$ -- to the left (except for $u^i_1$) and the vertices in $V(H)\cap (V(G_i)\cup \dots \cup V(G_k))$ to the right (except for $u^i_1$). We redraw parts of the edges of $G_{1,\dots,i-1}$ so that $\Gamma_{i-1}$ entirely lies to the left of $\gamma_i$, except at $u^i_1$; note that this is possible because $u^i_1$ is visible from the right. If $u^i_1\notin V(H)$ we further modify $\Gamma_{i-1}$ by moving $u^{i}_1$ downwards along $\gamma_i$ and by extending the edge $(u^{i-1}_1,u^{i-1}_2)$ downwards as well (while keeping it to the left of $\gamma_i$), so that $y(u^i_1)$ is smaller than the $y$-coordinate of every vertex different from $u^i_1$ in $V(H)\cap V(G_i)$. If $u^i_1\in V(H)$, then it is the condition of the claim that guarantees that $y(u^i_1)$ is smaller than the $y$-coordinate of every vertex in $V(H)\cap V(G_i)$. We next construct an upward planar drawing $\mathcal G_i$ of $G_i$, possibly intersecting $\Gamma_{i-1}$, extending the restriction of $\Gamma_H$ to the vertices in $V(H)\cap V(G_i)$. This is done by drawing $G_i$ as a $y$-monotone curve passing through the vertices in $V(H)\cap V(G_i)$ and by then placing the vertices in $V(G_i)$ not in $V(H)$ at suitable points on this curve. Finally, we redraw parts of the edges of $G_i$ so that $\mathcal G_i$ entirely lies to the right of $\gamma_i$, except at $u^i_1$. Now there is no crossing between the edges in $\Gamma_{i-1}$ and those in $\mathcal G_i$; further, $u^{i}_{h(i)}$ is visible from the right. Hence, when $i=k$ this algorithm constructs an upward planar drawing $\Gamma_G$ of $G$ extending $\Gamma_H$.
	\end{proof}

	
	


	The condition in \cref{cl:path-no-emb} can be easily checked in $O(n)$ time. Indeed, each path $(u^i_1,u^i_2,\dots,u^i_{h(i)})$ can be independently traversed from $u^i_1$ to $u^i_{h(i)}$ while keeping track of the $y$-coordinate of the last encountered vertex in $V(H)$: If the $y$-coordinate of any vertex in $V(H)$ is smaller than the $y$-coordinate of the previous vertex in $V(H)$, then the condition is not satisfied; vertices not in $V(H)$ are ignored. 
	
	We now turn our attention to the case in which $G$ is a cycle. We again partition $G$ into $k$ monotone paths $G_i=(u^i_1,u^i_2,\dots,u^i_{h(i)})$, for some integer $k\geq 2$. Then the characterization stated in~\cref{cl:path-no-emb} applies to this case as well. Whether $\langle G, H, \Gamma_H \rangle$ satisfies the characterization can be checked in $O(n)$ time, as for directed paths. The necessity of the characterization can be proved as in~\cref{cl:path-no-emb}. The sufficiency can also be proved similarly to~\cref{cl:path-no-emb}, however with one more ingredient. Namely, we have to find a source $u^i_1$ and a sink $u^j_{h(j)}$ (possibly $i=j$) which are going to be the vertices with the smallest and largest $y$-coordinate in the upward planar drawing of $G$ extending $\Gamma_H$ which we are going to construct. Any source $u^i_1$ of $G$ not in $H$ can in fact be selected for the task; if all the sources of $G$ are in $H$, then the source $u^i_1$ with the smallest $y$-coordinate can be chosen instead. The sink $u^j_{h(j)}$ can be chosen analogously. We now split $G$ into two directed paths $G_1$ and $G_2$ between $u^i_1$ and $u^j_{h(j)}$. We independently construct upward planar drawings $\Gamma_{G_1}$ and $\Gamma_{G_2}$ of $G_1$ and $G_2$ extending $\Gamma_{H_1}$ and $\Gamma_{H_2}$, respectively, in which the vertex with the smallest $y$-coordinate is $u^i_1$ and the vertex with the largest $y$-coordinate is $u^j_{h(j)}$ -- $\Gamma_{H_1}$ and $\Gamma_{H_2}$ are the restrictions of $\Gamma_H$ to the vertices in $V(H)\cap V(G_1)$ and in $V(H)\cap V(G_2)$, respectively. The drawings $\Gamma_{G_1}$ and $\Gamma_{G_2}$ can be constructed as in the proof of~\cref{cl:path-no-emb}. We then modify $\Gamma_{G_1}$ and $\Gamma_{G_2}$ as in~\cref{cl:cycles-two-paths} so that:
	\begin{enumerate}[(i)]
		\item $u^i_1$ is at the same point in $\Gamma_{G_1}$ and $\Gamma_{G_2}$; and 
		\item $u^j_{h(j)}$ is at the same point in $\Gamma_{G_1}$ and $\Gamma_{G_2}$. 
	\end{enumerate}
	Gluing together $\Gamma_{G_1}$ and $\Gamma_{G_2}$ results in an upward drawing $\Gamma'_{G}$ of $G$ that extends $\Gamma_H$ and in which $u^i_1$ and $u^j_{h(j)}$ are the vertices with the smallest and largest $y$-coordinate, respectively. However, $\Gamma'_{G}$ might contain crossings. Hence, we redraw the curves representing the edges of $G_1$ and $G_2$, while leaving the position of every vertex unaltered. This is done again as in the proof of~\cref{cl:cycles-two-paths}. 
\end{proof}

\section{Conclusions and Open Problems}\label{se:conclusions}

In this paper we introduced and studied the {\sc Upward Planarity Extension} ({\sc UPE}) problem, which takes in input an upward planar drawing $\Gamma_H$ of a subgraph $H$ of a directed graph $G$ and asks whether an upward planar drawing of $G$ exists which coincides with $\Gamma_H$ when restricted to the \mbox{vertices and edges of $H$.}

We proved that the {\sc UPE} problem is NP-complete, even if $G$ has a prescribed upward embedding and $H$ contains all the vertices and no edges. Conversely, the problem can be solved efficiently for upward planar $st$-graphs.     

Several questions are left open by our research. We cite our favorite two. 

First, is it possible to solve the {\sc UPE-FUE} problem in polynomial time for instances $\langle G,H,\Gamma_H\rangle$ such that $H$ contains no edges and no two vertices have the same $y$-coordinate in $\Gamma_H$? We proved that if any of the two conditions is dropped, then the {\sc UPE-FUE} problem is NP-hard, however we can only provide a positive answer to the above question if we further assume that $G$ is a directed path or cycle. 

Second, are the {\sc UPE} and {\sc UPE-FUE} problems polynomial-time solvable for directed paths and cycles?  Even with the assumption that $H$ contains no edges and no two vertices have the same $y$-coordinate in $\Gamma_H$, answering the above question in the affirmative was not a trivial task.

\medskip
\noindent \paragraph{\textbf{Acknowledgments.}} \cref{le:characterization-upward-plane} comes from a research session the third author had with Ignaz Rutter, to which our thanks go.

\bibliographystyle{abbrv}
\bibliography{bibliography}

\end{document}